\documentclass[twocolumn, 10pt, final]{IEEEtran}

\usepackage{cite}

\ifCLASSINFOpdf
  \usepackage[pdftex]{graphicx}
  \DeclareGraphicsExtensions{.eps}
\else
\fi

\usepackage{stfloats}
\usepackage[dvipsnames]{xcolor}
\usepackage[
  linktocpage,
  pagebackref=false,
  hypertexnames=false
]{hyperref}

\usepackage{prettyref}      
\usepackage{placeins}
\usepackage[scr=boondoxo]{mathalfa}
\usepackage{mathtools}
\usepackage{bbm}
\usepackage{bm}
\usepackage{amsthm}
\usepackage{amsfonts}
\usepackage{amssymb}
\usepackage{amsbsy}
\usepackage{amsmath}
\usepackage{xfrac}
\usepackage{graphicx}
\usepackage{epstopdf}
\usepackage{tabularx,colortbl}
\usepackage{breqn}
\usepackage{framed}
\usepackage{algorithm}
\usepackage{algpseudocode}
\usepackage{siunitx}

\newrefformat{fig}{Fig. \ref{#1}}
\newrefformat{tab}{Table \ref{#1}}
\newrefformat{eq}{{\normalfont (\ref{#1})}}
\newrefformat{thm}{Theorem \ref{#1}}
\newrefformat{cor}{Corollary \ref{#1}}
\newrefformat{def}{Definition \ref{#1}}
\newrefformat{lem}{Lemma \ref{#1}}
\newrefformat{prop}{Proposition \ref{#1}}
\newrefformat{assum}{Assumption \ref{#1}}
\newrefformat{cond}{\conditionname~\ref{#1}}

\providecommand{\conditionname}{Condition}
\providecommand{\corollaryname}{Corollary}
\providecommand{\definitionname}{Definition}
\providecommand{\asuumptionname}{Assumption}
\providecommand{\lemmaname}{Lemma}
\providecommand{\propositionname}{Proposition}
\providecommand{\theoremname}{Theorem}

\providecommand{\conditionname}{Condition}


\makeatletter 
\let\oldforeign@language\foreign@language
\DeclareRobustCommand{\foreign@language}[1]{%
   \lowercase{\oldforeign@language{#1}}}
\theoremstyle{plain}
\newtheorem{thm}{\protect\theoremname}
\theoremstyle{definition}

\theoremstyle{definition}
\newtheorem{assum}[thm]{\protect\asuumptionname}
\theoremstyle{plain}
\newtheorem{cor}[thm]{\protect\corollaryname}
\theoremstyle{plain}
\newtheorem{prop}[thm]{\protect\propositionname}
\theoremstyle{plain}
\newtheorem{lem}[thm]{\protect\lemmaname}

\newcounter{assum}[section]
\renewenvironment{assum}[1][]{\refstepcounter{assum}\par \medskip
   \noindent \textbf{Assumption~\theassum. #1} }{\par \medskip}

\theoremstyle{remark}
\newtheorem{remark}{\bf Remark}
\newtheorem{cond}{\bf Condition}
\newtheorem{definition}{\bf Definition}

\allowdisplaybreaks  

\hyphenation{op-tical net-works semi-conduc-tor}
\newcounter{MYtempeqncnt}

\newenvironment{floateq}[5][h!]
  {\begin{figure*}[#1]
  \normalsize
  \setcounter{MYtempeqncnt}{\value{equation}}
  \setcounter{equation}{#2}
  \ifcase #3 \hrulefill \else \fi
  \begin{align}
  #5
  \end{align}
  \setcounter{equation}{\value{MYtempeqncnt}}
  \ifcase #4 \hrulefill \else \fi
  \vspace*{0pt}
  \end{figure*}
  }

\def\cleartheorem#1{%
    \expandafter\let\csname#1\endcsname\relax
    \expandafter\let\csname c@#1\endcsname\relax
}

\begin{document}
%
\title{Non-Asymptotic Guarantees for {Reliable}\\Identification of Granger Causality via the LASSO}
%
%
%

\author{Proloy~Das,~\IEEEmembership{Student Member,~IEEE,}
        and~Behtash~Babadi,~\IEEEmembership{Member,~IEEE}
\thanks{P. Das is with the Department of Anesthesia, Critical Care and Pain Medicine, Massachusetts General Hospital, Boston,
MA, 02114 USA (e-mail: pdas6@mgh.harvard.edu).}
\thanks{B. Babadi is with the Department
of Electrical and Computer Engineering and the Institute for Systems Research, University of Maryland, College Park,
MD, 20742 USA (e-mail: behtash@umd.edu).}
\thanks{This work was supported in part by the National Science Foundation Awards No. CCF1552946 and ECCS1807216 and the National Institutes of Health Award No. 1U19NS107464 (Corresponding author: Proloy Das).}}

\maketitle

\begin{abstract}
Granger causality is among the widely used data-driven approaches for causal analysis of time series data with applications in various areas including economics, molecular biology, and neuroscience. Two of the main challenges of this methodology are: 1) over-fitting as a result of limited data duration, and 2) correlated process noise as a confounding factor, both leading to errors in identifying the causal influences. Sparse estimation via the LASSO has successfully addressed these challenges for parameter estimation. However, the classical statistical tests for Granger causality resort to asymptotic analysis of ordinary least squares, which require long data {duration} to be useful and are not immune to confounding effects. In this work, we {address this disconnect} by introducing a LASSO-based statistic and studying its non-asymptotic properties under the assumption that the true models admit sparse autoregressive representations. We establish {fundamental limits for reliable} identification of Granger causal influences using the proposed LASSO-based statistic. We further characterize the false positive error probability and test power of a simple thresholding rule for identifying Granger causal effects and provide two methods to set the threshold in a data-driven fashion. We present simulation studies and application to real data to compare the performance of our proposed method to ordinary least squares and {existing LASSO-based methods} in detecting Granger causal influences, which corroborate our theoretical results.
\end{abstract}

\begin{IEEEkeywords}
Granger Causality, LASSO, autoregressive models, non-asymptotic analysis.
\end{IEEEkeywords}

\ifCLASSOPTIONpeerreview
\begin{center} \bfseries EDICS Category: 3-BBND \end{center}
\fi
%
\IEEEpeerreviewmaketitle

\section{Introduction}
\IEEEPARstart{R}{eliable} identification of causal influences is one of the central challenges in time series analysis, with implications for various domains such as economics \cite{greene2003econ}, neuroscience \cite{BRESSLER2011323,FRISTON2013172,Seth3293} and computational biology \cite{klipp2005systems,feng2007networks}.
Granger causal (GC) characterization of time series is among the widely used methods in this regard. This framework was pioneered by Granger \cite{granger1969}, with subsequent key generalizations provided by Geweke \cite{geweke1982measurement,geweke1984lineardep}.
The notion of GC influence pertains to assessing the improvements in predicting the future samples of one time series by incorporating the past samples of another one. 

{While causality, as the relationship between cause and effect, is a philosophically well-defined concept \cite{psillos2014causation}, it eludes a universal definition in empirical sciences and engineering.} Granger causality is one of many definitions used in time series models (see \cite{marinescu2018quasi,pearl2009causality} for other notions), with an explicit data-driven form that admits statistical testing. The stochastic nature of the time series model, i.e., the uncertainty is the central feature of GC definition. {Unlike regression analysis that merely reflects correlational association, Granger's notion of causality probes if two time series are \emph{temporally related} \cite{grangerForecastingEconomicTime2014}, in terms of the precedence established by the direction of time flow, i.e., if one time series precedes, and thus forecasts, another.}
In principle, given two time series $x_{t}$ and $y_{t}$, one asserts that $y_t$ has a GC influence on $x_t$ when the posterior {conditional} densities $p(x_{t}|x_{t-1},x_{t-2},\cdots,y_{t-1},y_{t-2},\cdots\vphantom{x_{t}})$ and $p(x_{t}|x_{t-1},x_{t-2},\cdots\vphantom{x_{t}})$ differ significantly.
However, estimating these posterior densities from the observed data is a difficult task in general, and requires additional modeling assumptions.
A popular set of such assumptions pertains to parametric multivariate autoregressive (MVAR) models along with {certain} distributional specifications (e.g., zero-mean Gaussian process noise).
In these models, the aforementioned posterior densities can be fully characterized by the estimates of parameters and prediction error variances. As such, the classical notion of Granger causality focuses on the prediction error variances: one first aims at predicting $x_{t+1}$ by a linear combination of the joint past observations $\{x_{t}, \cdots x_{0}\}$, $\{y_{t}, \cdots y_{0}\}$ (i.e., the \emph{full} model), followed by {repeating} this task by excluding the past observations of ${y}_t$ (i.e., the \emph{reduced} model). If the prediction error variance in the former case is significantly smaller than the latter, we say that $y_t$ has a GC influence on $x_t$ {(See \prettyref{sec:classical GC} for a formal definition of GC)}. 

Conventionally, {the optimal linear predictors in the mean square error sense,} are obtained by the ordinary least squares (OLS), and the model orders are determined by the AIC \cite{Akaike1974model} or BIC \cite{schwarz1978} procedures.
Then, the GC measure is defined as the logarithmic ratio of the two prediction error variances, and its statistical significance is assessed based on the corresponding asymptotic distributions \cite{kim2011granger,Wald1943,Davidson1970}.
While the aforementioned procedure is relatively simple to carry out, it faces two key challenges.
First, in order to obtain reliable MVAR parameter estimates via OLS, a relatively long observation horizon is required. In datasets with small sample size (e.g., gene expression data \cite{zou2009granger}), the regression models typically over-fit the observed data, causing both parameter estimation and model order selection to break down \cite{Seth3293,seth2010matlab}.
In addition, AIC/BIC may restrict the order of the MVAR in a way that the resulting model fails to capture the complex and long-range dynamics of the underlying couplings \cite{BRESSLER2011323,MITRA1999691}. Secondly, correlated process noise arising from latent processes, may lead to misidentification of GC influences, which is often referred to as the confounding effect {\cite{bahadori2013examination}}. 

These challenges have been successfully addressed in the context of regularized MVAR estimation {\cite{Goldenshluger2001,wang2007regression,nardi2011autoregressive,han2013transition,kazemipour2017sampling,basu2015regularized,wong2016lasso,SKRIPNIKOV2019164,basu2019low}.} 
In particular, the theory of sparse estimation via the LASSO \cite{Tibshirani1996Lasso,loh2011high,buhlmann2011statistics,negahban2012,hastie2015statistical} provides a principled methodology for simultaneous parameter estimation and model selection in high dimensional MVAR models \cite{wang2007regression,HSU20083645,REN20101705,nardi2011autoregressive,ren2013model,han2013transition,basu2015regularized,wong2016lasso,kazemipour2017sampling}.
In addition, the Oracle property of the LASSO in presence of correlated noise ensures robust recovery of the set of MVAR parameters arising from the direct GC influences while discarding any spurious couplings due to correlated process noise, thus {alleviating} the confounding effect.
The LASSO and its variants have already been utilized in existing work to identify graphical GC influences based on the recovered sparsity patterns \cite{arnold2007temporal,fujitaModelingGeneExpression2007,Lozano2009GGGM,Shojaie2010GGC,liu2012sparse,Basu2015NGC,MICHAILIDIS2013326,shimamuraRecursiveRegularizationInferring2009}. These methods construct the GC graph based on the estimated model parameters, either directly \cite{arnold2007temporal} or by appropriate thresholding \cite{Basu2015NGC,MICHAILIDIS2013326} to control false positive errors.
This idea has even been extended to time series models that account for nonlinear dynamics using structured multilayer perceptrons or recurrent neural networks \cite{tank2021neural}.
Another related class of results uses de-biasing techniques in order to construct confidence intervals and thereby identify the significant GC interactions \cite{tang2012measuring,van2014asymptotically,van2017efficiency,javanmard2014confidence,javanmard2018debiasing,javanmard2019false,zhangConfidenceIntervalsLow2014}.
There is, however, an evident disconnect between these LASSO-based approaches and the classical OLS-based GC inference: while the LASSO-based approaches aim at identifying the GC effects based on consistent {\emph{parameter estimates}} in the non-asymptotic regime, the classical GC methodology relies on the comparison of the {\emph{prediction errors}} across the \emph{full} and \emph{reduced} models by resorting to asymptotic distributions.

In this paper, we {address the aforementioned disconnect} between the currently available LASSO-based approaches and the classical OLS-based GC inference by unifying these two approaches via introducing a new LASSO-based GC statistic that resembles the classical GC measure, and by leveraging the consistency properties of the LASSO to characterize the non-asymptotic properties of said GC statistic.
In particular, we consider a canonical bivariate autoregressive (BVAR) process with correlated process noise.
We then propose a likelihood-based {scaled} F-statistic as the relevant GC statistic, which we call the LGC statistic, and study its non-asymptotic properties under both the presence and absence of a GC influence.
Our analysis reveals that the well-known sufficient conditions of the LASSO for stable BVAR estimation are also sufficient for accurate detection of the GC influences, if the strength of the GC effect satisfies additional conditions. We also show that the conditions on the strength of the GC effect pose a fundamental limit for GC identification using LGC. Furthermore, we characterize the false positive error probability and test power of a simple thresholding scheme for identification of GC influences and provide two methods to set the threshold in a data-driven fashion.

{We present extensive simulation studies to compare the performance of the proposed LGC-based method with the classical OLS-based and two of the existing LASSO-based approaches in detecting GC influences. These studies demonstrate the validity of our theoretical claims, explore the key underlying trade-offs, and evaluate our contributions in the context of existing work.}
We also present an application to experimentally-recorded neural data from general anesthesia to assess the GC influence of the local field potential (LFP) on spiking activity.
Our results based on LGC analysis corroborate existing hypotheses on the GC influence of LFP in mediating local spiking activity, whereas these effects are concealed by the classical GC analysis due to significant over-fitting.

In summary, our main contribution is to extend the theoretical results of the LASSO to the classical characterization of GC influences, to identify the key trade-offs in terms of sampling requirements and strength of the GC effects that result in {reliable} GC identification, to characterize the false positive error probability and test power of GC detection by thresholding LGC, and to provide data-driven methods to set the test threshold in practice.

The rest of this paper is organized as follows: \prettyref{sec:Previous-Works} provides background and our problem formulation.
Our main theoretical and methodological contributions are given in \prettyref{sec:Main-contribution}.
\prettyref{sec:experimental validation} presents application to simulated and experimentally-recorded neural data, followed by our concluding remarks in \prettyref{sec:Conclusions}.
\section{Background and Problem Formulation\label{sec:Previous-Works}}
Throughout this work, we use regular and boldface lowercase letters to denote scalars and vectors, respectively. Matrices are denoted by boldface uppercase letters. The transpose and Hermitian of a matrix $\mathbf{M}$ are denoted by $\mathbf{M}^\top$ and $\mathbf{M}^H$, respectively. The $\ell_\alpha$-norm of a vector $\mathbf{v} \in \mathbbm{R}^n$ is denoted by $\left \Vert \mathbf{v} \right \Vert_\alpha = \left(\sum_{i=1}^n \vert v_i \vert^\alpha\right)^{1/\alpha}$ for $\alpha \ge 1$. In addition, we use the notation $\left \Vert \mathbf{v} \right \Vert_0 = \sum_{i=1}^n \mathbbm{1}\left[v_i \neq 0\right]$ to denote the number of non-zero elements of the vector $\mathbf{v}$, {and drop} the subscript of the $\ell_2$-norm of $\mathbf{v}$ and denote it by $\left\Vert \mathbf{v} \right\Vert$. For a matrix $\mathbf{M}$, the matrix norm induced by the $\ell_\alpha$-norm is defined as $\Vert\mathbf{M}\Vert_\alpha := \sup\limits_{\mathbf{x} \neq \mathbf{0}} \, \Vert\mathbf{M}\mathbf{x}\Vert_\alpha / \Vert\mathbf{x}\Vert_\alpha$. We denote the minimum and maximum eigenvalues of a $\mathbf{M}$ by $\Lambda_{\min}(\mathbf{M})$ and $\Lambda_{\max}(\mathbf{M})$, respectively.
\subsection{Granger Causality in a Canonical BVAR Regression Model\label{sec:classical GC}}
Consider finite-duration observations from two time series $x_t$ and $y_t$, given by $\left\{ x_{t},y_{t}\right\} _{t=-p+1}^{n}$, where $n$ is the sample size and $p$ is the order.
The BVAR$(p)$ model can be expressed as:
\begin{align}
\!\!\!\begin{bmatrix}x_{t}\\
y_{t}
\end{bmatrix}\!=\!\mathbf{A}_{1}\begin{bmatrix}x_{t-1}\\
y_{t-1}
\end{bmatrix}\!+\!\mathbf{A}_{2}\begin{bmatrix}x_{t-2}\\
y_{t-2}
\end{bmatrix}\!+\!\cdots\!+\!\mathbf{A}_{p}\begin{bmatrix}x_{t-p}\\
y_{t-p}
\end{bmatrix}\!+\!\begin{bmatrix}\epsilon_{t}\\
\epsilon'_{t}
\end{bmatrix},
\label{eq:true-MVAR(p)}
\end{align}
with $\mathbf{A}_{i}\in\mathbbm{R}^{2\times2}\text{ for }i\in\{1,2,\cdots,p\}$ denoting the BVAR parameters and $[\epsilon_t, \epsilon'_t]^\top$ denoting the process noise with known distribution.
It is commonly assumed that $[\epsilon_t, \epsilon'_t]^\top \sim \mathcal{N}(\mathbf{0}, \boldsymbol{\Sigma}_\epsilon)$.
Using this BVAR($p$) model and considering $\left\{ x_{t},y_{t}\right\} _{t=-p+1}^{0}$ as the initial condition, one can form a prediction model of $x_t$ as follows:
\begin{align}
\mathbf{x}=\mathbf{X}\boldsymbol{\theta}+\boldsymbol{\epsilon},
\label{eq:MVAR(p)}
\end{align}
where the response $\mathbf{x}$, regressors $\mathbf{X}$, and residuals $\boldsymbol{\epsilon}$ are defined as in \prettyref{eq:regressors}.

\begin{floateq}[b!]{2}{0}{1}
{\mathbf{x}=\begin{bmatrix}x_{n}\\
x_{n-1}\\
\vdots\\
x_{1}
\end{bmatrix}, \hspace*{0.5em}
\mathbf{X}=\begin{bmatrix}x_{n-1} & \cdots & x_{n-p} & y_{n-1} & \cdots & y_{n-p}\\
x_{n-2} & \cdots & x_{n-p-1} & y_{n-2} & \cdots & y_{n-p-1}\\
\vdots & \cdots & \vdots & \vdots & \cdots & \vdots\\
x_{0} & \cdots & x_{-p+1} & y_{0} & \cdots & y_{-p+1}
\end{bmatrix},\hspace*{0.5em}
\boldsymbol{\epsilon}=\begin{bmatrix}\epsilon_{n}\\
\epsilon_{n-1}\\
\vdots\\
\epsilon_{1}
\end{bmatrix}.\label{eq:regressors}}
\end{floateq}
The regression coefficients $\boldsymbol{\theta}$ consist of $2p$ parameters: $\{\theta_{i}\}_{i=1}^{p}$, representing the autoregression coefficients obtained from $(\mathbf{A}_i)_{1,1}$, $i=1,2,\cdots, p$ and $\{\theta_{i}\}_{i=p+1}^{2p}$ representing the cross-regression coefficients obtained from $(\mathbf{A}_i)_{1,2}$, $i=1,2,\cdots, p$.
Hereafter, we denote the true coefficients by $\boldsymbol{\theta}^* \in \mathbb{R}^{2p}$. Also, for a generic coefficient vector $\boldsymbol{\theta} \in \mathbb{R}^{2p}$, the corresponding auto- and cross-regression components are denoted by $\boldsymbol{\theta}_{(1)} \in \mathbb{R}^p$ and $\boldsymbol{\theta}_{(2)} \in \mathbb{R}^p$, respectively, i.e., $\boldsymbol{\theta} =: [\boldsymbol{\theta}_{(1)}; \boldsymbol{\theta}_{(2)}]$.

In the context of this bivariate model, let $\mathscr{U}_t$ be the sigma-field generated by all random variables up to and including time $t-1$, and $\left({\mathscr{U}\text{\textbackslash} y}\right)_t$ be the sigma-field generated by all random variables up to and including time $t-1$, except $\{y_{t-1}, y_{t-2}, \cdots\}$. {In addition, let the variance of a random variable, $Z$ conditional on these sigma-fields be $\sigma^2(Z|\mathscr{U}_t)$ and $\sigma^2(Z|(\mathscr{U}\backslash y)_t)$. With this notation,  Granger Causality (GC) is formally defined for Gaussian BVAR models as follows}: 

\begin{definition}[Granger Causality \cite{granger1969}]
\label{def:gc}
We say $y_t$ has a Granger causal link to $x_t$, if $\sigma^2(x_t|\mathscr{U}_t) < \sigma^2(x_t|(\mathscr{U}\backslash y)_t)$, i.e., we are better able to predict $x_t$ using all available data than excluding $y_t$.
\end{definition}
\noindent While \prettyref{def:gc} applies to any time $t$, in practice the GC is assessed within blocks of the time-series data, e.g., for $0\le t \le n$, since estimating the prediction variance reliably for all time points and using a single trial of the time-series data is challenging.
A convenient framework to test the GC influence of $y_t$ on $x_t$ is to pose it as hypothesis testing, with the null hypothesis $H_{y\mapsto x,0}:\boldsymbol{\theta}^*_{(2)}=\boldsymbol{0}$. To this end, one considers the following BVAR\((p)\) models:
\setcounter{equation}{\numexpr\value{equation}+1}
\begin{subequations}
\begin{align}
\text{Full Model: }&\mathbf{x}= \mathbf{X}\boldsymbol{\theta}+\boldsymbol{\epsilon}, \\
\text{Reduced Model: }&\mathbf{x}=  \mathbf{X}\widetilde{\boldsymbol{\theta}}+\widetilde{\boldsymbol{\epsilon}}, \, \, \text{with } \widetilde{\boldsymbol{\theta}}_{(2)}=\boldsymbol{0}.\label{eq:mvar}
\end{align}
\end{subequations}
In words, in the \emph{full} model, all columns of $\mathbf{X}$ are used to estimate $\mathbf{x}$, but in the \emph{reduced} model, only the first $p$ columns are used.
The conventional GC measure \cite{geweke1984lineardep,granger1969,geweke1982measurement} is then defined as the logarithmic ratio of the residual variances: $\mathcal{F}_{y\mapsto x} :=\log\left(\text{var}(\widetilde{\boldsymbol{\epsilon}})/\text{var}({\boldsymbol{\epsilon}})\right)$.
Note that when the residuals are Gaussian, $\mathcal{F}_{y\mapsto x}$ is the log-likelihood ratio statistic.
Given that the \emph{reduced} model is nested within the \emph{full} model, we have $\mathcal{F}_{y\mapsto x}\geq0$. 

In order to compute $\mathcal{F}_{y\mapsto x}$ from the time series data, empirical residual variances are used based on OLS parameter estimates under both models \cite{greene2003econ}:
\begin{subequations}
\begin{align}
\widehat{\boldsymbol{\theta}}_{\text{OLS}}&=\underset{\boldsymbol{\theta}}{\text{argmin}}\hspace*{2mm}\frac{1}{n}\left\Vert \mathbf{x}-\mathbf{X}\boldsymbol{\theta}\right\Vert ^{2}, \\
\widehat{\widetilde{\boldsymbol{\theta}}}_{\text{OLS}}&=\underset{\boldsymbol{\theta}: \boldsymbol{\theta}_{(2)}=\boldsymbol{0}}{\text{argmin}}\hspace*{2mm}\frac{1}{n}\left\Vert \mathbf{x}-\mathbf{X}\boldsymbol{\theta}\right\Vert ^{2}.
\label{eq:ols}
\end{align}
\end{subequations}
The estimated ${\mathcal{F}}_{y\mapsto x}$ is a random variable over \(\mathbbm{R}_{\geq0}\), and typically has a non-degenerate distribution.
Thus, a non-zero ${\mathcal{F}}_{y\mapsto x}$ does not necessarily imply a GC influence.
To control for false discoveries, the well-established results on the asymptotic normality of maximum likelihood estimators can be utilized: under mild assumptions, $n{\mathcal{F}}_{y\mapsto x}$ converges in distribution to a chi-square $\chi_{p}^{2}$ with degree $p$.
In addition, under a sequence of local alternatives $H^n_{y\mapsto x,1} : {\boldsymbol{\theta}}_{(2)}^* = \boldsymbol{\delta} /\sqrt{n}$, for some constant vector $\boldsymbol{\delta}$, $n{\mathcal{F}}_{y\mapsto x}$ converges in distribution to a non-central chi-square $\chi_{p}^{2}(\nu)$ with degree $p$ and non-centrality $\nu > 0$ \cite{Davidson1970,Wald1943}.
These asymptotic results lead to a simple thresholding strategy: rejecting the null hypothesis if ${\mathcal{F}}_{y\mapsto x}$ exceeds a fixed threshold. A key consideration in this framework is choosing the model order $p$.
To this end, criteria such as the AIC \cite{Akaike1974model} and BIC \cite{schwarz1978} are widely used to strike a balance between the variance accounted for and the number of coefficients to be estimated.

While the foregoing procedure works well in practice for large sample sizes, its performance sharply degrades as the sample size decreases. This performance degradation has two main reasons:
\begin{enumerate}
\item The regression models become under-determined and result in poor estimates of the parameters{\cite{basu2015regularized}}, and 
\item The conventional model selection criteria fail to capture possible long-range temporal coupling of the underlying processes{\cite{kazemipour2017sampling}}.
\end{enumerate}
As a result, the classical GC measure is highly susceptible to over-fitting.
In addition, when the process noise elements $\epsilon_t$ and $\epsilon'_t$ are highly correlated, the OLS estimates incur additional error in capturing the true BVAR parameters, and hence result in mis-detection of the GC influences.
While some existing nonparametric methods aim at entirely bypassing MVAR estimation by utilizing spectral matrix factorization \cite{DHAMALA2008354} or multivariate embedding \cite{PhysRevE.82.016207} for system identification, they are similarly prone to the adverse effects of small sample size. {It is noteworthy that there also exists a slew of partial correlation-based nonparametric methods that employ conditional independence tests for GC detection \cite{runge2019inferring,runge2019detecting,runge2015identifying}, thus avoiding time series modeling assumptions altogether.} 
\subsection{LASSO-based GC Inference in the High-Dimensional Setting}
In the so-called high-dimensional setting, where the model dimension becomes comparable to or even exceeds the sample size {\cite{buhlmann2011statistics}}, regularization schemes are employed to guard against over-fitting.
These schemes include Tikhonov regularization \cite{Golub1999Tikhonov,Frank1984tikhonov}, $\ell_1$-regularization or the LASSO \cite{Tibshirani1996Lasso,loh2011high,negahban2012,hastie2015statistical}, smoothly clipped absolute deviation \cite{fan2001variable,xie2009}, Elastic-Net \cite{zou2005regularization}, and their variants, and have particularly proven useful in MVAR estimation \cite{Goldenshluger2001,wang2007regression,nardi2011autoregressive,han2013transition,SKRIPNIKOV2019164,basu2019low}.
Among these techniques, the LASSO has been widely used and studied in the high-dimensional sparse MVAR setting, under fairly general assumptions \cite{wang2007regression,nardi2011autoregressive,han2013transition}.
By augmenting the least squares error loss with the $\ell_1$-norm of the parameters, the LASSO simultaneously guards against over-fitting and provides automatic model selection \cite{zhao2006model,HSU20083645,REN20101705,ren2013model,Ding2018Model,hastie2015statistical}, under the hypothesis that the true parameters are sparse. In the context of MVAR estimation, assuming that the time series data admit a sparse MVAR representation, the LASSO estimates enjoy tight bounds on the estimation and prediction errors under suitable sample size requirements, even for models with correlated noise \cite{basu2015regularized,wong2016lasso}.

By leveraging the foregoing properties, the LASSO and its variants have been utilized in existing work to identify GC influences in a graphical fashion \cite{fujitaModelingGeneExpression2007,arnold2007temporal,Lozano2009GGGM,Shojaie2010GGC,liu2012sparse,Basu2015NGC,MICHAILIDIS2013326}.
{Fujita et al. \cite{fujitaModelingGeneExpression2007} used sparse autoregressive models to identify the structure of gene regulatory networks; Arnold et al. \cite{arnold2007temporal} provided a formal treatment of LASSO-based Granger causality detection using sparse parameter estimates. Lozano et al. \cite{Lozano2009GGGM} and Shimamura et al. \cite{shimamuraRecursiveRegularizationInferring2009} used Group LASSO and Elastic-Net penalties, respectively, to reduce the number of false
positives while maintaining a high true positive rate in network inference. Finally, Shojaie and Michailidis \cite{Shojaie2010GGC} introduced a truncating LASSO penalty in order to identify the correct order of the VAR model and thereby enhance the reliability of GC discovery. To deal with time series data with extreme events, i.e., exhibiting heavy-tailed distributions, Liu et al. \cite{liu2012sparse} utilized the theory of extreme value modeling to modify the distributional assumptions.}

{The aforementioned methods translate the non-zero values of estimated parameters (in a group-wise sense) to GC either directly or after appropriate pruning \cite{Basu2015NGC} (See \cite{MICHAILIDIS2013326} for a through overview). Alternatively, de-biasing techniques have been introduced for constructing confidence intervals over the estimated parameters and thereby identifying the significant GC effects \cite{tang2012measuring,van2014asymptotically,van2017efficiency,javanmard2014confidence,javanmard2018debiasing,javanmard2019false,zhangConfidenceIntervalsLow2014} to maintain high number of true positives while reducing false discoveries.}
\subsection{Unifying the Classical OLS-based and LASSO-based Approaches} \label{sec:unifying}
{Comparing the classical OLS-based GC and the recent LASSO-based approaches to GC inference reveals an evident disconnect: the latter approach directly utilizes the \emph{estimated parameters} from a \emph{single model} to identify the GC influence with non-asymptotic performance guarantees, while the former is based on comparing the \emph{prediction error} performance of \emph{two different models} by resorting to asymptotic distributions for statistical testing.} 

Our main objective here is {address this disconnect} by unifying these two approaches. To this end, we first replace OLS estimation in \prettyref{eq:ols} by its LASSO counterpart:
\begin{subequations}
\begin{align} 
\widehat{\boldsymbol{\theta}} & =\underset{\boldsymbol{\theta}}{\text{argmin}}\hspace*{2mm}\frac{1}{n}\left\Vert \mathbf{x}-\mathbf{X}\boldsymbol{\theta}\right\Vert ^{2}+\lambda_{n}\Vert\boldsymbol{\theta}\Vert_{1} \\
\widehat{\widetilde{\boldsymbol{\theta}}}&=\underset{\boldsymbol{\theta}:\boldsymbol{\theta}_{(2)}=\boldsymbol{0}}{\text{argmin}}\hspace*{2mm}\frac{1}{n}\left\Vert \mathbf{x}-\mathbf{X}\boldsymbol{\theta}\right\Vert ^{2}+\lambda_{n}\Vert\boldsymbol{\theta}\Vert_{1},
\end{align}
\label{eq:Lasso-full&reducedmoedel}
\end{subequations}%
where $\lambda_{n}$ denotes the regularization parameter. Let
\begin{align*}
\ell(\boldsymbol{\theta}_{(1)}, \boldsymbol{\theta}_{(2)}):=\frac{1}{n}\left\Vert \mathbf{x}-\mathbf{X}\boldsymbol{\theta}\right\Vert ^{2} \text{ with } \boldsymbol{\theta} = [\boldsymbol{\theta}_{(1)}; \boldsymbol{\theta}_{(2)}].
\end{align*}%
By similarly grouping the solutions of \prettyref{eq:Lasso-full&reducedmoedel} as $\widehat{\boldsymbol{\theta}} = [\widehat{\boldsymbol{\theta}}_{(1)}; \widehat{\boldsymbol{\theta}}_{(2)}]$ and $\widehat{\widetilde{\boldsymbol{\theta}}} = [\widehat{\widetilde{\boldsymbol{\theta}}}_{(1)}; \mathbf{0}]$, we then propose to use the following statistic:
\begin{align}
{\mathcal{T}}_{y\mapsto x}:=\frac{\ell\Big(\widehat{\widetilde{\boldsymbol{\theta}}}_{(1)},\boldsymbol{0}\Big)}{\ell\Big(\widehat{\boldsymbol{\theta}}_{(1)}, \widehat{\boldsymbol{\theta}}_{(2)}\Big)} - 1=\frac{\ell\Big(\widehat{\widetilde{\boldsymbol{\theta}}}_{(1)},\boldsymbol{0}\Big)-\ell\Big(\widehat{\boldsymbol{\theta}}_{(1)}, \widehat{\boldsymbol{\theta}}_{(2)}\Big)}{\ell\Big(\widehat{\boldsymbol{\theta}}_{(1)}, \widehat{\boldsymbol{\theta}}_{(2)}\Big)},\label{eq:T-statistic}
\end{align}%
akin to a scaled likelihood-based version of the F-statistic, which we call the LASSO-based GC (LGC) statistic. Note that the LGC statistic can be related to the conventional GC statistic as ${\mathcal{T}}_{y\mapsto x} = \exp({\mathcal{F}}_{y\mapsto x})-1$, when $\lambda_n = 0$. Therefore, it is expected for \({\mathcal{T}}_{y\mapsto x}\) to be near $0$ under the null hypothesis.
One advantage of using this statistic is that a simple thresholding strategy, similar to that used for the classical GC statistic, can be used to reject the null hypothesis $H_{y\mapsto x,0}:\boldsymbol{\theta}^*_{(2)}=\boldsymbol{0}$.
In the next section, we will characterize the non-asymptotic properties of the LGC statistic and seek conditions that allow us to distinguish between the null (i.e., absence of a GC effect) and a {suitably defined} alternative (i.e., presence of a GC effect) hypothesis. 

{To evaluate the benefits of unifying these two approaches, i.e., using the robust \emph{estimation} performance of the LASSO in comparing the \emph{prediction error} performance of the full and reduced models, we will present a  simulation study in \prettyref{sec:Simulation-Studies} to compare LGC-based GC identification with two existing LASSO-based approaches, namely, the confidence interval (CI)-based LASSO \cite{van2014asymptotically,javanmard2018debiasing}, and truncating LASSO (TLASSO). Our simulation results show that when the strength of the GC link is relatively weak, the CI-based LASSO method results in a high false positive rate and TLASSO exhibits low true positive rate; LGC, however, strikes a reasonable balance between true and false positive discovery, in accordance with our main objective.}
\section{Theoretical Results \label{sec:Main-contribution}}
Our main theoretical contribution in this section is to characterize $\mathcal{T}_{y \mapsto x}$ under both the null and a suitably chosen alternative hypothesis, and establish sufficient conditions that guarantee distinguishing these hypotheses with high probability. {Furthermore, we show that some of these conditions are indeed necessary and thus pose a fundamental limit for the separation of null and alternative hypotheses.} We then analyze the false positive error probability corresponding to the aforementioned thresholding strategy, under slightly weakened sufficient conditions. {The latter result can be used to obtain suitable thresholds in practice, as we will discuss in \prettyref{sec:discussion} and demonstrate in \prettyref{sec:experimental validation}.} Before presenting the main results, we state our key assumptions and {discuss} their implications in the following section.
\subsection{Key Assumptions and Their Implications}
We adopt the following assumption on the BVAR process from \cite{basu2015regularized}:
\begin{assum}
\label{assum:key}
The $\left\{x_{t},y_{t}\right\}_{t=-p+1}^{n}$ is a part {of a} realization of zero-mean bivariate process that admits a stable and invertible BVAR$(p)$ representation, with a zero-mean i.i.d. Gaussian process noise with positive definite covariance $\boldsymbol{\Sigma}_\epsilon$. Further, the initial condition of the process is such that the samples under consideration attain the stationary distribution.
\end{assum}
To elaborate on the implications of \prettyref{assum:key} for the second order statistics of the BVAR process, let $\boldsymbol{\Gamma}(l)$ be the auto-covariance matrix of the BVAR process at lag $l$ and  
\begin{align*}
\mathbf{F}(\omega):= \frac{1}{2\pi}\sum_{l=-\infty}^{\infty}\boldsymbol{\Gamma}(l)\exp\left(-il\omega\right)
\end{align*}
be its spectral density matrix. It can be shown that the spectral density exists, if \(\sum_{l=0}^{\infty} \Vert \boldsymbol{\Gamma}(l) \Vert_2^2 < \infty\). Furthermore, if \(\sum_{l=0}^{\infty} \Vert \boldsymbol{\Gamma}(l) \Vert_2 < \infty\), the spectral density is bounded and continuous, so that the essential supremum is indeed achieved \cite{basu2015regularized}.

For the BVAR$(p)$ process in \prettyref{eq:true-MVAR(p)}, the matrix valued characteristic polynomial is defined as \(\mathbf{A}(z) := \mathbf{I} - \sum_{j=1}^{p}\mathbf{A}_jz^j\). Then, the following consequences of \prettyref{assum:key} provide a simple characterization of the spectral density matrix \cite{basu2015regularized}:
\begin{enumerate}
  \item The process noise covariance matrix \(\boldsymbol{\Sigma}_{\epsilon}\) is positive definite with bounded eigen-values, i.e., 
  \begin{align*}
  0<\Lambda_{\min}(\boldsymbol{\Sigma}_{\epsilon})\leq\Lambda_{\max}(\boldsymbol{\Sigma}_{\epsilon})<\infty.
  \end{align*}
  \item The BVAR process is stable and invertible, i.e., \(\operatorname{det}(\mathbf{A}(z)) \neq 0\) on or inside the unit 
  circle, \(\{z \in \mathbbm{C}: \vert z \vert \leq 1\}\).
\end{enumerate}
Under these two conditions, the spectral density matrix exists, and its maximum eigen-value is bounded almost everywhere on
$[-\pi,\pi]$, i.e., 
\begin{align*}
\mathcal{M}(\mathbf{F}):=\underset{\omega\in[-\pi,\pi]}{\text{ess sup}}\Lambda_{\max}(\mathbf{F}(\omega))<\infty.
\end{align*}
Furthermore, it is bounded and continuous, and admits the representation:
\begin{align*}\mathbf{F}(\omega) = \frac{1}{2\pi} \mathbf{A}^{-1}(\exp{(-i\omega)}) \boldsymbol{\Sigma}_{\epsilon} \mathbf{A}^{-H}(\exp{(-i\omega)}).\end{align*} 
Additionally, consider the infimum of the spectral density over unit circle:
\begin{align}
\mathscr{m}(\mathbf{F}):=\underset{\omega\in[-\pi,\pi]}{\text{ess inf}}\Lambda_{{\min}}(\mathbf{F}(\omega)).
\end{align}
Then, the following useful bounds hold for the BVAR$(p)$ process in \prettyref{eq:true-MVAR(p)} \cite{basu2015regularized}:
\begin{align}\label{eq:M-bounds}
\mathcal{M}(\mathbf{F}) \leq \frac{1}{2\pi}\frac{\Lambda_{\max}(\boldsymbol{\Sigma}_{\epsilon})}{\mu_{\min}(\mathbf{A})}, \hspace{2em} \mathscr{m}(\mathbf{F}) \geq \frac{1}{2\pi}\frac{\Lambda_{\min}(\boldsymbol{\Sigma}_{\epsilon})}{\mu_{\max}(\mathbf{A})},
\end{align}
where \(\mu_{\max}\left(\mathbf{A}\right):=\underset{\vert z\vert=1}{\text{max}} \Lambda_{\max}\left(\mathbf{A}^H(z)\mathbf{A}(z)\right) \text{ and } \mu_{\min}\left(\mathbf{A}\right):=\underset{\vert z\vert=1}{\text{min}} \Lambda_{\min}\left(\mathbf{A}^H(z)\mathbf{A}(z)\right).\)

We note that the characteristic polynomial \(\mathbf{A}(z)\) encodes the temporal dependencies of the process, whereas \(\boldsymbol{\Sigma}_{\epsilon}\) captures the correlation between the process noise components, possibly due to latent processes. 
Expressing the error bounds in our theoretical analysis in terms of \({\mu_{\max}(\mathbf{A}), \mu_{\min}(\mathbf{A})},{\Lambda_{\max}(\boldsymbol{\Sigma}_{\epsilon}), \Lambda_{\min}(\boldsymbol{\Sigma}_{\epsilon})}\), instead of \(\mathcal{M}(\mathbf{F})\) and \(\mathscr{m}(\mathbf{F})\), helps to distinguish the contributions of these two sources of BVAR dependencies \cite{basu2015regularized} (See Appendix \ref{sec:prediction-error-analysis}).

We also consider the $2p$-dimensional alternative BVAR$(1)$ representation of the $2$-dimensional BVAR$(p)$ process: \(\mathbf{X}_t = \breve{\mathbf{A}}_1\mathbf{X}_{t-1} + \breve{\boldsymbol{\epsilon}}\), where \(\mathbf{X}_t\) is the first the row of \(\mathbf{X}\) in \prettyref{eq:regressors} organized as a column vector, and $\breve{\mathbf{A}}_1$ and $\breve{\boldsymbol{\epsilon}}$ are constructed by the corresponding augmentation of $\mathbf{A}_i$'s and $\epsilon_t$'s, respectively.
The process \(\mathbf{X}_t\) has a characteristic polynomial, \(\breve{\mathbf{A}}(z) := \mathbf{I} - \breve{\mathbf{A}}_1z\) and is stable if and only if the original process is stable \cite{helmut2005new}.

The remaining component of our key assumptions is the following sparsity assumption, frequently arising in high-dimensional regime, specially in LASSO literature.
\begin{assum}
The regression coefficients in \prettyref{eq:MVAR(p)}, \(\boldsymbol{\theta}^{*}\) is $k$-sparse, i.e.,
$\Vert\boldsymbol{\theta}^{*}\Vert_{0}=k$. \label{assum:sparsity}
\end{assum} 
The implication of this assumption for the \emph{full} model is fairly standard, under both the null and alternative hypotheses.
{However, for the \emph{reduced} model under the alternative hypothesis, where only the autoregressive parameters are unspecified and the cross-regression parameters are enforced to be $\boldsymbol{0}$, we need to define a suitable surrogate ``true'' model whose parameters can be used to quantify the estimation error and thereby establish concentration bounds (See \prettyref{prop:prediction error-reducedmodel}). Let the columns of $\mathbf{X}$ corresponding to $\boldsymbol{\theta}_{(i)}$ be denoted by $\mathbf{X}_{(i)}$, for $i=1,2$.
Using the fact that the error residuals of the optimal linear estimator, in the mean square error sense, are uncorrelated with the columns of the design matrix \cite[pp. 386]{kayFundamentalsStatisticalSignal1993}), we define the surrogate ``true'' autoregression coefficients in the \emph{reduced} model as:}
\begin{align}\widetilde{\boldsymbol{\theta}}^*_{(1)} :=\boldsymbol{\theta}^*_{(1)}+\mathbf{C}_{11}^{-1}\mathbf{C}_{12}{\boldsymbol{\theta}^*_{(2)}},\end{align}
where
\begin{align}
\nonumber \mathbbm E\left[\frac{1}{n}\mathbf{X}^{\top}\mathbf{X}\right] &=\left[\begin{alignedat}{1}\mathbbm{E}\left[\frac{1}{n}\mathbf{X}_{(1)}^{\top}\mathbf{X}_{(1)}\right] & \mathbbm{E}\left[\frac{1}{n}\mathbf{X}_{(1)}^{\top}\mathbf{X}_{(2)}\right]\\
\mathbbm{E}\left[\frac{1}{n}\mathbf{X}_{(2)}^{\top}\mathbf{X}_{(1)}\right] & \mathbbm{E}\left[\frac{1}{n}\mathbf{X}_{(2)}^{\top}\mathbf{X}_{(2)}\right]
\end{alignedat}
\right]\\
&=:\begin{bmatrix}\mathbf{C}_{11} & \mathbf{C}_{12}\\
\mathbf{C}_{21} & \mathbf{C}_{22}
\end{bmatrix}=:\mathbf{C}.
\end{align}
Note that even though the MVAR coefficients under the alternative hypothesis are $k$-sparse, the surrogate "true" autoregression coefficients under the \emph{reduced} model may not be.
To deal with this issue, we follow the treatment of \cite{negahban2012} in analyzing the LASSO under weakly sparse or compressible parameters, and further impose a norm condition on $\boldsymbol{\theta}^*_{(2)}$ (See the statement of \prettyref{thm:main-theorem}) to restrict the alternative hypothesis. The latter ensures that the \emph{full} and \emph{reduced} models are distinguishable under the alternative hypothesis.

Finally, we use the following definition throughout the rest of the paper:
\begin{definition}
We say that a threshold $\mathscr{t}$ \emph{correctly distinguishes} the null and alternative hypotheses, if the ranges of the statistic $\mathcal{T}$ under the null and alternative hypotheses are respectively subsets of $\mathcal{T} < \mathscr{t}$ and $\mathcal{T} > \mathscr{t}$.
\label{def:def}
\end{definition}

\subsection{Main Theoretical Results}
Our main theorem is stated as follows:
\begin{thm}[Main Theorem]
\label{thm:main-theorem} Suppose that the key assumptions \ref{assum:key} and \ref{assum:sparsity} hold. Then, for the proposed LGC statistic $\mathcal{T}_{y \mapsto x}$ evaluated at the BVAR($p$) parameter estimates from the solutions of \prettyref{eq:Lasso-full&reducedmoedel} with a regularization parameter $\lambda_{n}=4\mathscr{A}\sqrt{{\log(2p)}/{n}}$, there exists a threshold that correctly distinguishes the null and the local alternative hypothesis $H^n_{y\mapsto x,1} : \| {\boldsymbol{\theta}^*_{(2)}}\|_2^2 \geq \mathscr{B}{k\log(2p)}/{n}$ with probability at least {$1 - K_1\exp\left(-n\bar{c}\right) - {K_2}/{p^{\bar{d}}}$}, if {$n\geq \max\{\mathscr{C}'', \mathscr{D}'' k\} \log(2p)$},
with $\mathscr{A}$, $\mathscr{B}$, $\mathscr{C}''$, $\mathscr{D}''$, $K_1, K_2, \bar{c},$ and $\bar{d} > 0$ denoting constants that are explicitly given in the proof.
\end{thm}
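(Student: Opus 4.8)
The plan is to analyze the numerator and denominator of $\mathcal{T}_{y\mapsto x}$ separately and show that the statistic concentrates in two disjoint intervals under the null and the local alternative, so that any threshold placed in the resulting gap separates the two hypotheses with high probability. Writing $\mathbf{x}=\mathbf{X}\boldsymbol{\theta}^*+\boldsymbol{\epsilon}$ and expanding, the denominator $\ell(\widehat{\boldsymbol{\theta}}_{(1)},\widehat{\boldsymbol{\theta}}_{(2)})$ decomposes as $\tfrac{1}{n}\|\boldsymbol{\epsilon}\|^2-\tfrac{2}{n}\boldsymbol{\epsilon}^{\top}\mathbf{X}(\widehat{\boldsymbol{\theta}}-\boldsymbol{\theta}^*)+\tfrac{1}{n}\|\mathbf{X}(\widehat{\boldsymbol{\theta}}-\boldsymbol{\theta}^*)\|^2$. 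The leading term concentrates around $\sigma^2:=(\boldsymbol{\Sigma}_{\epsilon})_{1,1}$ by Gaussian quadratic-form concentration, while the cross and quadratic terms are controlled by the LASSO prediction and estimation error bounds; hence the denominator lies in a small interval around $\sigma^2$, bounded away from $0$ (since $\Lambda_{\min}(\boldsymbol{\Sigma}_{\epsilon})>0$) and from $\infty$. The numerator $\ell(\widehat{\widetilde{\boldsymbol{\theta}}}_{(1)},\boldsymbol{0})-\ell(\widehat{\boldsymbol{\theta}}_{(1)},\widehat{\boldsymbol{\theta}}_{(2)})$ is then the quantity that must be shown to be small under the null and bounded below under the alternative; dividing its bounds by the (bounded) denominator transfers them to $\mathcal{T}_{y\mapsto x}$.

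The first ingredient is the standard sparse LASSO analysis for the \emph{full} model, valid under both hypotheses because $\boldsymbol{\theta}^*$ is $k$-sparse. On the event $\{\|\tfrac{1}{n}\mathbf{X}^{\top}\boldsymbol{\epsilon}\|_{\infty}\le\lambda_n/2\}$, which holds with probability at least $1-K_2/p^{\bar{d}}$ for the prescribed $\lambda_n=4\mathscr{A}\sqrt{\log(2p)/n}$ (a union bound over the $2p$ coordinates combined with the deviation bound for $\mathbf{X}^{\top}\boldsymbol{\epsilon}$ established in Appendix \ref{sec:prediction-error-analysis}), and on the event that the design $\mathbf{X}$ satisfies a restricted-eigenvalue / restricted-strong-convexity property -- which holds for the stable, invertible BVAR process once $n\ge\mathscr{C}''\log(2p)$ by the Gaussian process bounds following \cite{basu2015regularized} -- the usual argument yields $\|\widehat{\boldsymbol{\theta}}-\boldsymbol{\theta}^*\|_2^2\le c_1\,k\log(2p)/n$ and $\tfrac{1}{n}\|\mathbf{X}(\widehat{\boldsymbol{\theta}}-\boldsymbol{\theta}^*)\|^2\le c_2\,k\log(2p)/n$ for absolute constants $c_1,c_2$. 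Under the null the \emph{reduced} model is also a clean sparse problem, since $\widetilde{\boldsymbol{\theta}}^*_{(1)}=\boldsymbol{\theta}^*_{(1)}$ is $k$-sparse when $\boldsymbol{\theta}^*_{(2)}=\boldsymbol{0}$; the same bounds apply to $\widehat{\widetilde{\boldsymbol{\theta}}}_{(1)}$, both residual variances concentrate around $\sigma^2$, and the numerator is of order $k\log(2p)/n$.

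The harder ingredient is the \emph{reduced} model under the alternative, where the surrogate truth $\widetilde{\boldsymbol{\theta}}^*_{(1)}=\boldsymbol{\theta}^*_{(1)}+\mathbf{C}_{11}^{-1}\mathbf{C}_{12}\boldsymbol{\theta}^*_{(2)}$ need not be sparse. Here I would invoke the weakly-sparse (compressible) LASSO analysis of \cite{negahban2012}: decompose $\widetilde{\boldsymbol{\theta}}^*_{(1)}$ into its best $k$-sparse part plus a tail and carry the extra approximation term through the error bounds. Because the local alternative forces $\|\boldsymbol{\theta}^*_{(2)}\|_2$ to be of order $\sqrt{k\log(2p)/n}$, the non-sparse perturbation $\mathbf{C}_{11}^{-1}\mathbf{C}_{12}\boldsymbol{\theta}^*_{(2)}$ is small in norm and the approximation term remains at the $k\log(2p)/n$ scale. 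I would also need the restricted-eigenvalue property for the sub-design $\mathbf{X}_{(1)}$ and the deviation bound for $\mathbf{X}_{(1)}^{\top}\widetilde{\boldsymbol{\epsilon}}^*$, where $\widetilde{\boldsymbol{\epsilon}}^*:=\mathbf{x}-\mathbf{X}_{(1)}\widetilde{\boldsymbol{\theta}}^*_{(1)}$ is the effective reduced-model residual, orthogonal to $\mathbf{X}_{(1)}$ by construction and with population variance $\sigma^2_{\mathrm{r}}:=\sigma^2+(\boldsymbol{\theta}^*_{(2)})^{\top}\big(\mathbf{C}_{22}-\mathbf{C}_{21}\mathbf{C}_{11}^{-1}\mathbf{C}_{12}\big)\boldsymbol{\theta}^*_{(2)}$.

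Combining the two models, the numerator concentrates around $\sigma^2_{\mathrm{r}}-\sigma^2=(\boldsymbol{\theta}^*_{(2)})^{\top}\big(\mathbf{C}_{22}-\mathbf{C}_{21}\mathbf{C}_{11}^{-1}\mathbf{C}_{12}\big)\boldsymbol{\theta}^*_{(2)}$ up to errors of order $k\log(2p)/n$. Under the alternative this signal is bounded below by $\Lambda_{\min}$ of the Schur complement times $\|\boldsymbol{\theta}^*_{(2)}\|_2^2\ge\mathscr{B}\,k\log(2p)/n$, and that Schur-complement eigenvalue is itself bounded below by a positive constant determined by $\boldsymbol{\Sigma}_{\epsilon}$ and $\mathbf{A}(z)$ through the spectral-density lower bound in \prettyref{eq:M-bounds}. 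Choosing $\mathscr{B}$ large enough that this lower bound strictly exceeds the null upper bound plus the accumulated error constants produces a gap, and the threshold on $\mathcal{T}_{y\mapsto x}$ is obtained by dividing any value in that gap by the denominator estimate. The main obstacle is precisely this reduced-model step: controlling the LASSO under a non-sparse surrogate truth while keeping the approximation and estimation errors strictly below the $\|\boldsymbol{\theta}^*_{(2)}\|_2^2$ signal, and verifying that the restricted-eigenvalue and $\ell_\infty$-deviation events for \emph{both} designs hold simultaneously. It is in collecting these events that the $K_1\exp(-n\bar{c})$ term (from the Gaussian quadratic-form and restricted-eigenvalue concentrations) and the $K_2/p^{\bar{d}}$ term (from the $\ell_\infty$ deviation union bounds) arise, and the requirement $n\ge\max\{\mathscr{C}'',\mathscr{D}''k\}\log(2p)$ originates from needing both the full and sub-design Gram matrices to satisfy restricted strong convexity.
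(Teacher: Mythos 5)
Your overall architecture mirrors the paper's proof: decompose the empirical losses at the LASSO solutions around the losses at the (surrogate) true parameters, control those deviations via a restricted-eigenvalue condition plus $\ell_\infty$-deviation bounds (Conditions \ref{cond:RE} and \ref{cond:DB}), concentrate $\ell\big(\boldsymbol{\theta}^*_{(1)},\boldsymbol{\theta}^*_{(2)}\big)$ around $(\boldsymbol{\Sigma}_\epsilon)_{1,1}$ and the loss difference around the Schur-complement quadratic form $D={\boldsymbol{\theta}^*_{(2)}}^{\top}(\mathbf{C}_{22}-\mathbf{C}_{21}\mathbf{C}_{11}^{-1}\mathbf{C}_{12}){\boldsymbol{\theta}^*_{(2)}}$, handle the non-sparse surrogate $\widetilde{\boldsymbol{\theta}}^*_{(1)}$ via the weakly-sparse analysis of \cite{negahban2012}, and attribute the $K_1\exp(-n\bar{c})$ and $K_2/p^{\bar d}$ failure terms to exactly the sources the paper uses.

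There is, however, one genuine gap. You claim that ``the local alternative forces $\|\boldsymbol{\theta}^*_{(2)}\|_2$ to be of order $\sqrt{k\log(2p)/n}$,'' and you use this to conclude that the perturbation $\mathbf{C}_{11}^{-1}\mathbf{C}_{12}\boldsymbol{\theta}^*_{(2)}$ is small and the reduced-model approximation error stays at the $k\log(2p)/n$ scale. But $H^n_{y\mapsto x,1}$ only imposes a \emph{lower} bound $\|\boldsymbol{\theta}^*_{(2)}\|_2^2\geq \mathscr{B}k\log(2p)/n$; it contains, for instance, every fixed $\boldsymbol{\theta}^*_{(2)}$ of constant norm. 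For such alternatives the compressibility tail $\big\Vert\widetilde{\boldsymbol{\theta}}^*_{(1)J^{c}}\big\Vert_1$ scales with $\|\boldsymbol{\theta}^*_{(2)}\|_2$, so the reduced-model deviation $\Delta_R$ (Proposition \ref{prop:reduced model deviation}) and the concentration error $\Delta_D$ (Lemma \ref{lem:reducedmodel-fullmodel_error}) contain terms of order $\sqrt{k\log(2p)/n}\,\|\boldsymbol{\theta}^*_{(2)}\|_2$ and $\sqrt{\log(2p)/n}\,\|\boldsymbol{\theta}^*_{(2)}\|_2^{2}$, which are \emph{not} $\mathcal{O}(k\log(2p)/n)$. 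Your final step---``choose $\mathscr{B}$ large enough that the signal exceeds the accumulated error constants''---then compares a signal growing like $\|\boldsymbol{\theta}^*_{(2)}\|_2^2$ against errors you have incorrectly frozen at a fixed scale; as written, the argument only covers alternatives at the boundary scaling $\|\boldsymbol{\theta}^*_{(2)}\|_2\asymp\sqrt{k\log(2p)/n}$. The paper's proof resolves this by keeping the errors as explicit functions of the signal, $\Delta_D+\Delta_R+7\Delta_F\leq \mathscr{a}\sqrt{\log(2p)/n}\,\|\boldsymbol{\theta}^*_{(2)}\|_2^{2}+\mathscr{b}\sqrt{k\log(2p)/n}\,\|\boldsymbol{\theta}^*_{(2)}\|_2+\mathscr{c}\,k\log(2p)/n$, then (i) requiring $n\geq\mathscr{C}'\log(2p)$ so that the quadratic error coefficient is dominated by $\widetilde{\Lambda}_{\min}/2$, and (ii) invoking the quadratic-root bound of Lemma \ref{lem:useful-lemma} on the remaining inequality, which is precisely what generates the constant $\mathscr{B}$ in the statement. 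Your proposal is repairable by the same device, but the uniform-over-the-alternative comparison is a necessary step, not a bookkeeping detail.
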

\begin{proof}
The proof has three main steps. First, we bound the deviation of the empirical quantities $\ensuremath{\ell(\widehat{\boldsymbol{\theta}}_{(1)},\widehat{\boldsymbol{\theta}}_{(2)})}$ (\emph{full} model) and $\ell(\widehat{\widetilde{\boldsymbol{\theta}}}_{(1)},\boldsymbol{0})$ (\emph{reduced} model) with respect to their counterparts evaluated at the true parameters. After invoking suitable concentration results for $\ensuremath{\ell({\boldsymbol{\theta}}_{(1)}^{*},{\boldsymbol{\theta}}_{(2)}^{*})}$ and $\ell(\widehat{{\boldsymbol{\theta}}}_{(1)}^{*},\boldsymbol{0})$, we can then lower bound $\mathcal{T}_{y\mapsto x}$ under the alternative hypothesis and upper bound it under the null hypothesis. Secondly, we seek conditions under which the bounds do not coincide, which further restricts the alternative hypothesis. The last step of the proof establishes that these deviation and concentration results indeed hold with high probability.

\noindent {\bf Step 1.} We first assume the deviation bounds:
\begin{align}
\left\vert \ell\Big(\widehat{\boldsymbol{\theta}}_{(1)}, \widehat{\boldsymbol{\theta}}_{(2)}\Big)-\ell\Big(\boldsymbol{\theta}^{*}_{(1)}, \boldsymbol{\theta}^*_{(2)}\Big)\right\vert &\leq \Delta_{F}, \label{eq:full dev}\\
\left\vert \ell\Big(\widehat{\widetilde{\boldsymbol{\theta}}}_{(1)},\boldsymbol{0}\Big)-\ell\left(\widetilde{\boldsymbol{\theta}}^*_{(1)},\boldsymbol{0}\right)\right\vert &\leq \Delta_{R}, \label{eq:reduced dev}
\end{align}
and the concentration {inequalities}:
\begin{align}
\left\vert\ell\Big(\boldsymbol{\theta}^{*}_{(1)}, \boldsymbol{\theta}^*_{(2)}\Big) - (\boldsymbol{\Sigma}_\epsilon)_{1,1}\right\vert &\leq \Delta_{N} \label{eq:conct 1}\\
\left\vert \ell\left(\widetilde{\boldsymbol{\theta}}^*_{(1)},\boldsymbol{0}\right)-\ell\Big(\boldsymbol{\theta}^{*}_{(1)}, \boldsymbol{\theta}^*_{(2)}\Big) - D\right\vert &\leq \Delta_{D} \label{eq:conct 2}
\end{align}
hold for some non-negative quantities, \(\Delta_{F}\), \(\Delta_{R}\), \(\Delta_{N}\), \(\Delta_{D}\) and \(D:={\boldsymbol{\theta}^*_{(2)}}^{\top}(\mathbf{C}_{22}-\mathbf{C}_{21}\mathbf{C}_{11}^{-1}\mathbf{C}_{12}){\boldsymbol{\theta}^*_{(2)}}.\)
Using the bounds in \eqref{eq:full dev} and \eqref{eq:reduced dev}, we get: 
\begin{align}
\frac{\ell\left(\widetilde{\boldsymbol{\theta}}^*_{(1)},\boldsymbol{0}\right)-\Delta_{R}}{\ell\Big(\boldsymbol{\theta}^{*}_{(1)}, \boldsymbol{\theta}^*_{(2)}\Big)+\Delta_{F}} \leq\frac{\ell\Big(\widehat{\widetilde{\boldsymbol{\theta}}}_{(1)},\boldsymbol{0}\Big)}{\ell\Big(\widehat{\boldsymbol{\theta}}_{(1)}, \widehat{\boldsymbol{\theta}}_{(2)}\Big)}\leq\frac{\ell\left({\widetilde{\boldsymbol{\theta}}}^*_{(1)},\boldsymbol{0}\right)+\Delta_{R}}{\ell\Big(\boldsymbol{\theta}^{*}_{(1)}, \boldsymbol{\theta}^*_{(2)}\Big)-\Delta_{F}},
\end{align}
which gives the following lower and upper bounds on $\mathcal{T}_{y \mapsto x}$:
\begin{align}
\nonumber & \frac{\ell\left(\widetilde{\boldsymbol{\theta}}^*_{(1)},\boldsymbol{0}\right)-\ell\Big(\boldsymbol{\theta}^{*}_{(1)}, \boldsymbol{\theta}^*_{(2)}\Big)-\Delta_{R}-\Delta_{F}}{\ell\Big(\boldsymbol{\theta}^{*}_{(1)}, \boldsymbol{\theta}^*_{(2)}\Big)+\Delta_{F}}\leq\mathcal{T}_{y \mapsto x} \\
& \qquad \qquad \qquad \leq \frac{\ell\left(\widetilde{\boldsymbol{\theta}}^*_{(1)},\boldsymbol{0}\right)-\ell\Big(\boldsymbol{\theta}^{*}_{(1)}, \boldsymbol{\theta}^*_{(2)}\Big)+\Delta_{R}+\Delta_{F}}{\ell\Big(\boldsymbol{\theta}^{*}_{(1)}, \boldsymbol{\theta}^*_{(2)}\Big)-\Delta_{F}}.
\end{align}
Now, under the null hypothesis $H_{y\mapsto x,0}:\boldsymbol{\theta}^{{*}}_{(2)}=\boldsymbol{0}$,
 we have $\ell\left(\widetilde{\boldsymbol{\theta}}^*_{(1)},\boldsymbol{0}\right)=\ell\Big(\boldsymbol{\theta}^{*}_{(1)}, \boldsymbol{\theta}^*_{(2)}\Big)$ and {thus $\Delta_R$ in Proposition \ref{prop:reduced model deviation} can be chosen equal to $\Delta_F$ in Proposition \ref{prop:full model deviation}.} This implies:
\begin{alignat}{1}
\mathcal{T}_{y \mapsto x} & \leq\frac{\Delta_{R}+\Delta_{F}}{\ell\Big(\boldsymbol{\theta}^{*}_{(1)}, \boldsymbol{\theta}^*_{(2)}\Big)-\Delta_{F}}\leq\frac{2\Delta_{F}}{(\boldsymbol{\Sigma}_\epsilon)_{1,1}-\Delta_{N}-\Delta_{F}},\label{eq:ub null}
\end{alignat}
with application of \eqref{eq:conct 1}.

On the other hand, under a general alternative hypothesis $H_{y\mapsto x,0}:\boldsymbol{\theta}^*_{(2)}\neq\boldsymbol{0}$, \prettyref{eq:conct 2} can be used to show that: 
\begin{alignat}{1}
\nonumber \mathcal{T}_{y \mapsto x} &\geq\frac{\ell\left(\widetilde{\boldsymbol{\theta}}^*_{(1)},\boldsymbol{0}\right)-\ell\Big(\boldsymbol{\theta}^{*}_{(1)}, \boldsymbol{\theta}^*_{(2)}\Big)-\Delta_{R}-\Delta_{F}}{\ell\Big(\boldsymbol{\theta}^{*}_{(1)}, \boldsymbol{\theta}^*_{(2)}\Big)+\Delta_{F}}\\
&\geq\frac{D-\left(\Delta_{D}+\Delta_{R}+\Delta_{F}\right)}{(\boldsymbol{\Sigma}_\epsilon)_{1,1}+\Delta_{N}+\Delta_{F}}.\label{eq:lb non null}
\end{alignat}

From the upper and lower bounds in \prettyref{eq:ub null} and \prettyref{eq:lb non null}, it is possible to choose a threshold to correctly distinguish the two hypotheses, if: 
\begin{align}\frac{D-\left(\Delta_{D}+\Delta_{R}+\Delta_{F}\right)}{(\boldsymbol{\Sigma}_\epsilon)_{1,1}+\Delta_{N}+\Delta_{F}}>\frac{2\Delta_{F}}{(\boldsymbol{\Sigma}_\epsilon)_{1,1}-\Delta_{N}-\Delta_{F}},
\end{align}
which after rearrangement translates to: 
\begin{align}
D>\Delta_{D}+\Delta_{R}+\Delta_{F}\left(1+2\frac{(\boldsymbol{\Sigma}_\epsilon)_{1,1}+\Delta_{N}+\Delta_{F}}{(\boldsymbol{\Sigma}_\epsilon)_{1,1}-\left(\Delta_{N}+\Delta_{F}\right)}\right) & .\label{eq:thresholding_cond}
\end{align}
Next, we choose $\Delta_N = {(\boldsymbol{\Sigma}_\epsilon)_{1,1}}/{4}$. Then, assuming $\Delta_{F} \le {(\boldsymbol{\Sigma}_\epsilon)_{1,1}}/{4}$, the bound of \prettyref{eq:thresholding_cond} further simplifies to
\begin{align}
D>\Delta_{D}+\Delta_{R}+7\Delta_{F}. \label{eq:thresholding cond simplified}
\end{align}
\noindent {\bf Step 2.} 
Next, we assume that the following conditions hold:
\begin{cond}[Restricted eigenvalue (RE) condition] The symmetric matrix
$\widehat{\boldsymbol{\Sigma}}=\mathbf{X}^{\top}\mathbf{X}/n\in\mathbbm R^{2p\times2p}$
satisfies restricted eigenvalue condition with curvature $\alpha>0$
and tolerance $\tau\ge0$, i.e., $\widehat{\boldsymbol{\Sigma}}\sim\text{RE}(\alpha,\tau)$:
\begin{align}
\nonumber \boldsymbol{\phi}^{\top}\widehat{\boldsymbol{\Sigma}}\boldsymbol{\phi}\geq\alpha\Vert\boldsymbol{\phi}\Vert_2^{2}-\tau\Vert\boldsymbol{\phi}\Vert_{1}^{2},\hspace*{2mm}\forall\hspace*{2mm}\boldsymbol{\phi}\in\mathbbm R^{2p},
\end{align}
with 
\begin{align*}\tau := \frac{m-1}{m}\frac{\alpha}{32 k}\end{align*}
for some constant $m > 1$. \label{cond:RE}
\end{cond}
\begin{cond}[Deviation condition]
There exist deterministic functions $\mathbbm Q(\boldsymbol{\theta}^*,{\boldsymbol{\Sigma}}_\epsilon)$, $\mathbbm Q'(\boldsymbol{\theta}^*,{\boldsymbol{\Sigma}}_\epsilon)$ such that 
\begin{align*}
&\left\Vert \frac{1}{n}\mathbf{X}^{\top}({\mathbf{x}-\mathbf{X}\boldsymbol{\theta}^{*}})\right\Vert _{\infty}\leq \mathbbm Q(\boldsymbol{\theta}^*,{\boldsymbol{\Sigma}}_\epsilon)\sqrt{\frac{\log(2p)}{n}}, \\
&\left\Vert \frac{1}{n}\mathbf{X}_{(1)}^{\top}\left({\mathbf{x}-\mathbf{X}_{(1)}{{\widetilde{\boldsymbol{\theta}}}^*_{(1)}}}\right)\right\Vert _{\infty}\leq \mathbbm Q'(\boldsymbol{\theta}^*,{\boldsymbol{\Sigma}}_\epsilon)\sqrt{\frac{\log(2p)}{n}}.
\end{align*} \label{cond:DB}
\end{cond}
\noindent Under these conditions, we can use the expressions for the non-negative quantities, \(\Delta_{F}\), \(\Delta_{R}\) and \(\Delta_{D}\) derived in Propositions \ref{prop:full model deviation} and \ref{prop:reduced model deviation} (Appendix \ref{sec:prediction-error-analysis}) and \prettyref{lem:reducedmodel-fullmodel_error} (Appendix \ref{sec:Concentration-Lemmas}), respectively, to obtain the following bound on the right hand side of \prettyref{eq:thresholding cond simplified}:
\begin{align*}
\Delta_{D}+\Delta_{R}+7\Delta_{F} \leq& \, \mathscr{a}\sqrt{\frac{\log(2p)}{n}}\left\Vert {\boldsymbol{\theta}^*_{(2)}}\right\Vert_2^{2}\\
& +\mathscr{b}\sqrt{\frac{k\log(2p)}{n}}\left\Vert {\boldsymbol{\theta}^*_{(2)}}\right\Vert_2 + \mathscr{c}\frac{k\log(2p)}{n},
\end{align*}
where
\begin{align*}
\mathscr{a} & = \frac{\alpha}{27}\left(\left\Vert \mathbf{C}_{11}^{-1}\mathbf{C}_{12}\right\Vert_2^2 + 1\right),\\
\mathscr{b} & = \mathscr{A}\left[\left(32\sqrt{2m} + 73\right)\left\Vert \mathbf{C}_{11}^{-1}\mathbf{C}_{12}\right\Vert + 1\right],\\
\mathscr{c} & =\frac{16\mathscr{A}^{2}}{\alpha/m}\left(\frac{168}{m+1}+20\right),
\end{align*}
provided the LASSO problems are solved with {the} choice of $\lambda_{n}=4\mathscr{A}\sqrt{{\log(2p)}/{n}}$, for $\mathscr{A}$ satisfying (See Propositions \ref{prop:deviation condition} and \ref{prop:deviation condition-2} in Appendix \ref{sec:verif}):
\begin{align*}
\mathscr{A}\geq\max\left\{\mathbbm Q(\boldsymbol{\theta}^{*},\boldsymbol{\Sigma}_{\epsilon}),\mathbbm Q'(\boldsymbol{\theta}^{*},\boldsymbol{\Sigma}_{\epsilon})\right\}.
\end{align*}
Also, note that the assumption \(\Delta_{F} \le {(\boldsymbol{\Sigma}_\epsilon)_{1,1}}/{4}\) requires:
\begin{equation}
\frac{24}{m+1}\frac{k\lambda_{n}^{2}}{\alpha/m} \le \frac{(\boldsymbol{\Sigma}_\epsilon)_{1,1}}{4}.
\label{eq:delta F}
\end{equation}
and imposes an upper bound on $\lambda_n$. The implications of this upper bound are further discussed in \emph{Remark \ref{rem:1}} at the end of this section.

Next, we use the following lower bound on $D$:  $D\geq\widetilde{\Lambda}_{\min}\big\Vert {\boldsymbol{\theta}^*_{(2)}}\big\Vert_2^{2}$,
with $\widetilde{\Lambda}_{\min}:=\Lambda_{\min}\left(\mathbf{C}_{22}-\mathbf{C}_{21}\mathbf{C}_{11}^{-1}\mathbf{C}_{12}\right)$, which gives the following sufficient condition for inequality \prettyref{eq:thresholding cond simplified} to hold:
\begin{align}
\nonumber \tilde{\Lambda}_{\min}\left\Vert {\boldsymbol{\theta}^*_{(2)}}\right\Vert_2^{2}\geq & \, \mathscr{a}\sqrt{\frac{\log(2p)}{n}}\left\Vert {\boldsymbol{\theta}^*_{(2)}}\right\Vert_2^{2} +\mathscr{b}\sqrt{\frac{k\log(2p)}{n}}\left\Vert {\boldsymbol{\theta}^*_{(2)}}\right\Vert_2\\
& +\mathscr{c}\frac{k\log(2p)}{n}.\label{eq:cond 2}
\end{align}
By further requiring 
\(n \geq \mathscr{C}' \log(2p),\)
where
\begin{align*}
\mathscr{C}' = \left(\frac{2\alpha\left(\left\Vert \mathbf{C}_{11}^{-1}\mathbf{C}_{12}\right\Vert_2^2 + 1\right)}{27\widetilde{\Lambda}_{\min}}\right)^2,
\end{align*} 
we have $\widetilde{\Lambda}_{\min}-\mathscr{a}\sqrt{\log(2p)/n}\geq\widetilde{\Lambda}_{\min}/2$. The latter combined with \prettyref{eq:cond 2}, and an application of \prettyref{lem:useful-lemma} (Appendix \ref{sec:Concentration-Lemmas}) gives the sufficient condition:
\(
\big\Vert {\boldsymbol{\theta}^*_{(2)}}\big\Vert_2^{2} \geq \mathscr{B}{k\log(2p)}/{n}
\)
for unambiguous discrimination {of the LGC statistic under} the null and the \emph{local} alternative hypothesis $H^n_{y\mapsto x,0}: \big\Vert {\boldsymbol{\theta}^*_{(2)}}\big\Vert^{2}_2 \geq \mathscr{B}{k\log(2p)}/{n}$, as long as \(n\geq \max \{ \mathscr{C}', \mathscr{D}'k\} \log(2p)\),
where
\begin{align*}
\mathscr{B} := \left(\frac{4\mathscr{b}^{2}}{\widetilde{\Lambda}_{\min}^{2}}+\frac{4\mathscr{c}}{\widetilde{\Lambda}_{\min}}\right),
\text{ and } \mathscr{D}' := \frac{1536m}{m+1}\frac{\mathscr{A}^2}{\alpha(\boldsymbol{\Sigma}_\epsilon)_{1,1}}.
\end{align*}
Note that the condition $n \ge \mathscr{D}'k \log (2p)$ ensures the upper bound \prettyref{eq:delta F} on $\lambda_n$.

\noindent {\bf Step 3.} It only remains to provide a lower bound on the probability of the event that the proposed LGC statistic under the null and \emph{local} alternative hypotheses are correctly distinguishable by a threshold.
\prettyref{prop:RE condition} (Appendix \ref{sec:verif}) establishes that \prettyref{cond:RE} holds with probability at least
\begin{equation}
1-c_{1}\exp(-c_{2}n\min\{\zeta^{-2},1\}),
\end{equation}
if $n \ge C_0 \max\{\zeta^{2},1\}k\log(2p)$, for some constants $C_0, c_1, c_2$, and $\zeta$ $(>0)$.
Also, Propositions \ref{prop:deviation condition} and \ref{prop:deviation condition-2} (Appendix \ref{sec:verif}) establish that \prettyref{cond:DB} holds with probability at least
\begin{equation}
1 - \frac{d_{1}}{(2p)^{d_2}} - \frac{d'_{1}}{(2p)^{d'_2}},
\end{equation}
if $n \ge \max\{D_0,D_0'\}\log(2p)$, for some constants $d_1,d'_1,d_2$, $d'_2$, $D_0$, and $D'_0$ $(>0)$.

From \prettyref{lem:fullmodel_error} (Appendix \ref{sec:Concentration-Lemmas}), the first deviation bound \eqref{eq:full dev} holds with probability
\(1-2\exp\left(-{n}/{128}\right)\)
with the choice $\Delta_N = {(\boldsymbol{\Sigma}_\epsilon)_{1,1}}/{4}$. Lastly, from \prettyref{lem:reducedmodel-fullmodel_error} (Appendix \ref{sec:Concentration-Lemmas}), the deviation bound \eqref{eq:full dev} holds with probability
\(1-c_{3}\exp(-c_{4}n\min\{\zeta^{-2}\log(2p)/n,1\})\), under \prettyref{cond:DB}.

Combining the two steps, the claim of the theorem holds with probability at least
\begin{align}
\nonumber 1 &- 2\exp\left(-\frac{n}{128}\right) - c_{1}\exp(-c_{2}n\min\{\zeta^{-2},1\})\\
\nonumber & - c_{3}\exp\left(-c_{4}n\min\{\zeta^{-2}\max\{D_0, D_0^{'}\},1\}\right)\\
& - \frac{d_{1}}{(2p)^{d_2}} - \frac{d'_{1}}{(2p)^{d'_2}},
\label{eq:totalprob}
\end{align}
if $n \ge \max \{\mathscr{C}'', \mathscr{D}'' k\} \log (2p)$, where $\mathscr{D}'' := \max \{ \mathscr{D}', C_0 \max \{ \zeta^2,1\}\}$ and \(\mathscr{C}'' = \max \{\mathscr{C}',D_0,D_0'\} \).
Finally, this probability can be lower bounded by
\begin{align*} 
1 - K_1\exp\left(-n\bar{c}\right) - \frac{K_2}{p^{\bar{d}}},
\end{align*}
where
\begin{align*}
&\bar{c} = \min\left\{\frac{1}{128}, c_2, c_4, c_2\zeta^{-2}, c_4\zeta^{-2}\max\{D_0, D_0^{'}\}\right\},\\
&\bar{d} = \min\left\{d_2, d_2^'\right\},\\
& K_1 = 2 + c_1 + c_3, \hspace*{2em} K_2 = \frac{d_1}{2^{d_2}} + \frac{d'_1}{2^{d'_2}}.
\end{align*}
This concludes the proof of the main theorem.
\end{proof}

From the proof of \prettyref{thm:main-theorem}, it is apparent that large enough sample size $n$ is sufficient for the LASSO estimates under both the full and reduced models to be consistent given the choice of $\lambda_n$, which in turn allows us to control the prediction errors. Also, large enough $\|{\boldsymbol{\theta}^*_{(2)}}\|_2$ is sufficient to distinguish the null and alternative hypotheses. 

In order to establish the fundamental limit for separating the null and alternative hypotheses, however, we need to inspect the necessity of these conditions. From the proof of \prettyref{thm:main-theorem}, for any threshold $\mathscr{t}$, there exists a constant $\mathscr{A}'$, such that if $\lambda_n < 4 \mathscr{A}' \sqrt{\log(2p)/n}$, then $\mathcal{T}_{y \mapsto x} > \mathscr{t}$ with high probability, which results in a type I error. Thus, the choice of $\lambda_n$ in Theorem 1 is necessary to control false positives (See Remark \ref{rem:1} in \prettyref{sec:discussion}).

In what follows, we focus on the necessity of the lower bound on $\|{\boldsymbol{\theta}^*_{(2)}}\|_2$ and the condition on $n$ for detecting a Granger causal link. To prove this necessity result, we need a lower bound on the performance of the LASSO under the alternative hypothesis. To obtain a lower bound, we take the approach of \cite{goldenshluger2001nonasymptotic,kazemipour2017sampling} by constructing a family of BVAR processes with $k$-sparse parameters $\boldsymbol{\theta}_{(2)}$ for which the LASSO makes significant errors under the alternative hypothesis. The remaining key element of the proof is the Fano's inequality, which provides the required lower bound (See \prettyref{lem:ar_fano_ineq}):

\begin{thm}[Necessary Conditions]
\label{thm:ness cond} Suppose that the key assumptions \ref{assum:key} and \ref{assum:sparsity} hold. Then, there exists a constant $\widetilde{\mathscr{B}}$ for which the proposed LGC statistic $\mathcal{T}_{y \mapsto x}$ in \prettyref{thm:main-theorem} fails to reject the null hypothesis for any threshold $\mathscr{t}$ under the alternative hypothesis $H^n_{y\mapsto x,1} : \| {\boldsymbol{\theta}^*_{(2)}}\|_2^2 \le \widetilde{\mathscr{B}}{k\log p }/{n}$, with probability at least $1/2$, if {$n <  \widetilde{\mathscr{D}}' k \log p$}. The constants $\widetilde{\mathscr{B}}$ and $\widetilde{\mathscr{D}}'$ are explicitly given in the proof.
\end{thm}
\begin{proof}
Consider a class $\mathcal{Z}$ of BVAR processes with a fixed $1$-sparse $\boldsymbol{\theta}_{(1)}$ and $k$-sparse $\boldsymbol{\theta}_{(2)}$ over any subset $\mathcal{K} \subset \{1,2,\cdots,p\}$ satisfying $|\mathcal{K}| =k$, with elements given by
\begin{equation}
\label{eq:ar_minimax_params}
\left(\boldsymbol{\theta}_{(2)}\right)_{\ell} = \pm e^{-b}, \forall \ell \in \mathcal{K}.
\end{equation}
where $b$ is a constant to be chosen. We also assume, without loss of generality, that $\left(\boldsymbol{\theta}_{(1)}\right)_{1} = e^{-b}$ or $0$. Note that the vector $\boldsymbol{\theta} = [\boldsymbol{\theta}_{(1)}, \boldsymbol{\theta}_{(2)}]$ is $(k+1)$-sparse, but the following arguments can be repeated by redefining $k$ as $k-1$, pertaining to $k$-sparse parameter vectors. We also add the vector of all zeros to $\mathcal{Z}$. For a fixed $\mathcal{K}$, we have $2^{k+1}+1$ such parameters forming a subfamily $\mathcal{Z}_{\mathcal{K}}$. Consider the maximal collection of $\binom{p}{k}$ subsets $\mathcal{K}$ for which any two subsets differ in at least $k/4$ indices. The size of this collection can be identified by $A(p, \frac{k}{4}, k)$ in coding theory, where $A(n,d,w)$ represents the maximum size of a binary code of length $n$ with minimum distance $d$ and constant weight $w$ \cite{macwilliams1977theory}. It can be shown that
\begin{align*}
A(p, {\textstyle \frac{k}{4} }, k) \ge \frac{p^{\frac{7}{8}k - 1}}{k!},
\end{align*}
for large enough $p$ \cite[Theorem 6]{graham1980lower}. Also, by the Gilbert-Varshamov bound \cite{macwilliams1977theory}, there exists a subfamily $\mathcal{Z}_{\mathcal{K}}^\star \subset \mathcal{Z}_{\mathcal{K}}$, of cardinality $|\mathcal{Z}_{\mathcal{K}}^\star| \geq 2^{\lfloor k/8 \rfloor + 1}+1$, such that any two distinct $\boldsymbol{\theta}^i,\boldsymbol{\theta}^j \in \mathcal{Z}_{{\mathcal{K}}}^\star$ differ in at least $k/16$ elements. Thus for $\boldsymbol{\theta}^i, \boldsymbol{\theta}^j \in \mathcal{Z}^\star :=  \resizebox{!}{0.3cm}{$\displaystyle \bigcup_{\mathcal{K}}$} \mathcal{Z}^\star_{\mathcal{K}}$, we have
\vspace{-.2cm}
\begin{equation}
\label{eq:ar_theta_lowerbd}
\left \|\boldsymbol{\theta}^i-\boldsymbol{\theta}^j \right \|_2 \geq \frac{1}{4} \sqrt{k} e^{-b} =: h,
\end{equation}
and $\vert \mathcal{Z}^\star \vert \ge \frac{p^{\frac{7}{8}k - 1}}{k!} 2^{\lfloor k/8 \rfloor + 1} + 1$. For an arbitrary estimate $\widehat{\boldsymbol{\theta}} =: [\widehat{\boldsymbol{\theta}}_{(1)}, \widehat{\boldsymbol{\theta}}_{(2)}]$, consider the testing problem between the $\frac{p^{\frac{7}{8}k - 1}}{k!} 2^{\lfloor k/8 \rfloor + 1} + 1$ hypotheses $H_j: \boldsymbol{\theta}^*_{(2)}=\boldsymbol{\theta}_{(2)}^j \in \mathcal{Z}^\star$, using the minimum distance decoding strategy. By construction, if $k > 16$, we have:
\begin{align}
\label{eq:ar_sup_lowerbd}
\sup_{\mathcal{Z}^\star}\, \mathbb{P}\left[\left\|\widehat{\boldsymbol{\theta}}-{\boldsymbol{\theta}}^* \right \|_2 \geq \frac{h}{2} \right]= \sup_{j}\, \mathbb{P}\left[\widehat{\boldsymbol{\theta}} \neq {\boldsymbol{\theta}^j}\Big|H_j \right],
\end{align}
given that $\|\boldsymbol{\theta}_{(1)}^i\|_2 = e^{-b}$ or $0$. Let $f_{\boldsymbol{\theta}^j}$ denote joint probability distribution of $\{x_k\}_{k=1}^n$ conditioned on $\{x_k\}_{k=-p+1}^0$ {and $\{y_k\}_{k=-p+1}^n$} under the hypothesis $H_j$. Fano's inequality given in Lemma \ref{lem:ar_fano_ineq} provides the following lower bound on the probability in \prettyref{eq:ar_sup_lowerbd}:
\begin{equation}
\sup_j \mathbb{P}\left[\widehat{\boldsymbol{\theta}}\neq {\boldsymbol{\theta}^j}|H_j\right] \geq 1 - \frac{\xi+\log2}{\log_2 (|\mathcal{Z}^\star|-1)},
\end{equation}
where $\xi$ is an upper bound on the Kullback-Leibler divergence between $f_{\boldsymbol{\theta}^i}$ and $f_{\boldsymbol{\theta}^j}$ for any $i \neq j$, i.e., $\mathcal{D}_{\sf KL}\left(f_{\boldsymbol{\theta}^i}\Big\|f_{\boldsymbol{\theta}^j}\right) \le \xi, \forall i \neq j$. Given Gaussian innovations in the BVAR process, for $i \neq j$, we have
\begin{align}
\label{eq:ar_kl_bound}
\notag
\mathcal{D}_{\sf KL}\left(f_{\boldsymbol{\theta}^i}\Big\|f_{\boldsymbol{\theta}^j}\right) &\leq \sup_{i\neq j} \mathbb{E}\left[\log \frac{f_{\boldsymbol{\theta}^i}}{f_{\boldsymbol{\theta}^j}} \Big|H_i\right]\\
\notag & = \sup_{i \neq j} \mathbb{E}\left[-\frac{1}{2(\Sigma_{\epsilon})_{1,1}} \left( \left\| \mathbf{x} - \mathbf{X} \boldsymbol{\theta}^i \right \|^2 \right. \right. \\
\notag & \qquad \qquad \qquad \qquad \quad \, \left. \left. -  \left\| \mathbf{x} - \mathbf{X} \boldsymbol{\theta}^j \right \|^2\right) \Big| H_i\right]\\
\notag & \leq \sup_{i \neq j} \frac{1}{2(\Sigma_{\epsilon})_{1,1}} \mathbb{E}\left[ \left\| \mathbf{X} \left(\boldsymbol{\theta}^i-\boldsymbol{\theta}^j\right)\right\|^2 \Big| H_i \right]\\
\notag & = \frac{n}{2(\Sigma_{\epsilon})_{1,1}}\sup_{i \neq j} \left(\boldsymbol{\theta}^i-\boldsymbol{\theta}^j\right)^\top \mathbf{C} \left(\boldsymbol{\theta}^i-\boldsymbol{\theta}^j\right)\\
\notag & \leq \frac{n {\Lambda_{\max}\left(\mathbf{C}\right)}}{2(\Sigma_{\epsilon})_{1,1}}\sup_{i \neq j} \left \|\boldsymbol{\theta}^i-\boldsymbol{\theta}^j\right\|_2^2\\
& \leq \frac{2n (k+1) {\Lambda_{\max}\left(\mathbf{C}\right)}e^{-2b}}{(\Sigma_{\epsilon})_{1,1}} =: \xi.
\end{align}
Using \prettyref{lem:ar_fano_ineq}, Eqs. \prettyref{eq:ar_theta_lowerbd}, \prettyref{eq:ar_sup_lowerbd} and \prettyref{eq:ar_kl_bound} yield:
\begin{equation}
\sup_{\mathcal{Z}^\star}\mathbb{P}\left[\left\|\widehat{\boldsymbol{\theta}}\!-\!{\boldsymbol{\theta}}^* \right \|_2 \! \geq \! \frac{h}{2} \right] \! \geq \! 1-\frac{2\left( \frac{2n (k+1) {\Lambda_{\max}\left(\mathbf{C}\right)}e^{-2b}}{(\Sigma_{\epsilon})_{1,1}}\!+\!\log 2\right)}{(k+1) \log p}.
\end{equation}
for $p$ large enough so that $\log p \ge \frac{8 \log k - \log 2}{3 - \frac{12}{k}}$. Assuming $(k+1) \log p \ge 8 \log 2$ and choosing $b \ge \frac{1}{2} \log \left(\frac{16 {\Lambda_{\max}\left(\mathbf{C}\right)}}{(\Sigma_{\epsilon})_{1,1} } \right) + \frac{1}{2}\log \left(n/\log p\right)$,  we get:
\begin{align}
\sup_{\mathcal{Z}^\star}\mathbb{P}\left[\left\|\widehat{\boldsymbol{\theta}}-{\boldsymbol{\theta}}^* \right \|_2 \geq \frac{h}{2} \right] \geq \frac{1}{2},
\end{align}
Thus, there exist a choice in $\mathcal{Z}^*$ for which the error of an arbitrary estimator is larger than $h/2$ with probability at least $1/2$. By this choice of $\boldsymbol{\theta}^*$, the condition on $b$ gives $\|\boldsymbol{\theta}_{(2)}^* \|_2 \le \|\boldsymbol{\theta}^* \|_2 \le  \widetilde{\mathscr{B}} {k \log p}/{n}$, where $\widetilde{\mathscr{B}} = \frac{(\Sigma_{\epsilon})_{1,1}}{16 {\Lambda_{\max}\left(\mathbf{C}\right)}}$.

Next, under the alternative hypothesis, for a given threshold $\mathscr{t}$, the null hypothesis is not rejected if
\begin{equation}\label{eq:rejection}
\frac{\ell\Big(\widehat{\widetilde{\boldsymbol{\theta}}}_{(1)},\boldsymbol{0}\Big)}{\ell\Big(\widehat{\boldsymbol{\theta}}_{(1)}, \widehat{\boldsymbol{\theta}}_{(2)}\Big)} \le 1+ \mathscr{t},
\end{equation}
Using \prettyref{lem:normal_conc}, \prettyref{lem:reducedmodel-fullmodel_error}, and  \prettyref{prop:reduced model deviation}, the numerator on the left hand side of \prettyref{eq:rejection} can be upper bounded by $(\boldsymbol{\Sigma}_{\epsilon})_{1,1} + D + \Delta_D + \Delta_R + \Delta_N$. Thus, \prettyref{eq:rejection} holds if
\begin{equation}\label{eq:lower_fano}
\ell\Big(\widehat{\boldsymbol{\theta}}_{(1)}, \widehat{\boldsymbol{\theta}}_{(2)}\Big) \ge \frac{(\boldsymbol{\Sigma}_{\epsilon})_{1,1} + D + \Delta_D + \Delta_R + \Delta_N}{(1+\mathscr{t})}.
\end{equation}
For large enough $n$ and $p$, the numerator in \prettyref{eq:lower_fano} can be further upper bounded by $5 (\boldsymbol{\Sigma})_{1,1}$ with high probability. Next, we need to lower bound the left hand side of \prettyref{eq:lower_fano}. We  have $\ell\Big(\widehat{\boldsymbol{\theta}}_{(1)}, \widehat{\boldsymbol{\theta}}_{(2)}\Big) = \frac{1}{n} \|\mathbf{X}(\widehat{\boldsymbol{\theta}} - \boldsymbol{\theta}^*) \|^2_2 \ge \frac{\alpha(m+1)}{2m} \| \widehat{\boldsymbol{\theta}} - \boldsymbol{\theta}^*\|_2^2$, given the RE condition and the assumption on $\tau$ in \prettyref{prop:full model deviation}. Under the condition that $\|\widehat{\boldsymbol{\theta}}-{\boldsymbol{\theta}}^* \|_2 \geq \frac{h}{2}$, which holds with probability at least $1/2$, if 
\begin{equation}\label{eq:condition_h}
\frac{\alpha(m+1)}{8m} h^2 > \frac{5(\boldsymbol{\Sigma}_{\epsilon})_{1,1}}{(1+\mathscr{t})},
\end{equation}
then the inequality \prettyref{eq:lower_fano} holds, which implies that the null hypothesis is not rejected. Replacing $h$ from \prettyref{eq:ar_theta_lowerbd} into the inequality \prettyref{eq:condition_h} gives the condition $n < \widetilde{\mathscr{D}}' k \log p$, where $\widetilde{\mathscr{D}}' : = \frac{\alpha (m+1)(1 + \mathscr{t})}{{{10240}m\Lambda_{\max}\left(\mathbf{C}\right)}}$.
\end{proof}

It is noteworthy that while the result of Theorem \ref{thm:ness cond} imposes necessary conditions for reliable GC detection via LGC, it is a weak converse result as it does not quantify necessary conditions that hold for \emph{all} test statistics.

\subsection{False Positive Error Analysis}

Theorem \ref{thm:main-theorem} provides sufficient conditions that implicitly guarantee high sensitivity and specificity in detecting GC influences using the LGC statistic. To make these implications more explicit, by slightly weakening the sufficient condition on the sample size, $n\geq \mathscr{D}{''}k \log(2p)$ in \prettyref{thm:main-theorem}, we arrive at the following corollary to Theorem \ref{thm:main-theorem} that upper bounds the false positive error probability:
\setcounter{thm}{0}
\begin{cor}[False Positive Error Probability] \label{cor:false detection probability}
Suppose that assumptions \ref{assum:key} and \ref{assum:sparsity} as well as conditions \ref{cond:RE} and \ref{cond:DB} in the proof of \prettyref{thm:main-theorem} hold.
Then, for some arbitrary \(t_0 > 0\), thresholding the proposed LGC statistic $\mathcal{T}_{y \mapsto x}$ at a level \(\mathscr{t} >0\) for rejecting the null hypothesis results in a false positive error probability of at most $2\exp \left ( \displaystyle -{n}\Big/{8\left(1 + \gamma t_0 \sqrt{{\log(2p)}/{n}}\right)^2} \right)$
with \(\gamma:={(\mathscr{t}+2)}/{\mathscr{t}}\), if 
\begin{align*}
n \geq 2 \max\{\widetilde{\mathscr{D}}^2k^2/t_0^2, 2\widetilde{\mathscr{D}}\gamma k\} \log(2p),
\end{align*} 
for some constant $\widetilde{\mathscr{D}}$ that is explicitly given in proof. 
\end{cor}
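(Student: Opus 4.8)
The plan is to reduce the false-positive event to a one-sided lower-tail event for the empirical residual variance evaluated at the true parameters, $\ell^{\ast} := \ell(\boldsymbol{\theta}^{*}_{(1)}, \boldsymbol{\theta}^{*}_{(2)}) = \tfrac{1}{n}\Vert\boldsymbol{\epsilon}\Vert^{2}$, and then to control that event with the same Gaussian concentration that underlies \prettyref{lem:fullmodel_error}. Write $\sigma^{2} := (\boldsymbol{\Sigma}_\epsilon)_{1,1}$. A false positive is the event that the null $H_{y\mapsto x,0}$ holds yet $\mathcal{T}_{y\mapsto x} > \mathscr{t}$. Under the null we have $\ell(\widetilde{\boldsymbol{\theta}}^{*}_{(1)},\boldsymbol{0}) = \ell^{\ast}$ and $\Delta_{R} = \Delta_{F}$, so the first inequality in the derivation of \prettyref{eq:ub null}---which uses only the deterministic deviation bounds guaranteed by \prettyref{cond:RE} and \prettyref{cond:DB}---gives $\mathcal{T}_{y\mapsto x} \le 2\Delta_{F}/(\ell^{\ast}-\Delta_{F})$. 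Requiring the right-hand side to exceed $\mathscr{t}$ and solving for $\ell^{\ast}$, with $\gamma = (\mathscr{t}+2)/\mathscr{t}$, shows that $\{\mathcal{T}_{y\mapsto x} > \mathscr{t}\} \subseteq \{\ell^{\ast} < \gamma\Delta_{F}\}$. It therefore suffices to upper bound $\mathbb{P}(\ell^{\ast} < \gamma\Delta_{F})$.

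Next I would substitute the explicit full-model deviation bound from \prettyref{prop:full model deviation}, which under the prescribed $\lambda_{n}$ takes the form $\Delta_{F} \le \widetilde{\mathscr{D}}\,\sigma^{2}\,k\log(2p)/n$ once the curvature $\alpha$, the constant $m$, the scale $\mathscr{A}$, and the noise level are collected into the single constant $\widetilde{\mathscr{D}}$ (concretely $\widetilde{\mathscr{D}} = \mathscr{D}'/4$, mirroring \prettyref{eq:delta F}). The second branch of the sample-size hypothesis, $n \ge 4\widetilde{\mathscr{D}}\gamma k\log(2p)$, then forces $\gamma\Delta_{F} \le \tfrac{1}{4}\sigma^{2}$, which keeps the lower tail nondegenerate by ensuring $\sigma^{2} - \gamma\Delta_{F} \ge \tfrac{3}{4}\sigma^{2} > 0$. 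Introducing the auxiliary quantity $r := \gamma\Delta_{F}\big/\big(\sigma^{2} - \gamma\Delta_{F}\big)$, the first branch $n \ge 2\widetilde{\mathscr{D}}^{2}k^{2}\log(2p)/t_{0}^{2}$ is precisely the condition under which $r \le \tfrac{4}{3}\gamma\widetilde{\mathscr{D}}k\log(2p)/n \le \gamma t_{0}\sqrt{\log(2p)/n}$; this is the step at which the free parameter $t_{0}$ enters.

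The final step is to invoke the lower-tail Gaussian concentration behind \prettyref{lem:fullmodel_error}, namely $\mathbb{P}\big(\ell^{\ast} < \sigma^{2}(1-t)\big) \le 2\exp(-nt^{2}/8)$ for $t \in (0,1)$, with the choice $t = 1 - \gamma\Delta_{F}/\sigma^{2}$ so that the threshold is exactly $\gamma\Delta_{F}$. A direct computation gives $t = 1/(1+r)$, whence $\mathbb{P}(\ell^{\ast} < \gamma\Delta_{F}) \le 2\exp\big(-n/[8(1+r)^{2}]\big)$; substituting $r \le \gamma t_{0}\sqrt{\log(2p)/n}$ and monotonicity of the exponent in $r$ yields the claimed bound. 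The constant $1/8$ is consistent with the $1-2\exp(-n/128)$ guarantee quoted for \prettyref{lem:fullmodel_error}, recovered here at $t = 1/4$.

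I expect the main obstacle to be the bookkeeping in the second step: one must propagate the precise constant from \prettyref{prop:full model deviation} through the definition of $r$ and verify that the two sample-size branches simultaneously deliver $\gamma\Delta_{F} \le \tfrac{1}{4}\sigma^{2}$ and $r \le \gamma t_{0}\sqrt{\log(2p)/n}$ with the stated $\widetilde{\mathscr{D}}$. It is also worth highlighting the trade-off governed by $t_{0}$: taking $t_{0}$ large relaxes the quadratic-in-$k$ first branch until the linear-in-$k$ branch $n \ge 4\widetilde{\mathscr{D}}\gamma k\log(2p)$ (of the same order as in \prettyref{thm:main-theorem}) becomes binding, at the cost of a weaker exponent through the factor $(1+\gamma t_{0}\sqrt{\log(2p)/n})^{2}$; this is the sense in which the sufficient condition of \prettyref{thm:main-theorem} is \emph{slightly weakened}.
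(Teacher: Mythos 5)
Your proposal is correct and follows essentially the same route as the paper's own proof: both reduce the false-positive event to the failure of the concentration of $\ell\big(\boldsymbol{\theta}^*_{(1)},\boldsymbol{\theta}^*_{(2)}\big)$ around $(\boldsymbol{\Sigma}_\epsilon)_{1,1}$ (\prettyref{lem:fullmodel_error}), calibrate the threshold through $\gamma=(\mathscr{t}+2)/\mathscr{t}$, and obtain the two-branch sample-size condition after substituting $t=t_0\sqrt{\log(2p)/n}$ — your event inclusion $\{\mathcal{T}_{y\mapsto x}>\mathscr{t}\}\subseteq\{\ell^*<\gamma\Delta_F\}$ with the ratio $r$ is just the contrapositive form of the paper's $(s,t)$-parametrization with $s=1+\gamma t$, and your bound $r\le\gamma t_0\sqrt{\log(2p)/n}$ plays the role of the paper's application of \prettyref{lem:useful-lemma} to \prettyref{eq:cor_sampling}. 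The only discrepancy is your constant $\widetilde{\mathscr{D}}=\mathscr{D}'/4=\frac{384m\mathscr{A}^2}{(m+1)\alpha(\boldsymbol{\Sigma}_\epsilon)_{1,1}}$ versus the paper's stated $\frac{42m}{(m+1)\alpha}\frac{16\mathscr{A}^2}{(\boldsymbol{\Sigma}_\epsilon)_{1,1}}$, which traces to an apparent transposition of the factor $24$ (from $\Delta_F$ in \prettyref{prop:full model deviation}) into $42$ in the paper; your value is the one consistent with the prescribed $\lambda_n$.
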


\begin{proof} First, we note that under \prettyref{cond:RE} and \prettyref{cond:DB} there exist some real numbers $s,t>0$ such that
\begin{subequations}
\begin{align} 
\Big|\ell\Big(\boldsymbol{\theta}^*_{(1)}, \boldsymbol{\theta}^*_{(2)}\Big) - (\boldsymbol{\Sigma}_\epsilon)_{1,1}\Big| &\leq (\boldsymbol{\Sigma}_\epsilon)_{1,1}/{s}, \\
\Delta_{F}&\leq (\boldsymbol{\Sigma}_\epsilon)_{1,1} t/{s}.
\end{align}
\end{subequations}
This allows us to set a problem independent threshold, since the upper bound on $\mathcal{T}_{y \mapsto x}$ under the null hypothesis given in \prettyref{eq:ub null} simplifies to:
\begin{align}
\mathcal{T}_{y \mapsto x} \leq\frac{{2t}/{s}}{1-{(1+t)}/{s}} \label{eq:ub-null(1)}.
\end{align}
Now, given any threshold \(\mathscr{t} > 0\), we can solve for $s$ in terms of $t$ and $\mathscr{t}$ as:
\begin{align}\label{eq:s-t}
s = 1 + \frac{2+\mathscr{t}}{\mathscr{t}}t = 1 + \gamma t.
\end{align}

\noindent To ensure $\Delta_{F}\leq (\boldsymbol{\Sigma}_\epsilon)_{1,1} t/{s}$, we need the following to hold:
\begin{align} 
\label{eq:cor_sampling}
\frac{(m+1)\alpha}{42 m}\frac{(\boldsymbol{\Sigma}_\epsilon)_{1,1}}{16\mathscr{A}^2}&\frac{n}{k\log(2p)} - \frac{1}{t} - \gamma \geq 0.
\end{align}
On the other hand, using the expression for $s$ from \prettyref{eq:s-t} and invoking \prettyref{lem:fullmodel_error} (Appendix \ref{sec:Concentration-Lemmas}) yield the following statement that can be used to bound the false positive error probability:
\begin{align}
\notag &\mathbbm{P}\left[\Big|\ell\Big(\boldsymbol{\theta}^*_{(1)}, \boldsymbol{\theta}^*_{(2)}\Big) - (\boldsymbol{\Sigma}_\epsilon)_{1,1}\Big|\geq\frac{(\boldsymbol{\Sigma}_\epsilon)_{1,1}}{s}\right]\\
& \qquad \qquad \qquad \qquad \qquad \leq 2\exp\left(-\frac{n}{8\left(1 + \gamma t\right)^2}\right).
\end{align}

\noindent With a choice of \(t = t_0\sqrt{\log(2p)/n}\) for any \(t_0>0\), applying \prettyref{lem:useful-lemma} (Appendix \ref{sec:Concentration-Lemmas}) on \prettyref{eq:cor_sampling} then gives the sampling requirement of
\begin{align*}
n \geq \left(\widetilde{\mathscr{D}}/t_0\right)^2 {k^2}\log(2p) + 2\widetilde{\mathscr{D}}\gamma k\log(2p),
\end{align*}  
and the false positive error probability given in the corollary, where
\begin{align*}
\widetilde{\mathscr{D}}=\frac{42 m}{(m+1)\alpha}\frac{16\mathscr{A}^2}{(\boldsymbol{\Sigma}_\epsilon)_{1,1}}.
\end{align*}
This concludes the proof of the corollary.

Note that by further choosing $t_0 \le \sqrt{\widetilde{\mathscr{D}}k/ 2\gamma}$, one can simplify the sampling requirement and the corresponding upper bound on false positive error probability to \(n \ge 2 \left({\widetilde{\mathscr{D}}}\big/{t_0}\right)^2 k^2 \log(2p)\) and 
\(
2\exp\left(-{n}/{\left(1 + \sqrt{8}\right)^2}\right),
\)
respectively. The later inequality follows from the fact \(n \ge 4\widetilde{\mathscr{D}}k\gamma \log(2p)\) which is also guaranteed by choice of \(t_{0}\).
\end{proof}

\subsection{Power Analysis}

Intuitively, detecting a GC effect arising from a small cross-regression coefficient $\boldsymbol{\theta}^*_{(2)}$ is challenging, and often requires a long observation horizon to be identified.
\prettyref{thm:main-theorem} quantifies this intuition via a lower bound on the norm of the cross-regression coefficients in terms of the spectral properties of the process {(via \(\mathscr{B}\)), sparsity $k$,} sample size $n$ and model order $p$.
In particular, as \(\Vert \boldsymbol{\theta}^*_{(2)}\Vert_2 \rightarrow 0\), a scaling of $n = \mathcal{O}(k \log (2p) / \Vert \boldsymbol{\theta}^*_{(2)}\Vert_2^2)$ maintains the {sensitivity/specificity} of \(\mathcal{T}_{y\mapsto x}\) with high probability. This lower bound on $\Vert \boldsymbol{\theta}^*_{(2)}\Vert_2$ exhibits the same scaling as that in the thresholding procedure of \cite{Basu2015NGC} (i.e., the scaling of the LASSO estimation error), as well as the classical scaling of \cite{Wald1943,Davidson1970} (up to logarithmic factors), and we thus believe is not significantly improvable.

In Corollary \ref{cor:false detection probability}, we formalized the foregoing argument in terms of the test specificity. In order to similarly quantify the test sensitivity, we consider the effect size $\mathscr{r} := \frac{1}{k} \| {\boldsymbol{\theta}^*_{(2)}}\|^2$ pertaining to the cross-regression coefficients. The power analysis for detecting a GC effect using LGC is given in the following corollary to Theorem \ref{thm:main-theorem}:

\begin{cor}[Power Analysis] \label{cor:power}
Suppose that assumptions \ref{assum:key} and \ref{assum:sparsity} as well as conditions \ref{cond:RE} and \ref{cond:DB} in the proof of \prettyref{thm:main-theorem} hold.
Then, for an effect size $\mathscr{r} := \frac{1}{k} \| {\boldsymbol{\theta}^*_{(2)}}\|^2$ and an arbitrary constant $0 < \psi < 1$, thresholding the proposed LGC statistic $\mathcal{T}_{y \mapsto x}$ at a level \(\mathscr{t} >0\) gives a test power of at least $1 - 2\exp\left(-\frac{n}{128}\right) - 2\exp(-cn\min\{\zeta^{-2}\log(2p)/n,1\})$, if $\mathscr{r} \ge \left(\frac{\mathscr{r}_0}{\psi^2} + \frac{\mathscr{r}_1 + \mathscr{r}_2 \mathscr{t}}{\psi}\right) \frac{\log(2p)}{n}$ and 
\begin{align*}
n\geq \max \left \{ \frac{\mathscr{C}''}{(1-\psi)^2}, \mathscr{D}'k \right \} \log(2p),
\end{align*} 
for the same constant $\mathscr{D}'$ in Theorem \ref{thm:main-theorem} and some constants $\mathscr{r}_0$, $\mathscr{r}_1$, $\mathscr{r}_2$, and $\mathscr{C}''$ that are explicitly given in proof.  
\end{cor}

\begin{proof} 
Starting from \prettyref{eq:lb non null}, we seek sufficient conditions to ensure
\begin{align}\label{eq:t_condition_1}
\mathscr{t} < \frac{D-\left(\Delta_{D}+\Delta_{R}+\Delta_{F}\right)}{(\boldsymbol{\Sigma}_\epsilon)_{1,1}+\Delta_{N}+\Delta_{F}}
\end{align}
with high probability for a given threshold $\mathscr{t}$. Recall that:
\begin{align}
\nonumber \Delta_{F} &:= \frac{24}{m+1}\frac{k\lambda_{n}^{2}}{\alpha/m},\\
\nonumber \Delta_{R} &:= 20\frac{k\lambda_{n}^{2}}{\alpha/m} +\left(8\sqrt{2m}+18\right)\lambda_{n}\left\Vert \widetilde{\boldsymbol{\theta}}^*_{(1)J^{c}}\right\Vert _{1},\\
\nonumber D &:={\boldsymbol{\theta}^*_{(2)}}^{\top}(\mathbf{C}_{22}-\mathbf{C}_{21}\mathbf{C}_{11}^{-1}\mathbf{C}_{12}){\boldsymbol{\theta}^*_{(2)}},
\end{align}
and
\begin{align*}
\Delta_D &:= \mathbbm Q(\boldsymbol{\theta}^{*},\boldsymbol{\Sigma}_\epsilon)\sqrt{\frac{\log(2p)}{n}} \left\Vert\left[-\mathbf{C}_{11}^{-1}\mathbf{C}_{12}{\boldsymbol{\theta}^*_{(2)}};{\boldsymbol{\theta}^*_{(2)}}\right]\right\Vert_1\\
&\quad + \frac{\alpha}{27}\left\Vert\left[-\mathbf{C}_{11}^{-1}\mathbf{C}_{12}{\boldsymbol{\theta}^*_{(2)}};{\boldsymbol{\theta}^*_{(2)}}\right]\right\Vert_2^{2},
\end{align*}
with a choice of $\lambda_{n}=4\mathscr{A}\sqrt{{\log(2p)}/{n}}$. By rearranging \prettyref{eq:t_condition_1}, we equivalently seek conditions to ensure:
\begin{equation}\label{eq:t_condition_2}
D > \Delta_D + \Delta_R + (1 + \mathscr{t})\Delta_F + \mathscr{t} \Delta_N + \mathscr{t} (\boldsymbol{\Sigma}_\epsilon)_{1,1}.
\end{equation}
As in the proof of Theorem \ref{thm:main-theorem}, we have:
\begin{align*}
\Delta_{D}+\Delta_{R} \leq& \, \mathscr{a}\sqrt{\frac{\log(2p)}{n}}\left\Vert {\boldsymbol{\theta}^*_{(2)}}\right\Vert_2^{2}+\mathscr{b}\sqrt{\frac{k\log(2p)}{n}}\left\Vert {\boldsymbol{\theta}^*_{(2)}}\right\Vert_2\\
& +\mathscr{c}'\frac{k\log(2p)}{n},
\end{align*}
where
\begin{align*}
\mathscr{a} &= \frac{\alpha}{27}\left(\left\Vert \mathbf{C}_{11}^{-1}\mathbf{C}_{12}\right\Vert_2^2 + 1\right),\\
\mathscr{b} &= \mathscr{A}\left[\left(32\sqrt{2m} + 73\right)\left\Vert \mathbf{C}_{11}^{-1}\mathbf{C}_{12}\right\Vert + 1\right],\\
\mathscr{c}' &=\frac{320m\mathscr{A}^{2}}{\alpha},
\end{align*}
and $D\geq\widetilde{\Lambda}_{\min}\big\Vert {\boldsymbol{\theta}^*_{(2)}}\big\Vert_2^{2}$, with $\widetilde{\Lambda}_{\min}:=\Lambda_{\min}\left(\mathbf{C}_{22}-\mathbf{C}_{21}\mathbf{C}_{11}^{-1}\mathbf{C}_{12}\right)$. Choosing $\Delta_N = {(\boldsymbol{\Sigma}_\epsilon)_{1,1}}/{4}$ and assuming that the upper bound \prettyref{eq:delta F} on $\lambda_n$ is met with equality, we have $\Delta_{F} = {(\boldsymbol{\Sigma}_\epsilon)_{1,1}}/{4}$ and hence \prettyref{eq:t_condition_2} holds if:
\begin{align}\label{eq:t_condition_3}
\nonumber \widetilde{\Lambda}_{\min}\left\Vert {\boldsymbol{\theta}^*_{(2)}}\right\Vert_2^{2} \geq& \, \mathscr{a}\sqrt{\frac{\log(2p)}{n}}\left\Vert {\boldsymbol{\theta}^*_{(2)}}\right\Vert_2^{2}+\mathscr{b}\sqrt{\frac{k\log(2p)}{n}}\left\Vert {\boldsymbol{\theta}^*_{(2)}}\right\Vert_2\\
& +\mathscr{c}''\frac{k\log(2p)}{n},
\end{align}
where $\mathscr{c}'' := \frac{16m\mathscr{A}^{2}}{\alpha}\left(\frac{24(6 \mathscr{t} + 1)}{m+1}+20\right) $. By an application of \prettyref{lem:useful-lemma}, the inequality in \prettyref{eq:t_condition_3} holds if
\begin{align}\label{eq:inequality_power}
\nonumber \left\Vert {\boldsymbol{\theta}^*_{(2)}}\right\Vert^{2}_2 \geq& \!\left(\!\!\frac{\mathscr{b}^{2}}{\left(\widetilde{\Lambda}_{\min}\!-\!\mathscr{a}\sqrt{\frac{\log(2p)}{n}}\right) ^2}+\frac{2\mathscr{c}''}{\left(\widetilde{\Lambda}_{\min}\!-\!\mathscr{a}\sqrt{\frac{\log(2p)}{n}}\right)}\!\!\right)\\
& \times \frac{k\log(2p)}{n}.
\end{align}
Given that $\| {\boldsymbol{\theta}^*_{(2)}}\|_2^2 = k \mathscr{r}$ and using the condition $n\geq \frac{\mathscr{C}''}{(1-\psi)^2} \log(2p)$ with
$\mathscr{C}'' := {\mathscr{a}^2}/{\widetilde{\Lambda}_{\min}^2}$,
{the following bound on the effect size $\mathscr{r}$ ensures the inequality in \prettyref{eq:inequality_power}}:
\begin{align*}
\mathscr{r} \ge \left(\frac{\mathscr{r}_0}{\psi^2} + \frac{\mathscr{r}_1 + \mathscr{r}_2 \mathscr{t}}{\psi}\right) \frac{\log(2p)}{n},
\end{align*}
where
\begin{align*}
\mathscr{r}_0 &:= \frac{\mathscr{b}^2}{\widetilde{\Lambda}_{\min}^2},\\
\mathscr{r}_1 &:= \frac{32 m \mathscr{A}^2}{\alpha\widetilde{\Lambda}_{\min}}\left( \frac{24}{m+1} + 20\right),\\
\mathscr{r}_2 &:= \frac{4608 m \mathscr{A}^2}{\alpha (m+1)\widetilde{\Lambda}_{\min}}.
\end{align*}

Finally, assuming that Conditions \ref{cond:RE} and \ref{cond:DB} hold, the deviation bound for the full model with $\Delta_N = {(\boldsymbol{\Sigma}_\epsilon)_{1,1}}/{4}$ holds with probability at least 
\(1-2\exp\left(-{n}/{128}\right)\) and the deviation bound under the reduced model holds with probability at least
\(1-2\exp(-cn\min\{\zeta^{-2}\log(2p)/n,1\})\), with the constant $\zeta$ in \prettyref{cond:DB}. Combining the two, the test power is at least
\begin{align*}
1 &- 2\exp\left(-\frac{n}{128}\right) - 2\exp(-cn\min\{\zeta^{-2}\log(2p)/n,1\}),
\end{align*}
This concludes the proof of the corollary.
\end{proof}

\subsection{Data-driven Choice of the Threshold for Hypothesis Testing}\label{sec:datadriven}
While Theorem \ref{thm:main-theorem} establishes the existence of a threshold $\mathscr{t}$ that can separate the null and alternative hypotheses, a key practical factor in the utility of LGC is to be able to set this threshold in a data-driven fashion. A careful inspection of the proofs of Theorem \ref{thm:main-theorem} and Corollaries \ref{cor:false detection probability} and \ref{cor:power} indeed allows us to estimate these thresholds in practice. Here, we consider two practical methods to choose the test threshold in a data-driven fashion: 

\subsubsection*{Method 1}
As in most practical applications of LASSO, we assume that $n$ is sufficiently large and satisfies the sufficient lower bound of Corollary \ref{cor:false detection probability} for a nominal choice of the free parameter $t_0$. Then, by setting $t_0 = 1$, we have the following proposition for the first method to choose the test threshold and the corresponding type I and II error characterization:

\setcounter{thm}{0}
\begin{prop}
Suppose that assumptions \ref{assum:key} and \ref{assum:sparsity} as well as conditions \ref{cond:RE} and \ref{cond:DB} in the proof of \prettyref{thm:main-theorem} hold. Given a target false positive probability $\pi_{\sf F}$, the threshold $\mathscr{t}$ can be chosen as:
\begin{align}\label{eq:t}
\mathscr{t} = 2 \bigg/ \left( \frac{n}{\sqrt{8 \log \left(\sfrac{2}{\pi_{\sf F}}\right) \log (2p)}} - \sqrt{\frac{n}{\log (2p)}} - 1\right).
\end{align}
Furthermore, if the effect size is large enough such that $\displaystyle \mathscr{r} \ge \left(\frac{\mathscr{r}_0}{\psi^2} + \frac{\mathscr{r}_1 + \mathscr{r}_2 \mathscr{t}}{\psi}\right) \frac{\log(2p)}{n}$, the type II error probability $\beta$ satisfies:
\begin{equation}
\beta \le 2\exp\left(-\frac{n}{128}\right) + 2\exp(-cn\min\{\zeta^{-2}\log(2p)/n,1\}).
\end{equation}
\end{prop}

\begin{proof}
This result is a specialization of Corollaries \ref{cor:false detection probability} and \ref{cor:power} to the specific choice of the threshold given in \prettyref{eq:t}.
\end{proof}

We used this method of setting the threshold in our simulation study in \prettyref{sec:Simulation-Studies} (Fig. \ref{fig:effect-of-varying-n-p}, dashed traces) as well as in our real data validation in \prettyref{sec:real data validation}. 

\subsubsection*{Method 2}
If the assumption of Method 1, that $n$ is large enough to satisfy the sufficient lower bound in \prettyref{cor:false detection probability}, is not valid, we need to choose the free parameter $t_0$ to be compatible with the lower bound on $n$ in Theorem \ref{thm:main-theorem}. To this end, we need to estimate three key parameters in a data-driven fashion: the sparsity level $k$, the noise variance $(\boldsymbol\Sigma_{\epsilon})_{1,1}$, and the curvature parameter $\alpha$ in the RE condition. We hereafter assume that $\tau$ in the RE condition is small enough so that $m = 2$. We also assume that $\lambda_n$ is empirically chosen via cross-validation. Also, let $\mathscr{M}(\mathbf{X})$ denote the mutual coherence of the matrix $\mathbf{X}$. We have the following proposition that provides the second method to choose the threshold $\mathscr{t}$ and the corresponding type I and II error analysis:

\begin{prop}
Suppose that assumptions \ref{assum:key} and \ref{assum:sparsity} hold and the sufficient conditions of Theorem \ref{thm:main-theorem} are met, but with $\mathscr{B}$ replaced by another constant $\mathscr{B}'$ (given in the proof). Given a target false positive probability $\pi_{\sf F}$, let $\pi_0$ be such that $\pi_{\sf F} \le \pi_0 + K'_1\exp\left(-n\bar{c}'\right) + \frac{K'_2}{p^{\bar{d}'}}$, for some known constants $K'_1, \bar{c}', K'_2$, and $\bar{d}'$. Then, the threshold $\mathscr{t}$ can be chosen as:
\begin{equation}
\mathscr{t} = 2 \bigg/ \left( \frac{1}{t_0} \left (\frac{n}{\sqrt{8 \log \left(\sfrac{2}{\pi_{0}}\right) \log (2p)}} - \sqrt{\frac{n}{\log (2p)}}\right) - 1\right),
\end{equation}
where 
\begin{equation}
t_0 = \frac{28 \widehat{k} \lambda_n^2}{\kappa\widehat{\alpha}(\widehat{\boldsymbol{\Sigma}}_\epsilon)_{1,1}}   \sqrt{\frac{\log(2p)}{8 \log \left(\sfrac{2}{\pi_{0}} \right)}},
\end{equation}
with
\begin{itemize} 
\item $\widehat{k} := \left\Vert \widehat{\boldsymbol{\theta}}^\delta \right\Vert_0$, where $\widehat{\boldsymbol{\theta}}^\delta$ is the estimate $\widehat{\boldsymbol{\theta}}$ thresholded at a level $\delta = \mathcal{O}(\lambda_n/\alpha)$,
\item $(\widehat{\boldsymbol{\Sigma}}_{\epsilon})_{1,1} = \frac{1}{n}\left\Vert\mathbf{x} - \mathbf{X}\widehat{\boldsymbol{\theta}}\right\Vert^2$, 
\item $\displaystyle \widehat{\alpha} := \frac{4}{3}\min_{0\leq i \leq2p} \left(\frac{1}{n}\mathbf{X}^\top\mathbf{X}\right)_{i,i} \Big(1 -\mathscr{M}(\mathbf{X}) \textstyle{\frac{\widehat{k}}{2\kappa}} \Big)$, where $\mathscr{M}(\mathbf{X})$ is the mutual coherence of $\mathbf{X}$,
\end{itemize}
and $\kappa$ is a constant given in the proof. Furthermore, if the effect size is large enough such that it satisfies the condition in Corollary \ref{cor:power}, the type II error probability $\beta$ satisfies:
\begin{equation}
\beta \le K''_1\exp\left(-n\bar{c}''\right) + \frac{K''_2}{p^{\bar{d}''}},
\end{equation}
where $K''_1, \bar{c}'', K''_2$, and $\bar{d}''$ are constants explicitly given in the proof.
\end{prop}

\begin{proof}
Given a choice of $t_0$, choosing the threshold $\mathscr{t}$ as:
\begin{equation}\label{eq:thresh}
\mathscr{t} = 2 \bigg/ \left( \frac{1}{t_0} \left (\frac{n}{\sqrt{8 \log \left(\sfrac{2}{\pi_{0}}\right) \log (2p)}} - \sqrt{\frac{n}{\log (2p)}}\right) - 1\right),
\end{equation}
ensures that the false positive error expression in Corollary \ref{cor:false detection probability} is bounded by $\pi_0$. On the other hand, for \prettyref{eq:cor_sampling} to hold, we need to have
\begin{equation}
\gamma \widetilde{\mathscr{D}} k \frac{\log (2p)}{n} \leq 1 - \frac{\widetilde{\mathscr{D}}k}{t_0} \sqrt{\frac{ \log (2p)}{n}}.
\end{equation}
We choose $t_0$ to satisfy the foregoing bound with equality, which along with the threshold in \prettyref{eq:thresh} leads to:
\begin{equation}
t_0 = k \widetilde{\mathscr{D}} \sqrt{\frac{\log(2p)}{8 \log \left(\sfrac{2}{\pi_0} \right)}} = k \frac{42 m}{(m+1)\alpha}\frac{\lambda_n^2}{(\boldsymbol{\Sigma}_\epsilon)_{1,1}}  \sqrt{\frac{\log(2p)}{8 \log \left(\sfrac{2}{\pi_{0}} \right)}}.
\end{equation}
For this choice of $t_0$, one needs to empirically estimate the sparsity level $k$, the noise variance $(\boldsymbol\Sigma_{\epsilon})_{1,1}$, and the curvature parameter $\alpha$ in the RE condition. For $k$, we use the estimator
$\widehat{k} = \left\Vert \widehat{\boldsymbol{\theta}}^\delta \right\Vert_0$,
where $\widehat{\boldsymbol{\theta}}^\delta$ is the estimate $\widehat{\boldsymbol{\theta}}$ thresholded at a level $\delta > 0$. It can be shown that if $\delta$ is chosen as $\mathcal{O}(\lambda_n/\alpha)$, then $\widehat{k} \ge (1 - {\mathscr{E}}) k$ for some constant $\mathscr{E}$, given Conditions \ref{cond:RE} and \ref{cond:DB} \cite[Theorem 7.3]{van2011adaptive}.

Next, we estimate $(\boldsymbol{\Sigma}_\epsilon)_{1,1}$ as 
\begin{equation}
(\widehat{\boldsymbol{\Sigma}}_{\epsilon})_{1,1} = \frac{1}{n}\left\Vert\mathbf{x} - \mathbf{X}\widehat{\boldsymbol{\theta}}\right\Vert^2,
\end{equation}
which with probability at least $1 - 2 \exp(-n/32) - c_{1}\exp(-c_{2}n\min\{\zeta^{-2},1\}) - {\frac{d_1}{(2p)^{d_2}}}$, satisfies
\begin{equation}
| (\widehat{\boldsymbol{\Sigma}}_{\epsilon})_{1,1} - (\boldsymbol{\Sigma}_{\epsilon})_{1,1}| \le \frac{(\boldsymbol{\Sigma}_{\epsilon})_{1,1}}{2} + \Delta_F,
\end{equation}
where $\Delta_{F} = \frac{24}{m+1}\frac{k\lambda_{n}^{2}}{\alpha/m}$.

Estimating the curvature parameter $\alpha$, however, is not trivial. We thus replace it with a lower bound using the mutual coherence of the design matrix, $\mathscr{M}(\mathbf{X})$, following a result from \cite{bickelSimultaneousAnalysisLasso2009} that holds with high probability:
\begin{align}
{\alpha} &\ge \frac{2m}{m+1}\min_{0\leq i \leq2p} \left(\frac{1}{n}\mathbf{X}^\top\mathbf{X}\right)_{i,i} \Big(1 - \mathscr{M}(\mathbf{X}) k\Big).
\end{align}
Recalling the sufficient condition $\| {\boldsymbol{\theta}^*_{(2)}}\|_2^2 \geq \mathscr{B}{k\log(2p)}/{n}$ in the statement of Theorem \ref{thm:main-theorem}, where $\mathscr{B}$ linearly depends on $\frac{m}{(m+1)\alpha}$, we weaken the lower bound on $\| {\boldsymbol{\theta}^*_{(2)}}\|_2^2$ as $\mathscr{B}'{k\log(2p)}/{n}$, where $\mathscr{B}'$ is the same as $\mathscr{B}$ with $\alpha$ replaced by the foregoing lower bound. The estimate $\widehat{\alpha}$ is obtained by using the lower bound with an estimate of $\frac{\widehat{k}}{1-\mathscr{E}}$ for $k$. Note that this lower bound on $\alpha$ may be too conservative in practice and may result in reducing the test power.

We next set $m=2$. For large enough $n$ and $p$, we also have $\Delta_F \le {(\boldsymbol{\Sigma}_{\epsilon})_{1,1}}/{2}$, so that the estimate of $t_0$, namely $\widehat{t}_0$, using the empirical estimates of $k$ and $(\boldsymbol{\Sigma}_\epsilon)_{1,1}$ satisfies
\begin{align}
\frac{\widehat{t}_0}{t_0} {\le} \frac{1 - \mathscr{E}}{2} := \kappa.
\end{align}
Thus, the false positive error can be bounded as
\begin{align}
\nonumber \pi_{\sf F} \le & \, \pi_0 + 2 \exp(-n/32) + c_{1}\exp(-c_{2}n\min\{\zeta^{-2},1\})\\
& + c_{3}\exp\left(-c_{4}n\min\{\zeta^{-2}\max\{D_0, D_0^{'}\},1\}\right) + {\frac{d_1}{(2p)^{d_2}}}.
\end{align}
Letting 
\begin{align*}
\bar{c}' &= \min\left\{\frac{1}{32}, c_2, c_4, c_2\zeta^{-2}, c_4\zeta^{-2}\max\{D_0, D_0^{'}\}\right\},\\
\bar{d}' &= d_2, \quad K'_1 = 2 + c_1 + c_3, \quad K'_2 = \frac{d_1}{2^{d_2}},
\end{align*}
provides the assumed bound on $\pi_F$.

To bound the type II error, we use the result of Corollary \ref{cor:power}, but need to ensure that the deviation condition for the reduced model also holds. Therefore, we have:
\begin{align}
\nonumber \beta \le& \, \, 2\exp\left(-{n}/{128}\right) + 2\exp\left(-{n}/{32}\right)\\
\nonumber &+ 2\exp(-cn\min\{\zeta^{-2}\log(2p)/n,1\})\\
\nonumber &+ c_{3}\exp\left(-c_{4}n\min\{\zeta^{-2}\max\{D_0, D_0^{'}\},1\}\right)\\
&+ {\frac{d_1}{(2p)^{d_2}}} + {\frac{d'_1}{(2p)^{d'_2}}}.
\end{align}
Letting 
\begin{align*}
\bar{c}'' & = \min\left\{\frac{1}{128}, c, c_4, c\zeta^{-2}, c_4\zeta^{-2}\max\{D_0, D_0^{'}\}\right\},\\
\bar{d}'' & = \min\{d_2,d'_2\}, \quad  K''_1 = 6 + c_3, \quad K''_2 = \frac{d_1}{2^{d_2}} + \frac{d'_1}{2^{d'_2}},
\end{align*}
establishes the bound on $\beta$. This concludes the proof of the proposition.
\end{proof}

While the choice of $t_0 = 1$ in Method 1 is quite convenient and results in a threshold independent of the details of the BVAR process, it may be too conservative in practice and may result in low test power if $n$ is not sufficiently large. Method 2, however, selects $t_0$ based on empirical estimates of process-dependent parameters, and is thus expected to provide a higher test power. We will further evaluate the type I and II error for these two methods for setting the test threshold using simulation studies in \prettyref{sec:threhsold_comparison} and compare their performance.
\subsection{Discussion of the Results}\label{sec:discussion}
To discuss the implications of these results, several remarks are in order: 
\begin{remark}\label{rem:1}
Unlike the conventional estimation error results of the LASSO that specify a lower bound on the regularization parameter $\lambda_n$, \prettyref{thm:main-theorem} prescribes a fixed choice of $\lambda_n$ for both the \emph{full} and \emph{reduced} estimation problems in \prettyref{eq:Lasso-full&reducedmoedel}.
This is due to an interesting phenomenon revealed by our analysis: while conventional analyses of LASSO focus on the estimation performance of a single model and thus provide a lower bound on $\lambda_n$, in our framework we have two competing models (i.e., \emph{full} and \emph{reduced}) which need to be distinguishable under the null and alternative hypotheses in order to reliably detect the GC influences.
The latter imposes an \emph{upper} bound on $\lambda_n$.
As such, there is a suitable interval for choosing $\lambda_n$ that results in both consistent estimation and discrimination of the two models.
We note that we have presented our theoretical results under the assumption that a single $\lambda_n$ within the aforementioned interval is used for both models, for the simplicity of analysis. The strategy of using the `same $\lambda_n$ for both full and reduced models' is also appealing from a practical perspective, since the empirical methods for choosing $\lambda_n$ generally involve solving the problems for multiple values of $\lambda_n$ to evaluate a selection criterion and to pick the \emph{optimal} $\lambda_n$. This process is typically computationally intensive and performing it only for the full model saves processing time. For example, the user may select the \emph{optimal} $\lambda_n$ via $k$-fold cross-validation for the full model which leads to smallest out of sample error across different folds (i.e., best predictive performance), and then use the resulting value of $\lambda_n$ for both the full and reduced models to estimate the VAR parameters and prediction error variance using the whole data, thus avoiding extra computational costs of cross-validation.
\end{remark}
\begin{remark}
\prettyref{cor:false detection probability} bounds the false positive error probability, i.e., type I error rate, for a simple thresholding {scheme} for detecting GC influences from \(\mathcal{T}_{y \mapsto x}\), under a slightly weakened sufficient condition on \(n\), i.e., $n =\mathcal{O} (k^{2}\log(2p))$ instead of $n = \mathcal{O}(k \log(2p))$.
This non-asymptotic result provides a principled guideline for choosing a threshold that controls the false positive error rate, as shown in \prettyref{sec:datadriven}.
As such, this {result} extends the conventional statistical testing framework based on the asymptotics of log-likelihood ratio statistic using OLS to the non-asymptotic setting using the LASSO.
This result can further be utilized in the assignment of \textit{p}-values (i.e., the false positive error probability when the observed LGC statistic is used as the threshold), as is commonly done in the conventional OLS-based setting based on the asymptotic $\chi^{2}$ distribution under the null hypothesis. 
\end{remark}
\begin{remark}[]
We have presented our results for a BVAR model in order to parallel the classical GC analysis. Our results can be extended to the general MVAR setting by using the \emph{conditional} notion of Geweke \cite{geweke1984lineardep} in a natural fashion, given that \prettyref{cond:RE} and \prettyref{cond:DB} readily generalize to this setting. To elaborate on this point, consider a case where the BVAR$(p)$ model in \prettyref{eq:true-MVAR(p)} is augmented by $(d-2)$ other time series to result in a general $d$-dimensional MVAR$(p)$ setting. 
Following the formulation of \prettyref{sec:unifying}, the conditional LGC statistic can be defined as:
\begin{align}
{\mathcal{T}}_{y\mapsto x}:=\frac{\ell\Big(\widehat{\widetilde{\boldsymbol{\theta}}}_{(1)},\boldsymbol{0}, \widehat{\widetilde{\boldsymbol{\theta}}}_{(3)}, \cdots, \widehat{\widetilde{\boldsymbol{\theta}}}_{(d)}\Big)}{\ell\Big(\widehat{\boldsymbol{\theta}}_{(1)}, \widehat{\boldsymbol{\theta}}_{(2)}, \widehat{\boldsymbol{\theta}}_{(3)}, \cdots, \widehat{\boldsymbol{\theta}}_{(d)}\Big)} - 1,
\label{eq:-cond-T-statistic}
\end{align}%
where $\widehat{\widetilde{\boldsymbol{\theta}}}_{(i)}$ and $\widehat{\boldsymbol{\theta}}_{(i)}$ denote the MVAR parameters from the $i^{\sf th}$ process to the first time series, under the \emph{reduced} and \emph{full} models, respectively. 
Then, using a similar procedure as in the proof of \prettyref{thm:main-theorem} and \prettyref{cor:false detection probability}, the results can be extended to this setting, by replacing the various occurrences of $2p$ by $dp$ and by adopting slightly different constants.
\end{remark}
\begin{remark}[]
The constants in the proofs of Theorems \ref{thm:main-theorem} and \ref{thm:ness cond} and Corollaries \ref{cor:false detection probability} and \ref{cor:power} solely depend on the joint spectrum of processes \(x_t, y_t\) as well as some absolute constants. As an illustrative example, by assuming $\boldsymbol{\Sigma}_{\epsilon}=0.01 \mathbf{I}$, $\mu_{\max}(\mathbf{A}) = 0.9$, $\mu_{\min}(\mathbf{A}) = \mu_{\min}(\breve{\mathbf{A}}) = 0.01$, $\widetilde{\Lambda}_{\min} = 0.7$, $\|\mathbf{C}_{11}^{-1}\mathbf{C}_{12}\|_2 = 0.5$, $\|[\mathbf{C}_{11}^{-1}\mathbf{C}_{12}; \mathbf{I}] {\boldsymbol{\theta}^*_{(2)}} \|_2 = 1.5$,
$m=8$, $d_0 =5.2\times 10^{-4}$, $d_0^' =4.2\times 10^{-4}$, $D_0 = 100$, $C_0 = 10^{-6}$, and $c = 0.02$, the key constants in \prettyref{thm:main-theorem} take the following numerical values:
$\mathscr{A} = 10^{-3}$, $\mathscr{B} = 5.136$, $\mathscr{C}'=7.41\times10^{-4}$, $\mathscr{D}=24.38$, $K_1 = 6$, $K_2=6$, $\bar{c}=2.06\times10^{-4}$ and $d = 1$. These translate to $\lambda_n = 10^{-3}\sqrt{{\log{(2p)}}/{n}}$, a requirement of $n > \max\{100, 24.38k\}\log(2p)$, local alternative hypotheses satisfying $\Vert \boldsymbol{\theta}^*_{(2)}\Vert_2^2 > 5.136k{\log{(2p)}}/{n}$, and failure probability $< 6\exp{\left(-2.06\times10^{-6}n\right)}+ {6}/{p}$. Similarly, for \prettyref{cor:false detection probability}, we get $\widetilde{\mathscr{D}} = 10.67$, translating to a sample size requirement of
$n > 2 \max\left\{ 28.46 k^2, 396k\right\}\log(2p)$
(with $t_0 = 2$ and $\mathscr{t}=0.114$). {The} potentially large numerical values of some of these constants suggest that the non-asymptotic advantage may come with large values of \(n\) and \(p\).
\end{remark}
\section{Application to Simulated and Experimentally-Recorded Data\label{sec:experimental validation}}
In this section, we examine our theoretical results through application to simulated and real data, and by comparing the performance of classical OLS-based GC and the proposed LGC statistic in detecting GC influences. We use the fast implementation in \cite{Golsdtein2009} to solve the LASSO problems. Unless otherwise stated, the regularization parameter $\lambda_{n}$ is chosen via five-fold cross-validation performed over the \emph{full} model, with the same $\lambda_{n}$ used for the \emph{reduced} model.

\subsection{Simulation Studies\label{sec:Simulation-Studies}}
We simulated three time series $x_{t},y_{t},z_{t}$ according to the sparse MVAR$(11)$ model:
\begin{align*}
x_{t} = &-0.67x_{t-1}+0.2x_{t-5}-0.1x_{t-11}+0.05z_{t-3}+\nu_{1,t},\\
y_{t} = &-0.62y_{t-1}+0.1y_{t-5}-0.2y_{t-11}-0.1x_{t-2}-0.1x_{t-3},\\
        &+ 0.5x_{t-11}-0.001z_{t-4}-0.004z_{t-5}+\sqrt{0.6}\nu_{2,t}\\
z_{t} = &-0.9025z_{t-2}+\nu_{3,t}.
\end{align*}
where $\nu_{i,t} \sim \mathcal{N}(0,1)$, i.i.d. for $i=1,2,3$. In this model, $x_t$ has a direct GC influence on $y_t$, but there is no GC influence from $y_t$ to $x_t$. The \emph{latent} process $z_t$, however, influences both $x_t$ and $y_t$ (\prettyref{fig:generative-model-and-lambdan-effect}(a)). As such, the correlated process noise components $\epsilon_t$ and $\epsilon'_t$ in \prettyref{eq:true-MVAR(p)} are modeled as $0.05z_{t-3}+\nu_{1,t}$ and $-0.001z_{t-4}-0.004z_{t-5}+\sqrt{0.6}\nu_{2,t}$, respectively. As shown in \prettyref{fig:generative-model-and-lambdan-effect}(b), removing $z_t$ from the analysis indeed induces a false (i.e., indirect) GC influence from $y_t$ to $x_t$. We performed two sets of numerical experiments to evaluate the effects of $\lambda_n$, $n$, and $p$ on the identification of GC influences between $x_t$ and $y_t$ based on $\mathcal{T}_{y \mapsto x}$ {and $\mathcal{T}_{x \mapsto y}$}:

\begin{figure*}[t!]
  \begin{center}
  \includegraphics[width=0.95\linewidth]{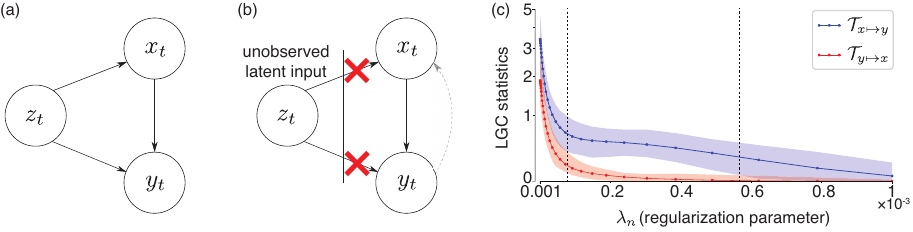}
  \end{center}
  \vspace{-3mm}
  \caption[LGC Simulation Results]{Simulation Results. (a) Ground truth GC pattern.  
  (b) Estimation setup, in which $z_t$ is latent and thus introduces a spurious GC link (dashed gray arrow) from $y_t$ to $x_t$. (c) Effect of $\lambda_{n}$ on the LGC statistics for $n=250, p=100$. The LGC statistics ${\mathcal{T}}_{y\mapsto x}$ (red) and ${\mathcal{T}}_{x\mapsto y}$ (blue) are separable for a suitable range of $\lambda_n$, marked by the dashed vertical lines (colored hulls show the range of LGC over \num{30} realizations).
  \label{fig:generative-model-and-lambdan-effect}}
\end{figure*}

\subsubsection[Evaluating the Effect of lambda_n]{Evaluating the Effect of $\lambda_n$}
\prettyref{fig:generative-model-and-lambdan-effect}(c) shows the LGC statistics $\mathcal{T}_{y \mapsto x}$ (red) and $\mathcal{T}_{x \mapsto y}$ (blue) for $n=250$ and $p=100$, obtained by varying $\lambda_n$ in the interval $[10^{-6}, 10^{-3}]$ uniformly in the log-scale. The dotted lines and colored hulls represent the average and range of the values, respectively, over $30$ realizations. As discussed in \emph{Remark \ref{rem:1}}, there is an evident range of $\lambda_{n}$ that provides a meaningful separation between ${\mathcal{T}}_{y\mapsto x}$ and ${\mathcal{T}}_{x\mapsto y}$, which is marked by the dashed vertical lines in {\prettyref{fig:generative-model-and-lambdan-effect}(c).

\subsubsection[Evaluating the Effect of the Sample Size n]{Evaluating the Effect of the Sample Size $n$}
We fixed a model order of $p=100$ and varied $n$ uniformly in the interval $[100,1000]$. \prettyref{fig:effect-of-varying-n-p}(a) and (b) show the resulting LGC statistics $\mathcal{T}_{y \mapsto x}$ and $\mathcal{T}_{x \mapsto y}$ corresponding to the LASSO and OLS ($\lambda_n = 0$), respectively. In \prettyref{fig:effect-of-varying-n-p}(a), we also plotted the threshold $\mathscr{t}$ corresponding to a false positive probability of $0.01$, according to Method 1 in \prettyref{sec:datadriven} (dashed line). As $n$ grows larger, the ranges of the two LGC statistics are saliently separated. The proposed thresholding rule of \prettyref{cor:false detection probability} is also able to correctly identify the true GC effects for $n \ge 250$. 

The OLS results shown in \prettyref{fig:effect-of-varying-n-p}(b), however, require much larger values of $n$ to be stable, whereas the LGC statistic provided by the LASSO (\prettyref{fig:effect-of-varying-n-p}(a)) are stable even for $n < 2p$. In addition, OLS requires $n\ge 400$ for the ranges of the GC measures to be distinguishable.
\subsubsection[Evaluating the Effect of the Model Order p]{Evaluating the Effect of the Model Order $p$} 
Finally, we fixed $n=300$ and varied $p$ in the interval $[10,300]$ uniformly in the log-scale. \prettyref{fig:effect-of-varying-n-p}(c) and (d) show the corresponding GC measures for the LASSO and OLS, respectively, along with the threshold $\mathscr{t}$ corresponding to a false positive probability of $0.01$. For $p \ll n$, the LASSO and OLS exhibit similar performance. But, the OLS-based GC measures become unstable for $p \approx n$, whereas those of the LASSO remain stable throughout. The LGC statistics also remain saliently separable over a wider range of $p$ for the LASSO, as compared to their OLS counterparts.
\subsubsection{Comparison with Existing LASSO-based Methods}
\label{sec:Simulation-Studies-comp}
We used the same setting as in \prettyref{sec:Simulation-Studies} to compare the performance of LGC with two existing LASSO-based methods, namely the confidence interval (CI)-based LASSO and truncating LASSO (TLASSO) GC detection. For the CI-based LASSO, we used de-biasing via node-wise regression \cite{van2014asymptotically} and the confidence interval construction technique of \cite{javanmard2018debiasing}: if the confidence interval of at least one of the cross-regression coefficients does not include zero (after Bonferroni correction), we identify it as a GC link. We used the Bonferroni correction for its simplicity and common usage in the applications of GC \cite{wu2021granger,shojaie2014inferring}. It is however possible to use other multiple comparison correction schemes such as the Benjamini-Hochberg procedure \cite{benjamini1995controlling}. The TLASSO, as a graphical LASSO-based method, uses the truncating LASSO penalty to automatically determine the order of the VAR models, and thus performs model simplification by reducing the number of covariates to control false discoveries \cite{Shojaie2010GGC}.  

\prettyref{fig:ROC} shows the receiver operating characteristic (ROC) plot for varying sample sizes, $n \in  \{ 200, 350, 500, 750,$ $1000, 1250, 1500\}$, computed from $200$ realization of the same process in \prettyref{sec:Simulation-Studies}. The marker sizes are proportional to $n$, for visual convenience. We fixed the BVAR order $p$ at $50$ for LGC and CI-based LASSO, while TLASSO automatically determined the BVAR order ($p \le 50$). For LGC, we considered a threshold for false positive error rate of $0.1$ based on Method 1 in \prettyref{sec:datadriven}, and accordingly set a confidence level of $90\%$ for the CI-based LASSO and a false negative rate of $0.1$ for TLASSO. The TLASSO is the most conservative of the three methods in terms of controlling the false positive error, which comes at the cost of low test power. On the other hand, the CI-based LASSO exhibits a high test power, but at cost of increasing false positive errors. LGC, however, adheres to the middle ground between these two extremes by striking a reasonable balance between true and false positive error rates. Note that this simulation study was intentionally designed to evaluate the performance of LGC in presence of weak GC influences. As expected, when the GC influence gets stronger, all three methods exhibit similar performance.

\begin{figure*}[t!]
\begin{center}
\includegraphics[width=0.85\linewidth]{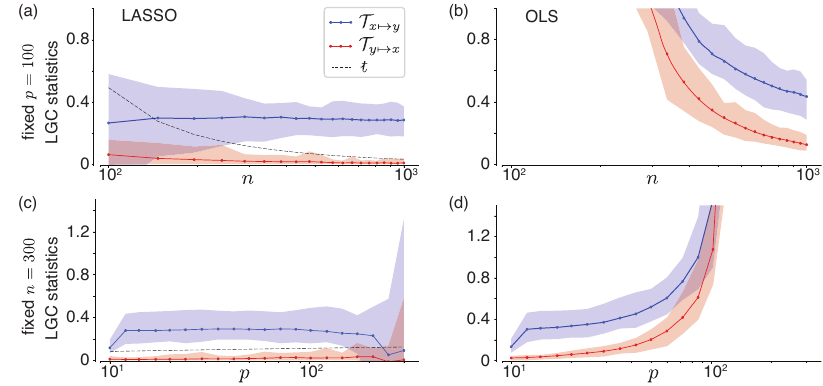}
\end{center}
\caption[LGC Simulation Results (continued)]{Simulation Results (continued). LASSO-based (left) and OLS-based {($\lambda_n = 0$, right)} {LGC} statistics ${\mathcal{T}}_{y\mapsto x}$ (red) and ${\mathcal{T}}_{x\mapsto y}$ (blue) obtained by varying $n$ for fixed $p=100$ (top panels (a) and (b)) and varying the model order $p$ for fixed $n=300$ (bottom panels (c) and (d)). The dashed lines in panels (a) and (c) show the threshold $\mathscr{t}$ at a false positive error level of $0.01$ (colored hulls show the range LGC over \num{30} realizations).
\label{fig:effect-of-varying-n-p} }
\end{figure*}

\begin{figure}[t!]
\begin{center}
\includegraphics[width=0.95\columnwidth]{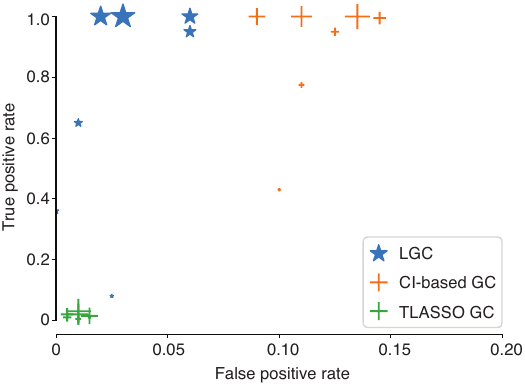}
\end{center}
\caption[]{{Comparing the performance of LGC, CI-based LASSO, and TLASSO for GC identification. True positive and false positive error rates are computed from $200$ realization of the BVAR process in \prettyref{sec:Simulation-Studies} with $p=50$ by varying $n \in \{ 200, 350, 500, 750, 1000, 1250, 1500 \}$. The marker sizes are proportional to $n$.
}\label{fig:ROC} }
\end{figure}

\vspace{-2mm}
\subsection{Comparing the Data-Driven Methods for Setting the Test Threshold}\label{sec:threhsold_comparison}

In the foregoing simulation studies, we used Method 1 in \prettyref{sec:datadriven} to set the test threshold due to its convenient form. In order to compare the two data-driven methods for setting the test threshold given in \prettyref{sec:datadriven}, we consider a slightly modified version of the simulation setting in \prettyref{sec:Simulation-Studies}:
\begin{align*}
x_{t} = &-0.1x_{t-1}+0.2x_{t-5}-0.1x_{t-11}+0.05z_{t-3}+\nu_{1,t},\\
y_{t}= &-0.1y_{t-1}-a\times0.1x_{t-2} + a\times0.5x_{t-11}-0.001z_{t-4}\\
\quad &-0.004z_{t-5}+\sqrt{0.6}\nu_{2,t},\\
z_{t} = &-0.9025z_{t-2}+\nu_{3,t}.
\end{align*}

In this modified setting, the sparsity level is reduced to $k=3$, and the leading VAR coefficients are decreased, so that the estimate of the curvature parameter $\alpha$ using the mutual coherence is reliable for small values of $n$ and $p$. The multiplier $0 \le a \le 1$ is chosen to control the effect size of the GC link. Similar to the previous case, we consider $p=50$, and use a range of $n \in \{ 100, 150, 200, 250, 300, 350\}$ to closely examine the role of the effect size in the performance of the two methods. To speed up simulations, we tuned $\lambda_n$ for $n=100$ via 2-fold cross-validation, and used the scaling $\sqrt{100/n}$ to obtain the subsequent values of $\lambda_n$.

\begin{figure*}[t!]
\begin{center}
\includegraphics[width=0.9\linewidth]{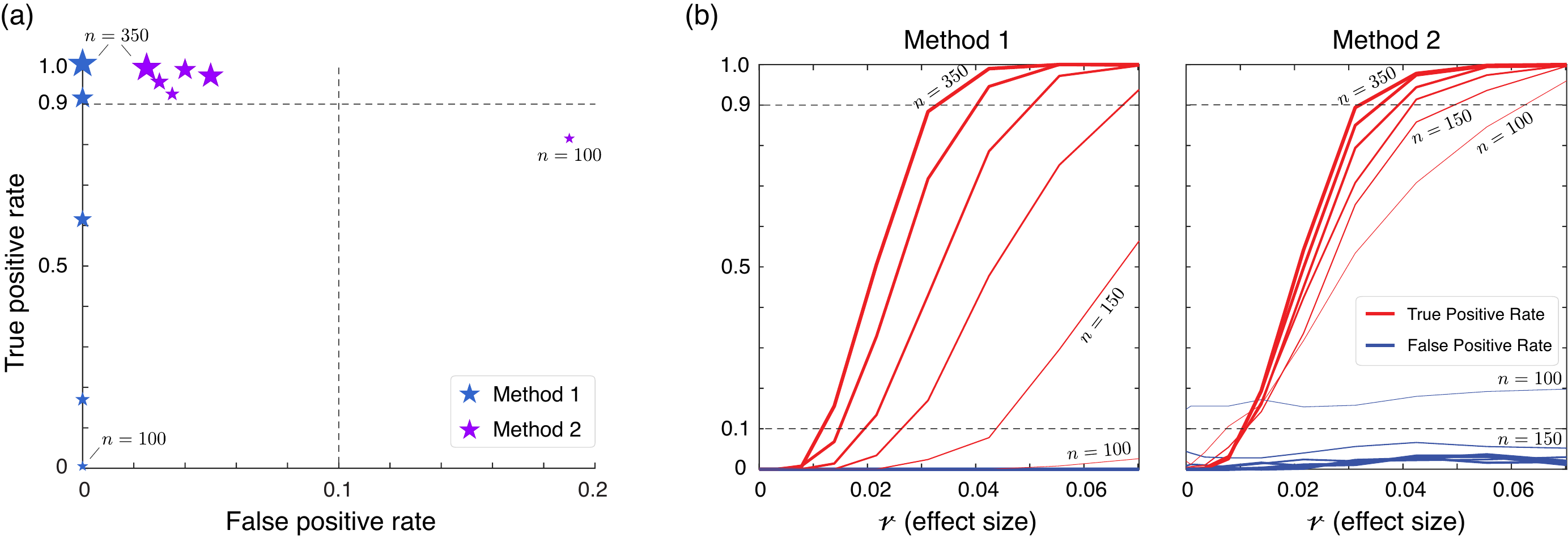}
\end{center}
\caption[]{Comparing the performance of the two methods of \prettyref{sec:datadriven} for selecting the threshold in a data-driven fashion. (a) True positive and false positive rates are computed from $200$ realization of the modified BVAR process with $a=0.75$, $p=50$, and varying $n \in \{ 100, 150, 200, 250, 300, 350 \}$, for a target false positive probability of $0.1$. The results for Methods 1 and 2 given in \prettyref{sec:datadriven} are shown by blue and purple stars, respectively. The marker sizes are proportional to $n$. (b) True positive (red) and false positive (blue) rates as a function of the effect size $\mathscr{r}$, by varying $a \in [0,1]$. Line widths are proportional to $n$.
\label{fig:th_comp} }
\end{figure*}%

Fig. \ref{fig:th_comp}(a) shows the ROC plot for the two methods for setting the test thresholds given in \prettyref{sec:datadriven}. The marker sizes are proportional to $n$. As it can be observed from the figure, Method 1 (blue stars) is expectedly more conservative and while it maintains negligible false positive rates, it requires larger values of $n$ for reliable detection of the GC link. Method 2 (purple stars), however, is more sensitive, but for $n \ge 150$ achieves both high test power (more than $0.9$) and a false positive error less than the target level of $0.1$.

Fig. \ref{fig:th_comp}(b) shows the performance of both methods with respect to the effect size. The line widths are proportional to $n$. Consistent with Fig. \ref{fig:th_comp}(a), Method 1 is more conservative and while consistently achieves a negligible false alarm rate, it requires a larger value of $n$ to detect GC links with small effect size, as indicated by the horizontal dashed line at $0.9$. Method 2, however, achieves a false alarm rate below the target level (horizontal dashed line at $0.1$) for $n \ge 150$, while reliably capturing GC links of small effect size. As expected, the performance of both methods become comparable as $n$ gets larger.

\subsection{Application to Experimentally-Recorded Neural Data from General Anesthesia\label{sec:real data validation}}
Finally, we present an application to simultaneous local field potentials (LFPs) and an ensemble of single-unit recordings from the temporal cortex in a human subject under Propofol-induced general anesthesia (Data from \cite{lewis2013local}). The LFP signal is the electrical field potential measured at the cortical surface and represents mesoscale dynamics of brain activity {with both cortical and sub-cortical (e.g., thalamic) origins}. Single-unit spike recordings, on the other hand, represent the neuronal scale cortical dynamics.

\begin{figure*}[t!]
\centering
  \includegraphics[width=0.95\linewidth]{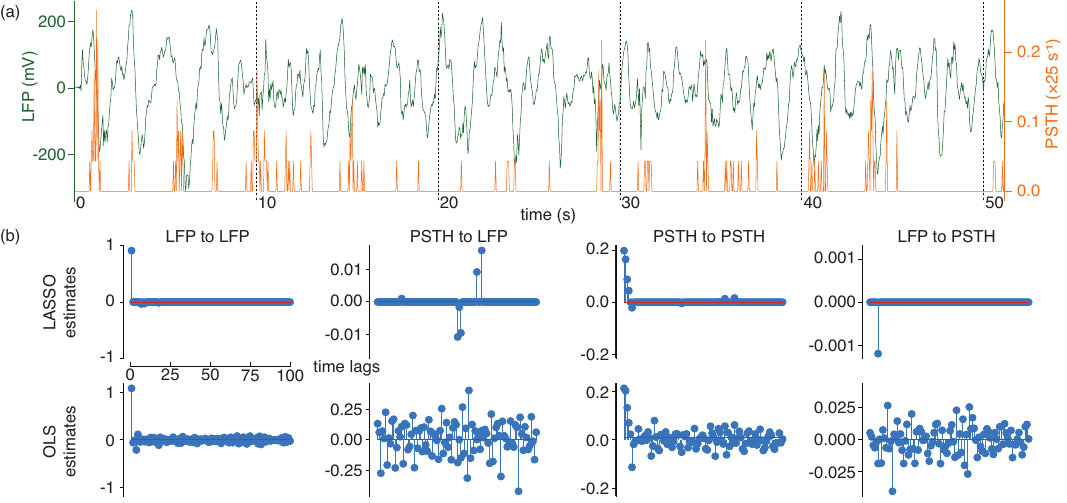}
  \caption[Analysis of neural data from general anesthesia]{Analysis of neural data from general anesthesia. (a) LFP (green) and PSTH (orange) traces for a time window of duration \SI{51.2}{\second}. (b) BVAR parameter estimates corresponding to the \emph{full} models for LASSO (top) and OLS (bottom).
  \label{fig:exp result} }
\end{figure*}

Brain states under anesthesia and sleep are associated with the emergence of periodic and profound suppression of neuronal spiking activity that is strongly phase-locked to the peaks of the LFP slow oscillations \cite{lewis2012rapid,lewis2013local,chauvette2011properties,Watson_2016}. 
{Specifically, by comparing the average LFP signals triggered at the trough of the slow oscillations under no-spike and many-spike conditions, \cite{lewis2012rapid} argues that neuronal spiking activity may have a GC influence in high-amplitude peaks of the slow oscillations manifested in the LFP. Here, we examine the role of neuronal spiking activity in mediating the LFP slow oscillations by assessing the GC influences between them.}

We use a time duration of \SI{51.2}{\second} during anesthesia, corresponding to $n = 1280$ samples (sampling frequency of \SI{25}{\hertz}). The ensemble spiking activity is represented by its peristimulus time histogram (PSTH) (i.e., ensemble average over \num{23} units). \prettyref{fig:exp result}(a) shows the LFP (green) and PSTH (orange) signals used in the analysis. We use a model order of $p=100$, corresponding to a history length of \SI{4}{\second}, to ensure that slow oscillations ($\sim$\SIrange{0.25}{0.5}{\hertz}) can be captured by the BVAR model. 
\prettyref{fig:exp result}(b) shows the estimated BVAR coefficients by the LASSO (top) and OLS (bottom). A visual comparison of the two sets of coefficients suggests that OLS has likely over-fitted the data. The corresponding LGC statistics $\mathcal{T}_{\sf LFP \mapsto PSTH}$ and $\mathcal{T}_{\sf PSTH \mapsto LFP}$ for both methods are reported in \prettyref{tab:lgc statistics}. The numbers in parentheses show the \textit{p}-values, i.e., the false positive error probability, when the observed GC statistics are used as thresholds. For the LGC statistics, the \textit{p}-values are computed via \prettyref{cor:false detection probability} (with a choice of \({t_0=0.25}\)), while in the conventional OLS setting we used the \(\chi^2\) distribution against the threshold \(\mathcal{F}_{x\mapsto y} = \log\left(1 + \mathcal{T}_{x\mapsto y}\right) \).

For a false positive error probability of $0.01$, the LGC-based test clearly detects the GC effect ${\sf PSTH \mapsto LFP}$ (boldface number) as significant, and discards the GC effect ${\sf LFP \mapsto PSTH}$. The conventional $\chi^2$ test applied to the classical GC statistics $\mathcal{F}_{\sf LFP \mapsto PSTH}$ and $\mathcal{F}_{\sf PSTH \mapsto LFP}$, however, fails to detect any GC influence, even at a significance level {as high as} $0.05$. The outcome of the LASSO-based LGC analysis is therefore consistent with the aforementioned hypothesis in \cite{lewis2012rapid} on the GC influence of spiking activity in mediating the LFP dynamics.

\begin{table}[!h]
\renewcommand{\arraystretch}{1.3}
\caption{Obtained LGC Statistics and \textit{p}-values}
\begin{center}
\begin{tabular}{|| c | c  c ||}
\hline
  & LASSO & OLS \\ 
  \hline \hline
 \(\mathcal{T}_{\sf LFP \mapsto PSTH}\) & \(0.0055~(0.1079)\) & \(0.1001~(0.1832)\) \\  
 \(\mathcal{T}_{\sf PSTH \mapsto LFP}\) & \({\bf 0.0096}~(\bf{0.0015})\) & \(0.1046~(0.1135)\)    \\
 \hline
\end{tabular}
\label{tab:lgc statistics}
\end{center}
\renewcommand{\arraystretch}{1}
\end{table}

\section{Concluding Remarks\label{sec:Conclusions}}
In this work, we proposed a GC statistic based on the LASSO parameter estimates, namely the LGC statistic, in order to identify GC influences in a canonical sparse BVAR model with correlated process noise. By analyzing the non-asymptotic properties of LGC statistic, we established that the well-known sufficient conditions for the consistency of LASSO also suffice for accurate identification of GC influences, if the strength of the GC effect is large enough. We also established the necessity of the constraint on the strength of the GC effect for reliable GC identification via LGC. We also analyzed the false positive error performance and test power of a simple thresholding rule for detecting GC influences and provided data-driven methods to set the test threshold in practice. We validated our theoretical claims through application to simulated and experimentally-recorded neural data from general anesthesia. In particular, we showed that the proposed LGC statistic is able to identify a GC effect from spiking activity to LFP slow oscillations under anesthesia, whereas the conventional OLS-based GC analysis does not detect this effect. Our contribution compared to existing literature is to provide a simple statistic inspired by the classical log-likelihood ratio statistic used for GC analysis, which can be directly computed from the LASSO estimates without the need to resort to de-biasing procedures or asymptotic results for testing. Future work includes extending our results to autoregressive generalized linear models with time-varying parameters and obtaining necessary conditions that hold for any test statistic.
\appendices
\section{Prediction Error Analysis of the Full and Reduced Models}
\label{sec:prediction-error-analysis}
In this section, we establish the deviation bounds of \prettyref{eq:full dev} and \prettyref{eq:reduced dev} under Conditions \ref{cond:RE} and \ref{cond:DB}.
Note that both the \emph{full} and \emph{reduced} models share the same RE condition (\prettyref{cond:RE}), since the \emph{reduced} model is nested within the \emph{full} model. However, the deviation conditions required by \prettyref{cond:DB} are different for the two models. For the \emph{full} model, we require:
\begin{align}\left\Vert \frac{1}{n}\mathbf{X}^{\top}\!({\mathbf{x\!}-\mathbf{X}\boldsymbol{\theta}^{*}})\right\Vert _{\infty}\leq \mathbbm Q(\boldsymbol{\theta}^{*},\boldsymbol{\Sigma}_{\epsilon})\sqrt{\frac{\log(2p)}{n}}, \label{eq:DB-1}
\end{align}
for some deterministic function $\mathbbm Q(\boldsymbol{\theta}^{*},\boldsymbol{\Sigma}_{\epsilon})$. In the \emph{reduced} model, however, we require
\begin{align}\left\Vert \frac{1}{n}\mathbf{X}_{(1)}^{\top}\left(\mathbf{x}-\mathbf{X}_{(1)}\widetilde{\boldsymbol{\theta}}^*_{(1)}\right)\right\Vert _{\infty}\leq \mathbbm Q'(\boldsymbol{\theta}^{*},\boldsymbol{\Sigma}_{\epsilon})\sqrt{\frac{\log(2p)}{n}} \label{eq:DB-2}\end{align}
for another deterministic function $\mathbbm Q'(\boldsymbol{\theta}^{*},\boldsymbol{\Sigma}_{\epsilon})$.
These conditions guarantee consistent and stable recovery of the autoregressive parameters in the LASSO problems of \prettyref{eq:Lasso-full&reducedmoedel}, and hence lead to suitable deviation results for both the \emph{full} and \emph{reduced} models.  

\cleartheorem{prop}
\newtheorem{prop}{\bf Proposition}[section]
\cleartheorem{lem}
\newtheorem{lem}{\bf Lemma}[section]
\cleartheorem{remark}
\newtheorem{remark}{\bf Remark}[section]

\begin{prop}[Deviation Result for the \emph{Full} Model] \label{prop:full model deviation}
Suppose $\widehat{\boldsymbol{\Sigma}}\sim\text{RE}\left(\alpha,\tau\right)$, with $\tau$ satisfying \({32k\tau}/{\alpha}={(m-1)}/{m}\)
for some $m>1$ and \((\mathbf{X}, \mathbf{x})\) satisfying the deviation bound \eqref{eq:DB-1}. 
Then for any \(\lambda_{n}\geq4 \mathbbm{Q}(\boldsymbol{\theta}^{*},\boldsymbol{\Sigma}_{\epsilon})\sqrt{{\log(2p)}/{n}},\) the solution to the \emph{full} model in \prettyref{eq:Lasso-full&reducedmoedel} satisfies: 
\begin{align}
\left\vert \ell\Big(\widehat{\boldsymbol{\theta}}_{(1)}, \widehat{\boldsymbol{\theta}}_{(2)}\Big)-\ell\Big(\boldsymbol{\theta}^{*}_{(1)}, \boldsymbol{\theta}^*_{(2)}\Big)\right\vert \leq \frac{24}{m+1}\frac{k\lambda_{n}^{2}}{\alpha/m}=:\Delta_{F}. \label{eq:full2}
\end{align}
\end{prop}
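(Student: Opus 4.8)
The plan is to reduce the loss-difference bound to a standard restricted-eigenvalue analysis of the LASSO estimation error. Writing the error vector as $\widehat{\boldsymbol{\phi}} := \widehat{\boldsymbol{\theta}} - \boldsymbol{\theta}^*$ and using $\mathbf{x} - \mathbf{X}\boldsymbol{\theta}^* = \boldsymbol{\epsilon}$, I would first expand
\begin{align*}
\ell\Big(\widehat{\boldsymbol{\theta}}_{(1)}, \widehat{\boldsymbol{\theta}}_{(2)}\Big) - \ell\Big(\boldsymbol{\theta}^{*}_{(1)}, \boldsymbol{\theta}^*_{(2)}\Big) = \frac{1}{n}\Vert\mathbf{X}\widehat{\boldsymbol{\phi}}\Vert^2 - \frac{2}{n}\boldsymbol{\epsilon}^\top\mathbf{X}\widehat{\boldsymbol{\phi}},
\end{align*}
so that $\Delta_F$ is controlled once I bound the prediction error $\tfrac{1}{n}\Vert\mathbf{X}\widehat{\boldsymbol{\phi}}\Vert^2 = \widehat{\boldsymbol{\phi}}^\top\widehat{\boldsymbol{\Sigma}}\widehat{\boldsymbol{\phi}}$ and the cross term. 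For the latter, H\"older's inequality together with the deviation bound \eqref{eq:DB-1} gives $\tfrac{2}{n}\vert\boldsymbol{\epsilon}^\top\mathbf{X}\widehat{\boldsymbol{\phi}}\vert \le 2\big\Vert \tfrac{1}{n}\mathbf{X}^\top\boldsymbol{\epsilon}\big\Vert_\infty\Vert\widehat{\boldsymbol{\phi}}\Vert_1 \le \tfrac{\lambda_n}{2}\Vert\widehat{\boldsymbol{\phi}}\Vert_1$, where the hypothesis $\lambda_n \ge 4\mathbbm{Q}(\boldsymbol{\theta}^{*},\boldsymbol{\Sigma}_{\epsilon})\sqrt{\log(2p)/n}$ is exactly what forces $\big\Vert \tfrac{1}{n}\mathbf{X}^\top\boldsymbol{\epsilon}\big\Vert_\infty \le \lambda_n/4$.

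Second, I would run the classical cone argument. The LASSO optimality (basic) inequality $\ell(\widehat{\boldsymbol{\theta}}) + \lambda_n\Vert\widehat{\boldsymbol{\theta}}\Vert_1 \le \ell(\boldsymbol{\theta}^*) + \lambda_n\Vert\boldsymbol{\theta}^*\Vert_1$, combined with the expansion above, the cross-term bound, and a support decomposition over $S := \operatorname{supp}(\boldsymbol{\theta}^*)$ with $\vert S\vert = k$ (using $\Vert\boldsymbol{\theta}^*_S\Vert_1 - \Vert\widehat{\boldsymbol{\theta}}_S\Vert_1 \le \Vert\widehat{\boldsymbol{\phi}}_S\Vert_1$ and $\widehat{\boldsymbol{\theta}}_{S^c} = \widehat{\boldsymbol{\phi}}_{S^c}$), yields
\begin{align*}
\frac{1}{n}\Vert\mathbf{X}\widehat{\boldsymbol{\phi}}\Vert^2 \le \frac{3\lambda_n}{2}\Vert\widehat{\boldsymbol{\phi}}_S\Vert_1 - \frac{\lambda_n}{2}\Vert\widehat{\boldsymbol{\phi}}_{S^c}\Vert_1.
\end{align*}
Nonnegativity of the left-hand side then gives the cone condition $\Vert\widehat{\boldsymbol{\phi}}_{S^c}\Vert_1 \le 3\Vert\widehat{\boldsymbol{\phi}}_S\Vert_1$, and hence $\Vert\widehat{\boldsymbol{\phi}}\Vert_1 \le 4\Vert\widehat{\boldsymbol{\phi}}_S\Vert_1 \le 4\sqrt{k}\Vert\widehat{\boldsymbol{\phi}}\Vert_2$.

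Third, I would close the loop through \prettyref{cond:RE}. Substituting $\Vert\widehat{\boldsymbol{\phi}}\Vert_1^2 \le 16k\Vert\widehat{\boldsymbol{\phi}}\Vert_2^2$ into $\widehat{\boldsymbol{\phi}}^\top\widehat{\boldsymbol{\Sigma}}\widehat{\boldsymbol{\phi}} \ge \alpha\Vert\widehat{\boldsymbol{\phi}}\Vert_2^2 - \tau\Vert\widehat{\boldsymbol{\phi}}\Vert_1^2$ and invoking the calibration $32k\tau/\alpha = (m-1)/m$ collapses the tolerance term into an effective curvature $\alpha - 16k\tau = \tfrac{(m+1)\alpha}{2m}$, so that $\tfrac{(m+1)\alpha}{2m}\Vert\widehat{\boldsymbol{\phi}}\Vert_2^2 \le \tfrac{1}{n}\Vert\mathbf{X}\widehat{\boldsymbol{\phi}}\Vert^2 \le \tfrac{3\lambda_n}{2}\sqrt{k}\Vert\widehat{\boldsymbol{\phi}}\Vert_2$. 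Cancelling one factor of $\Vert\widehat{\boldsymbol{\phi}}\Vert_2$ produces the $\ell_2$ error bound $\Vert\widehat{\boldsymbol{\phi}}\Vert_2 \le 3\lambda_n\sqrt{k}\big/\big((m+1)\alpha/m\big)$. Feeding this back into the prediction-error and cross-term estimates and adding them via the triangle inequality gives $\big\vert\ell(\widehat{\boldsymbol{\theta}}) - \ell(\boldsymbol{\theta}^*)\big\vert \le \big(\tfrac{9}{2}+6\big)\tfrac{k\lambda_n^2}{(m+1)\alpha/m}$; since $\tfrac{21}{2} \le 24$, the stated bound $\Delta_F = \tfrac{24}{m+1}\tfrac{k\lambda_n^2}{\alpha/m}$ follows a fortiori, the slack being absorbed into the constant.

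The main obstacle is the circular dependence between the two conditions: \prettyref{cond:RE} only yields a usable lower bound on $\widehat{\boldsymbol{\phi}}^\top\widehat{\boldsymbol{\Sigma}}\widehat{\boldsymbol{\phi}}$ after one knows that $\widehat{\boldsymbol{\phi}}$ lies in the cone, while cone membership is itself produced by the basic inequality, which in turn requires $\lambda_n$ to dominate the noise level $\big\Vert \tfrac{1}{n}\mathbf{X}^\top\boldsymbol{\epsilon}\big\Vert_\infty$ --- precisely the role played by the lower threshold on $\lambda_n$ through \eqref{eq:DB-1}. The delicate bookkeeping is ensuring that the tolerance term $\tau\Vert\widehat{\boldsymbol{\phi}}\Vert_1^2$ is strictly dominated, which the choice $\tau = \tfrac{m-1}{m}\tfrac{\alpha}{32k}$ guarantees by leaving the strictly positive residual curvature $\tfrac{(m+1)\alpha}{2m}$; everything else reduces to routine H\"older and Cauchy--Schwarz estimates.
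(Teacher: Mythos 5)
Your proof is correct, and its top level mirrors the paper's: both expand the loss difference into the quadratic term $\tfrac{1}{n}\Vert\mathbf{X}\widehat{\boldsymbol{\phi}}\Vert^2$ plus (twice) the cross term, and both dispatch the cross term with H\"older's inequality, the deviation bound \eqref{eq:DB-1}, and the threshold $\lambda_n \ge 4\mathbbm{Q}(\boldsymbol{\theta}^*,\boldsymbol{\Sigma}_\epsilon)\sqrt{\log(2p)/n}$, which gives $\big\Vert \tfrac{1}{n}\mathbf{X}^\top\boldsymbol{\epsilon}\big\Vert_\infty \le \lambda_n/4$. Where you genuinely diverge is in how the error bounds are obtained: the paper's proof is essentially two lines, quoting \prettyref{prop:prediction error} (adapted from Basu and Michailidis, proof omitted there) for $\Vert\widehat{\boldsymbol{\theta}}-\boldsymbol{\theta}^*\Vert_1 \le \tfrac{12m}{m+1}\tfrac{k\lambda_n}{\alpha}$ and for the prediction error bound $\tfrac{18m}{m+1}\tfrac{k\lambda_n^2}{\alpha}$, which sum to exactly $\tfrac{24}{m+1}\tfrac{k\lambda_n^2}{\alpha/m}$; you instead re-derive these bounds inline via the basic inequality, the cone condition $\Vert\widehat{\boldsymbol{\phi}}_{S^c}\Vert_1\le 3\Vert\widehat{\boldsymbol{\phi}}_S\Vert_1$, and the RE condition with the calibration $16k\tau = \tfrac{(m-1)\alpha}{2m}$, leaving effective curvature $\tfrac{(m+1)\alpha}{2m}$. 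Your bookkeeping checks out: your $\ell_2$ bound $\tfrac{3m}{m+1}\tfrac{\sqrt{k}\lambda_n}{\alpha}$ coincides with the paper's, and your final constant $\tfrac{21}{2}$ is tighter than $24$ because you bound the quadratic term by $\tfrac{3\lambda_n}{2}\sqrt{k}\Vert\widehat{\boldsymbol{\phi}}\Vert_2$ (through $\Vert\widehat{\boldsymbol{\phi}}_S\Vert_1$) rather than by $\tfrac{3\lambda_n}{2}\Vert\widehat{\boldsymbol{\phi}}\Vert_1$, so the stated $\Delta_F$ holds a fortiori. The trade-off: the paper's route is shorter and defers to a cited lemma, while yours is self-contained --- it in effect reproves the very consistency result the paper leaves to the literature --- and ends with a marginally sharper constant.
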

\begin{proof}
For the \emph{full} model, we have: 
\begin{align}
\notag \ell\Big(\widehat{\boldsymbol{\theta}}_{(1)}, \widehat{\boldsymbol{\theta}}_{(2)}\Big)-\ell\Big(\boldsymbol{\theta}^{*}_{(1)}, \boldsymbol{\theta}^*_{(2)}\Big)
&=\frac{1}{n}(\widehat{\boldsymbol{\theta}}-\boldsymbol{\theta}^{*})^{\top}\mathbf{X}^{\top}\mathbf{X}(\widehat{\boldsymbol{\theta}}-\boldsymbol{\theta}^{*})\\
&\quad {-} \frac{2}{n}(\widehat{\boldsymbol{\theta}}-\boldsymbol{\theta}^{*})^{\top}\mathbf{X}^{\top}(\mathbf{x}-\mathbf{X}\boldsymbol{\theta}^{*}).
\label{eq:full1}
\end{align}
To obtain bounds on the left hand side of \prettyref{eq:full1}, we bound the terms on the right hand side individually. The bound on the first term follows from \prettyref{eq:full prediction error} on the consistency of the LASSO under the RE and deviation bound assumptions (See \prettyref{prop:prediction error}),
\begin{align*}
\frac{1}{n}(\widehat{\boldsymbol{\theta}}-\boldsymbol{\theta}^{*})^{\top}\mathbf{X}^{\top}\mathbf{X}(\widehat{\boldsymbol{\theta}}-\boldsymbol{\theta}^{*}) \leq \frac{18m}{m+1}\frac{k\lambda_{n}^{2}}{\alpha},
\end{align*}
while the second terms can be bounded using \prettyref{eq:full l1 norm} as
\begin{align*}
\notag \left\vert\frac{1}{n}(\widehat{\boldsymbol{\theta}}\!-\!\boldsymbol{\theta}^{*})^{\top}\mathbf{X}^{\top}\!(\mathbf{x}\!-\!\mathbf{X}\boldsymbol{\theta}^{*})\right\vert &\! \leq \left\Vert \widehat{\boldsymbol{\theta}}\!-\!\boldsymbol{\theta}^{*}\right\Vert_{1} \left\Vert\frac{1}{n}\mathbf{X}^{\top}(\mathbf{x}\!-\!\mathbf{X}\boldsymbol{\theta}^{*})\right\Vert_{\infty}\\
\nonumber &\leq \frac{3m}{m+1}\frac{k\lambda_{n}^{2}}{\alpha},
\end{align*}
where we have used the choice of \({\lambda_n}\) and \prettyref{eq:DB-1} to conclude:
\begin{align*}
\left\Vert\mathbf{X}^{\top}(\mathbf{x}-\mathbf{X}\boldsymbol{\theta}^{*}) \right\Vert_{\infty} \leq \frac{\lambda_n}{4}. 
\end{align*}
Combining these two bounds via the triangle inequality concludes the proof.
\end{proof}

\begin{prop}[Deviation Result for the \emph{Reduced} Model] \label{prop:reduced model deviation}
Suppose $\widehat{\boldsymbol{\Sigma}}\sim\text{RE}\left(\alpha,\tau\right)$,
with $\tau$ satisfying \({32k\tau}/{\alpha}={(m-1)}/{m}\)
for some $m>1$ and \((\mathbf{X}_{(1)}, \mathbf{x})\) satisfying the deviation bound \eqref{eq:DB-2}. Let $J$ denote the support of $\boldsymbol{\theta}^*_{(1)}$, with its complement denoted by $J^c$. Then, for any 
\(
\lambda_{n}\geq4 \mathbbm{Q}'(\boldsymbol{\theta}^{*},\boldsymbol{\Sigma}_{\epsilon})\sqrt{{\log(2p)}/{n}},
\)
the solution to \emph{reduced} model in \prettyref{eq:Lasso-full&reducedmoedel} satisfies:
\begin{align}
\notag \left\vert \ell\Big(\widehat{\widetilde{\boldsymbol{\theta}}}_{(1)},\boldsymbol{0}\Big)-\ell\left(\widetilde{\boldsymbol{\theta}}^*_{(1)},\boldsymbol{0}\right)\right\vert & \leq 20\frac{k\lambda_{n}^{2}}{\alpha/m}\\
\notag & \quad +\left(8\sqrt{2m}+18\right)\lambda_{n}\left\Vert \widetilde{\boldsymbol{\theta}}^*_{(1)J^{c}}\right\Vert _{1}\\
&=:\Delta_{R},
\label{eq:full3}
\end{align}
\end{prop}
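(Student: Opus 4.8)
The plan is to follow the proof of \prettyref{prop:full model deviation} closely, the one essential new ingredient being that the surrogate true coefficient $\widetilde{\boldsymbol{\theta}}^*_{(1)}$ is only \emph{weakly} sparse, so the off-support mass $\Vert\widetilde{\boldsymbol{\theta}}^*_{(1)J^c}\Vert_1$ must be carried through the error analysis. Writing $\boldsymbol{\phi}:=\widehat{\widetilde{\boldsymbol{\theta}}}_{(1)}-\widetilde{\boldsymbol{\theta}}^*_{(1)}$ and completing the square exactly as in the full model, I would first decompose the loss gap as
\begin{align*}
\ell\Big(\widehat{\widetilde{\boldsymbol{\theta}}}_{(1)},\boldsymbol{0}\Big)-\ell\Big(\widetilde{\boldsymbol{\theta}}^*_{(1)},\boldsymbol{0}\Big) = \frac{1}{n}\boldsymbol{\phi}^{\top}\mathbf{X}_{(1)}^{\top}\mathbf{X}_{(1)}\boldsymbol{\phi} + \frac{2}{n}\boldsymbol{\phi}^{\top}\mathbf{X}_{(1)}^{\top}\Big(\mathbf{x}-\mathbf{X}_{(1)}\widetilde{\boldsymbol{\theta}}^*_{(1)}\Big),
\end{align*}
so that it suffices to bound a prediction-error term (the quadratic form) and a cross term, and then combine them by the triangle inequality.

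For the cross term, H\"older's inequality gives the bound $2\Vert\boldsymbol{\phi}\Vert_1\big\Vert\frac{1}{n}\mathbf{X}_{(1)}^{\top}(\mathbf{x}-\mathbf{X}_{(1)}\widetilde{\boldsymbol{\theta}}^*_{(1)})\big\Vert_\infty$, and the hypothesis $\lambda_n\ge 4\mathbbm{Q}'(\boldsymbol{\theta}^{*},\boldsymbol{\Sigma}_\epsilon)\sqrt{\log 2p/n}$ together with the reduced-model deviation bound \eqref{eq:DB-2} controls the $\ell_\infty$ factor by $\lambda_n/4$. The crucial departure from the full model is that, since $\widetilde{\boldsymbol{\theta}}^*_{(1)}$ need not be supported on $J$, the LASSO optimality condition only yields a \emph{relaxed} cone inequality of the form $\Vert\boldsymbol{\phi}_{J^c}\Vert_1\le 3\Vert\boldsymbol{\phi}_J\Vert_1+4\Vert\widetilde{\boldsymbol{\theta}}^*_{(1)J^c}\Vert_1$; the tail term therefore enters every error bound additively.

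From here I would reproduce the weakly sparse LASSO argument of \cite{negahban2012}. Combining the relaxed cone inequality with $\Vert\boldsymbol{\phi}_J\Vert_1\le\sqrt{k}\Vert\boldsymbol{\phi}\Vert_2$ and feeding the result into \prettyref{cond:RE} converts the tolerance term $\tau\Vert\boldsymbol{\phi}\Vert_1^2$ into an effective curvature $\alpha/m$ via the calibration $32k\tau/\alpha=(m-1)/m$, which is the origin of the $\alpha/m$ denominators. This produces a prediction-error bound consisting of a term of the form $k\lambda_n^2/(\alpha/m)$ plus a term proportional to $\lambda_n\Vert\widetilde{\boldsymbol{\theta}}^*_{(1)J^c}\Vert_1$, together with an $\ell_1$-error bound on $\Vert\boldsymbol{\phi}\Vert_1$ of the form $k\lambda_n/(\alpha/m)$ plus a multiple of $\sqrt{m}\,\Vert\widetilde{\boldsymbol{\theta}}^*_{(1)J^c}\Vert_1$ (the analogues of the bounds invoked for the full model in \prettyref{prop:prediction error}). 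Substituting the $\ell_1$ bound into the cross term, adding the prediction-error bound, and collecting numerical constants then yields the asserted coefficients $20\,k\lambda_n^2/(\alpha/m)$ and $(8\sqrt{2m}+18)\lambda_n\Vert\widetilde{\boldsymbol{\theta}}^*_{(1)J^c}\Vert_1$.

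I expect the main obstacle to be the precise bookkeeping of the tail term: unlike the exactly sparse full model, $\Vert\widetilde{\boldsymbol{\theta}}^*_{(1)J^c}\Vert_1$ must be propagated simultaneously through the $\ell_2$ (prediction) and $\ell_1$ chains, and it is exactly the interaction of the $\sqrt{k}$-to-$\ell_2$ conversion with the $1/\sqrt{m}$ effective-curvature scaling that produces the $\sqrt{2m}$ factor. Obtaining these constants exactly, rather than up to unspecified absolute factors, is the delicate part of the argument.
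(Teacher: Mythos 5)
Your strategy coincides with the paper's: decompose the loss gap into a quadratic form plus a cross term, control the cross term by H\"older's inequality together with \eqref{eq:DB-2} (so the $\ell_\infty$ factor is at most $\lambda_n/4$), and run a Negahban-style weakly sparse error analysis of the reduced-model LASSO under \prettyref{cond:RE} with the calibration $32k\tau/\alpha=(m-1)/m$ supplying the effective curvature $\alpha/m$. This is precisely what the paper packages as \prettyref{prop:prediction error-reducedmodel} and then consumes in its proof of the present proposition.

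However, one step in your accounting does not go through as written. The weakly sparse analysis does \emph{not} produce two-term error bounds: both the $\ell_1$-error bound and the prediction-error bound carry a third, mixed term proportional to $\sqrt{\lambda_n^3 k/(\alpha/m)}\cdot\sqrt{\big\Vert\widetilde{\boldsymbol{\theta}}^*_{(1)J^c}\big\Vert_1}$ (see \eqref{eq:thm-6-2} and \eqref{eq:thm-6-3}), which arises because the tail $\big\Vert\widetilde{\boldsymbol{\theta}}^*_{(1)J^c}\big\Vert_1$ enters the quadratic inequality for the error norm and survives the square root. Consequently, substituting the $\ell_1$ bound into the cross term and adding the prediction-error bound yields the intermediate three-term estimate
$12\,\frac{k\lambda_n^2}{\alpha/m} + \left(8\sqrt{2m}+10\right)\lambda_n\big\Vert\widetilde{\boldsymbol{\theta}}^*_{(1)J^c}\big\Vert_1 + 16\sqrt{\frac{\lambda_n^3 k}{\alpha/m}}\sqrt{\big\Vert\widetilde{\boldsymbol{\theta}}^*_{(1)J^c}\big\Vert_1}$,
not the claimed two-term expression; "collecting numerical constants" alone cannot reach $\Delta_R$. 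The stated constants emerge only after absorbing the mixed term by AM--GM, $16\sqrt{ab}\le 8a+8b$ with $a=k\lambda_n^2/(\alpha/m)$ and $b=\lambda_n\big\Vert\widetilde{\boldsymbol{\theta}}^*_{(1)J^c}\big\Vert_1$, which converts $12\to 20$ and $8\sqrt{2m}+10\to 8\sqrt{2m}+18$. This is a small, purely algebraic repair, but it is exactly the step your plan omits, and without it the asserted form of $\Delta_R$ is unreachable.
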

\begin{proof}
In a similar fashion to \prettyref{prop:full model deviation}, by invoking the consistency results of LASSO for the \emph{reduced} model under the RE and deviation bound assumptions (See \prettyref{prop:prediction error-reducedmodel}) and noting \prettyref{eq:DB-2} we have:
\begin{align}
\notag \!\left\vert \ell\Big(\widehat{\widetilde{\boldsymbol{\theta}}}_{(1)},\boldsymbol{0}\Big)\!-\!\ell\left(\widetilde{\boldsymbol{\theta}}^*_{(1)},\boldsymbol{0}\right)\!\right\vert\! \leq & \,
12\frac{k\lambda_{n}^{2}}{\alpha/m}\\
\notag & +\!\left(\!8\!\left(\!\sqrt{2m}\!+\!1\right)\!+\!2\!\right)\!\lambda_{n}\!\left\Vert \widetilde{\boldsymbol{\theta}}^*_{(1)J^{c}}\right\Vert _{1}\\
& + 16\sqrt{\frac{\lambda_{n}^{3}k}{\alpha/m}}\sqrt{\left\Vert \widetilde{\boldsymbol{\theta}}^*_{(1)J^{c}}\right\Vert_1}.
\label{eq:full4}
\end{align}
Simplifying the last term using Arithmetic Mean-Geometric Mean inequality proves the claim.
\end{proof}

To establish the consistency results used in \prettyref{prop:full model deviation} and \prettyref{prop:reduced model deviation}, we first state a result adapted from \cite{basu2015regularized} on the prediction error of LASSO under the \emph{full} model:

\begin{prop}[Prediction Error for the \emph{Full} Model]
\label{prop:prediction error} Suppose $\widehat{\boldsymbol{\Sigma}}\sim\text{RE}\left(\alpha,\tau\right)$, with $\tau$ satisfying \({32k\tau}/{\alpha}={(m-1)}/{m}\)
for some $m>1$ and \((\mathbf{X}, \mathbf{x})\) satisfying the deviation bound \eqref{eq:DB-1}. 
Then for any \(\lambda_{n}\geq4 \mathbbm{Q}(\boldsymbol{\theta}^{*},\boldsymbol{\Sigma}_{\epsilon})\sqrt{\frac{\log(2p)}{n}},\) the solution to the \emph{full} model in \prettyref{eq:Lasso-full&reducedmoedel} satisfies: 
\begin{align}
\Vert\widehat{\boldsymbol{\theta}}-\boldsymbol{\theta}^{*}\Vert_2 & \leq\frac{3m}{m+1}\frac{\sqrt{k}\lambda_{n}}{\alpha},\\
\Vert\widehat{\boldsymbol{\theta}}-\boldsymbol{\theta}^{*}\Vert_{1} & \leq\frac{12m}{m+1}\frac{k\lambda_{n}}{\alpha},\label{eq:full l1 norm}\\
(\widehat{\boldsymbol{\theta}}-\boldsymbol{\theta}^{*})^{\top}\widehat{\boldsymbol{\Sigma}}(\widehat{\boldsymbol{\theta}}-\boldsymbol{\theta}^{*}) & \leq\frac{18m}{m+1}\frac{k\lambda_{n}^{2}}{\alpha}\label{eq:full prediction error}.
\end{align}
\end{prop}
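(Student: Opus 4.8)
The plan is to carry out the classical primal restricted-eigenvalue analysis of the LASSO, tracking the constants carefully so that the calibration of $\tau$ produces exactly the stated factors $m/(m+1)$. Writing $\widehat{\boldsymbol{\nu}} := \widehat{\boldsymbol{\theta}} - \boldsymbol{\theta}^*$ and starting from the optimality of $\widehat{\boldsymbol{\theta}}$ for the \emph{full} problem in \prettyref{eq:Lasso-full&reducedmoedel}, I would compare the objective at $\widehat{\boldsymbol{\theta}}$ against $\boldsymbol{\theta}^*$; after expanding the squared residual and cancelling the common $\tfrac1n\|\mathbf{x}-\mathbf{X}\boldsymbol{\theta}^*\|^2$ term, this rearranges into the basic inequality $\widehat{\boldsymbol{\nu}}^\top\widehat{\boldsymbol{\Sigma}}\widehat{\boldsymbol{\nu}} \le \tfrac{2}{n}\widehat{\boldsymbol{\nu}}^\top\mathbf{X}^\top(\mathbf{x}-\mathbf{X}\boldsymbol{\theta}^*) + \lambda_n(\|\boldsymbol{\theta}^*\|_1 - \|\widehat{\boldsymbol{\theta}}\|_1)$, where $\widehat{\boldsymbol{\Sigma}} = \tfrac1n\mathbf{X}^\top\mathbf{X}$.

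Next I would control the two right-hand terms. For the noise cross-term, H\"older's inequality bounds it by $2\|\widehat{\boldsymbol{\nu}}\|_1\,\|\tfrac1n\mathbf{X}^\top(\mathbf{x}-\mathbf{X}\boldsymbol{\theta}^*)\|_\infty$, and the deviation bound \prettyref{eq:DB-1} together with the prescribed $\lambda_n \ge 4\mathbbm Q\sqrt{\log(2p)/n}$ makes the $\ell_\infty$ factor at most $\lambda_n/4$, giving the bound $\tfrac{\lambda_n}{2}\|\widehat{\boldsymbol{\nu}}\|_1$. For the penalty gap, letting $S$ be the support of $\boldsymbol{\theta}^*$ (so $|S|\le k$), the triangle inequality yields $\|\boldsymbol{\theta}^*\|_1 - \|\widehat{\boldsymbol{\theta}}\|_1 \le \|\widehat{\boldsymbol{\nu}}_S\|_1 - \|\widehat{\boldsymbol{\nu}}_{S^c}\|_1$. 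Substituting both and splitting $\|\widehat{\boldsymbol{\nu}}\|_1 = \|\widehat{\boldsymbol{\nu}}_S\|_1 + \|\widehat{\boldsymbol{\nu}}_{S^c}\|_1$ collapses the inequality to $\widehat{\boldsymbol{\nu}}^\top\widehat{\boldsymbol{\Sigma}}\widehat{\boldsymbol{\nu}} \le \tfrac{3\lambda_n}{2}\|\widehat{\boldsymbol{\nu}}_S\|_1 - \tfrac{\lambda_n}{2}\|\widehat{\boldsymbol{\nu}}_{S^c}\|_1$. Since $\widehat{\boldsymbol{\Sigma}}$ is positive semidefinite the left side is non-negative, which forces the cone condition $\|\widehat{\boldsymbol{\nu}}_{S^c}\|_1 \le 3\|\widehat{\boldsymbol{\nu}}_S\|_1$, whence $\|\widehat{\boldsymbol{\nu}}\|_1 \le 4\|\widehat{\boldsymbol{\nu}}_S\|_1 \le 4\sqrt{k}\,\|\widehat{\boldsymbol{\nu}}\|_2$.

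With the cone established, I would invoke \prettyref{cond:RE}, namely $\widehat{\boldsymbol{\nu}}^\top\widehat{\boldsymbol{\Sigma}}\widehat{\boldsymbol{\nu}} \ge \alpha\|\widehat{\boldsymbol{\nu}}\|_2^2 - \tau\|\widehat{\boldsymbol{\nu}}\|_1^2$, and use $\|\widehat{\boldsymbol{\nu}}\|_1^2 \le 16k\|\widehat{\boldsymbol{\nu}}\|_2^2$ so that the tolerance contributes at most $16k\tau\|\widehat{\boldsymbol{\nu}}\|_2^2$. The hypothesis $32k\tau/\alpha = (m-1)/m$ is exactly what reduces the effective curvature to $\alpha - 16k\tau = \alpha(m+1)/(2m) > 0$; this constant-tracking step is the crux of the argument, and I expect it to be the main obstacle, since the factor $16k$ arising from the cone must cancel precisely against the prescribed $\tau$ to leave a strictly positive curvature and to generate the ubiquitous $m/(m+1)$. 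Everything downstream is routine: combining this lower bound with the upper bound $\widehat{\boldsymbol{\nu}}^\top\widehat{\boldsymbol{\Sigma}}\widehat{\boldsymbol{\nu}} \le \tfrac{3\lambda_n}{2}\|\widehat{\boldsymbol{\nu}}_S\|_1 \le \tfrac{3\lambda_n}{2}\sqrt{k}\,\|\widehat{\boldsymbol{\nu}}\|_2$ and cancelling one factor of $\|\widehat{\boldsymbol{\nu}}\|_2$ gives the $\ell_2$ bound $\tfrac{3m}{m+1}\tfrac{\sqrt{k}\lambda_n}{\alpha}$; the $\ell_1$ bound $\tfrac{12m}{m+1}\tfrac{k\lambda_n}{\alpha}$ then follows from $\|\widehat{\boldsymbol{\nu}}\|_1 \le 4\sqrt{k}\|\widehat{\boldsymbol{\nu}}\|_2$. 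Finally, for the prediction error I would reuse the intermediate inequality $\widehat{\boldsymbol{\nu}}^\top\widehat{\boldsymbol{\Sigma}}\widehat{\boldsymbol{\nu}} \le \tfrac{3\lambda_n}{2}\|\widehat{\boldsymbol{\nu}}\|_1$ (bounding $\|\widehat{\boldsymbol{\nu}}_S\|_1$ by $\|\widehat{\boldsymbol{\nu}}\|_1$) and substitute the $\ell_1$ bound just derived, producing exactly $\tfrac{18m}{m+1}\tfrac{k\lambda_n^2}{\alpha}$.
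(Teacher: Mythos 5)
Your proposal is correct and takes essentially the same route as the paper: the paper omits its own proof, deferring to Proposition 4.1 of Basu and Michailidis (2015), whose argument is precisely this basic-inequality / cone-condition / restricted-eigenvalue analysis of the LASSO. Your constant tracking checks out exactly --- the calibration $32k\tau/\alpha=(m-1)/m$ yields the effective curvature $\alpha(m+1)/(2m)$, from which the three stated bounds follow with the factors $\tfrac{3m}{m+1}$, $\tfrac{12m}{m+1}$, and $\tfrac{18m}{m+1}$ as claimed.
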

\begin{proof}
The proof closely follows that of \cite[Proposition 4.1]{basu2015regularized}, and is thus omitted for brevity.
\end{proof}

In what follows, we show that the particular choice of $\tau$ satisfying \({32k\tau}/{\alpha}={(m-1)}/{m}\) for some $m>1$ will simplify the prediction error analysis of the \emph{reduced} model. As for the \emph{reduced} model, the main technical difficulty in establishing prediction error bounds stems from the fact that $\widetilde{\boldsymbol{\theta}}^*_{(1)}$ is no longer $k$--sparse. We address this issue in the following proposition:

\begin{prop}[Prediction Error for the \emph{Reduced} Model]
\label{prop:prediction error-reducedmodel} Suppose $\widehat{\boldsymbol{\Sigma}}\sim\text{RE}\left(\alpha,\tau\right)$,
with $\tau$ satisfying the relation in \prettyref{prop:prediction error} and \((\mathbf{X}_{(1)}, \mathbf{x})\) satisfying the deviation bound \eqref{eq:DB-2}. Let $J$ denote the support of $\boldsymbol{\theta}^*_{(1)}$, with its complement denoted by $J^c$. Then, for any 
\(
\lambda_{n}\geq4 \mathbbm{Q}'(\boldsymbol{\theta}^{*},\boldsymbol{\Sigma}_{\epsilon})\sqrt{{\log(2p)}/{n}},
\)
the solution to \emph{reduced} model in \prettyref{eq:Lasso-full&reducedmoedel} satisfies the inequalities (\ref{eq:thm-6-1}), (\ref{eq:thm-6-2}) and (\ref{eq:thm-6-3}).
\begin{floateq}[b]{70}{0}{1}
{
\left\Vert \boldsymbol{\widehat{\widetilde{\theta}}}_{(1)}-\widetilde{\boldsymbol{\theta}}_{(1)}^{*}\right\Vert_2&\leq\frac{3}{2}\frac{\lambda_{n}\sqrt{k}}{\alpha/m}+\sqrt{\frac{2m}{k}}\left\Vert \widetilde{\boldsymbol{\theta}}^*_{(1)J^{c}}\right\Vert _{1}+\sqrt{\frac{4\lambda_{n}}{\alpha/m}\left\Vert \widetilde{\boldsymbol{\theta}}^*_{(1)J^{c}}\right\Vert _{1}}, \label{eq:thm-6-1}\\
\left \Vert\boldsymbol{\widehat{\widetilde{\theta}}}_{(1)}-\widetilde{\boldsymbol{\theta}}_{(1)}^{*}\right\Vert_{1}&\leq6\frac{\lambda_{n}k}{\alpha/m}+4(\sqrt{2m}+1)\left\Vert \widetilde{\boldsymbol{\theta}}^*_{(1)J^{c}}\right\Vert_1 +8\sqrt{\frac{\lambda_{n}k}{\alpha/m}}\sqrt{\left\Vert \widetilde{\boldsymbol{\theta}}^*_{(1)J^{c}}\right\Vert _{1}},\label{eq:thm-6-2}\\
\frac{1}{n}\left(\boldsymbol{\widehat{\widetilde{\theta}}}_{(1)}-\widetilde{\boldsymbol{\theta}}_{(1)}^{*}\right)^\top \mathbf{X}_{(1)}^{\top}\mathbf{X}_{(1)}\left(\boldsymbol{\widehat{\widetilde{\theta}}}_{(1)}-\widetilde{\boldsymbol{\theta}}_{(1)}^{*}\right) &\leq 9\frac{\lambda_{n}^{2}k}{\alpha/m} {+}\left(6\left(\sqrt{2m}+1\right)+2\right)\lambda_{n}\left\Vert \widetilde{\boldsymbol{\theta}}^*_{(1)J^{c}}\right\Vert_1
+12\sqrt{\frac{\lambda_{n}^{3}k}{\alpha/m}}\sqrt{\left\Vert \widetilde{\boldsymbol{\theta}}^*_{(1)J^{c}}\right\Vert_1 \label{eq:thm-6-3}}.}
\end{floateq}
\setcounter{equation}{\numexpr\value{equation}+3}
\end{prop}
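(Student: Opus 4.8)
The plan is to run the standard primal LASSO argument — basic inequality, cone condition, restricted eigenvalue — but adapted to the \emph{weakly sparse} surrogate target $\widetilde{\boldsymbol{\theta}}^*_{(1)}$ following the treatment of \cite{negahban2012}, since the orthogonal-projection surrogate is in general dense. Write $\boldsymbol{v} := \widehat{\widetilde{\boldsymbol{\theta}}}_{(1)} - \widetilde{\boldsymbol{\theta}}^*_{(1)}$ for the error and $\mathbf{w} := \mathbf{x} - \mathbf{X}_{(1)}\widetilde{\boldsymbol{\theta}}^*_{(1)}$ for the residual at the surrogate. Because $\widehat{\widetilde{\boldsymbol{\theta}}}_{(1)}$ is optimal for the \emph{reduced} program in \prettyref{eq:Lasso-full&reducedmoedel}, comparing its objective value against that of $\widetilde{\boldsymbol{\theta}}^*_{(1)}$ and expanding the quadratic yields the basic inequality
\[
\tfrac{1}{n}\boldsymbol{v}^{\top}\mathbf{X}_{(1)}^{\top}\mathbf{X}_{(1)}\boldsymbol{v} \;\leq\; \tfrac{2}{n}\boldsymbol{v}^{\top}\mathbf{X}_{(1)}^{\top}\mathbf{w} \;+\; \lambda_{n}\big(\|\widetilde{\boldsymbol{\theta}}^*_{(1)}\|_{1} - \|\widehat{\widetilde{\boldsymbol{\theta}}}_{(1)}\|_{1}\big).
\]
By H\"older's inequality, the deviation bound \eqref{eq:DB-2}, and the choice $\lambda_{n}\geq 4\mathbbm{Q}'\sqrt{\log 2p/n}$, the cross term is controlled as $\tfrac{2}{n}|\boldsymbol{v}^{\top}\mathbf{X}_{(1)}^{\top}\mathbf{w}| \leq 2\|\tfrac{1}{n}\mathbf{X}_{(1)}^{\top}\mathbf{w}\|_{\infty}\|\boldsymbol{v}\|_{1} \leq \tfrac{\lambda_{n}}{2}\|\boldsymbol{v}\|_{1}$, exactly paralleling the full-model proof of \prettyref{prop:prediction error}.

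The crux of the argument is the $\ell_{1}$ term, and I expect its careful bookkeeping to be the principal obstacle. First I would decompose over $J$, the support of the $k$-sparse vector $\boldsymbol{\theta}^*_{(1)}$ (so $|J|\leq k$), and apply the triangle inequality on $J$ and the reverse triangle inequality on $J^{c}$ to get $\|\widetilde{\boldsymbol{\theta}}^*_{(1)}\|_{1} - \|\widehat{\widetilde{\boldsymbol{\theta}}}_{(1)}\|_{1} \leq \|\boldsymbol{v}_{J}\|_{1} - \|\boldsymbol{v}_{J^{c}}\|_{1} + 2\|\widetilde{\boldsymbol{\theta}}^*_{(1)J^{c}}\|_{1}$. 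Since $\widetilde{\boldsymbol{\theta}}^*_{(1)}$ is \emph{not} $k$-sparse, the usual cone relation picks up the compressibility tail $\|\widetilde{\boldsymbol{\theta}}^*_{(1)J^{c}}\|_{1}$: substituting into the basic inequality and discarding the nonnegative prediction error gives $\|\boldsymbol{v}_{J^{c}}\|_{1} \leq 3\|\boldsymbol{v}_{J}\|_{1} + 4\|\widetilde{\boldsymbol{\theta}}^*_{(1)J^{c}}\|_{1}$, whence $\|\boldsymbol{v}\|_{1} \leq 4\sqrt{k}\,\|\boldsymbol{v}\|_{2} + 4\|\widetilde{\boldsymbol{\theta}}^*_{(1)J^{c}}\|_{1}$ after using $\|\boldsymbol{v}_{J}\|_{1}\leq\sqrt{k}\,\|\boldsymbol{v}\|_{2}$. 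It is precisely this tail term, propagated through the remaining steps, that produces the extra $\|\widetilde{\boldsymbol{\theta}}^*_{(1)J^{c}}\|_{1}$ and $\sqrt{\|\widetilde{\boldsymbol{\theta}}^*_{(1)J^{c}}\|_{1}}$ contributions in \eqref{eq:thm-6-1}--\eqref{eq:thm-6-3} that are absent from the full-model bounds.

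Finally I would invoke \prettyref{cond:RE}, noting it is inherited by the block $\mathbf{X}_{(1)}^{\top}\mathbf{X}_{(1)}/n$ upon zero-padding any $\boldsymbol{\phi}\in\mathbb{R}^{p}$ to $[\boldsymbol{\phi};\boldsymbol{0}]\in\mathbb{R}^{2p}$ (which preserves both norms), so that $\tfrac{1}{n}\boldsymbol{v}^{\top}\mathbf{X}_{(1)}^{\top}\mathbf{X}_{(1)}\boldsymbol{v} \geq \alpha\|\boldsymbol{v}\|_{2}^{2} - \tau\|\boldsymbol{v}\|_{1}^{2}$. Squaring the cone bound and using $(a+b)^2\leq 2a^2+2b^2$ gives $\|\boldsymbol{v}\|_{1}^{2}\leq 32k\|\boldsymbol{v}\|_{2}^{2} + 32\|\widetilde{\boldsymbol{\theta}}^*_{(1)J^{c}}\|_{1}^{2}$; with the calibration $32k\tau/\alpha=(m-1)/m$ the quadratic term collapses to an effective curvature $\alpha/m$, yielding $\tfrac{\alpha}{m}\|\boldsymbol{v}\|_{2}^{2} \leq \tfrac{1}{n}\boldsymbol{v}^{\top}\mathbf{X}_{(1)}^{\top}\mathbf{X}_{(1)}\boldsymbol{v} + \tfrac{m-1}{m}\tfrac{\alpha}{k}\|\widetilde{\boldsymbol{\theta}}^*_{(1)J^{c}}\|_{1}^{2}$. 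Pairing this with the upper bound $\tfrac{1}{n}\boldsymbol{v}^{\top}\mathbf{X}_{(1)}^{\top}\mathbf{X}_{(1)}\boldsymbol{v} \leq \tfrac{3\lambda_{n}}{2}\sqrt{k}\,\|\boldsymbol{v}\|_{2} + 2\lambda_{n}\|\widetilde{\boldsymbol{\theta}}^*_{(1)J^{c}}\|_{1}$ (read off from the basic inequality and the cone bound) produces a scalar quadratic inequality $a t^{2}\leq bt + c$ in $t=\|\boldsymbol{v}\|_{2}$; solving it and splitting the constant term via $\sqrt{x+y}\leq\sqrt{x}+\sqrt{y}$ gives the three-term $\ell_{2}$ bound \eqref{eq:thm-6-1}. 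The $\ell_{1}$ bound \eqref{eq:thm-6-2} then follows by reinserting \eqref{eq:thm-6-1} into $\|\boldsymbol{v}\|_{1}\leq 4\sqrt{k}\,\|\boldsymbol{v}\|_{2}+4\|\widetilde{\boldsymbol{\theta}}^*_{(1)J^{c}}\|_{1}$, and the prediction-error bound \eqref{eq:thm-6-3} by reinserting it into the upper bound on $\tfrac{1}{n}\boldsymbol{v}^{\top}\mathbf{X}_{(1)}^{\top}\mathbf{X}_{(1)}\boldsymbol{v}$; in both, the final arithmetic-mean--geometric-mean step folds the mixed $\sqrt{\lambda_{n}}\,\sqrt{\|\widetilde{\boldsymbol{\theta}}^*_{(1)J^{c}}\|_{1}}$ cross-terms into the stated form, up to the absolute constants.
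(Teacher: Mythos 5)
Your proposal is correct and follows essentially the same route as the paper's own proof: the basic inequality from optimality of the reduced-model LASSO solution, the deviation bound \eqref{eq:DB-2} to control the cross term by $\tfrac{\lambda_n}{2}\|\boldsymbol{v}\|_1$, the cone condition with the compressibility tail $\|\widetilde{\boldsymbol{\theta}}^*_{(1)J^c}\|_1$ in the spirit of \cite{negahban2012}, the RE condition with the calibration $32k\tau/\alpha=(m-1)/m$ yielding effective curvature $\alpha/m$, and finally the scalar quadratic inequality in $\|\boldsymbol{v}\|_2$ (which the paper resolves via \prettyref{lem:useful-lemma}) feeding back into the $\ell_1$ and prediction-error bounds. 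Your explicit remark that the RE condition is inherited by the block $\mathbf{X}_{(1)}^{\top}\mathbf{X}_{(1)}/n$ through zero-padding is a point the paper uses silently, and is a welcome clarification rather than a departure.
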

\begin{proof}
Recall that $\widetilde{\boldsymbol{\theta}}^*_{(1)}:=\boldsymbol{\theta}_{(1)}^{*}+\mathbf{C}_{11}^{-1}\mathbf{C}_{21}{\boldsymbol{\theta}^*_{(2)}}$.
Let us define
\begin{align*}
F(\boldsymbol{\theta}_{(1)}):=\frac{1}{n}\left\Vert \mathbf{x}-\mathbf{X}_{(1)}\boldsymbol{\theta}_{(1)}\vphantom{-\mathbf{X}_{(2)}\boldsymbol{\theta}^{0}}\right\Vert_2^{2}+\lambda_{n}\Vert\boldsymbol{\theta}_{(1)}\Vert_{1}.
\end{align*}
and consider the quantity, $\Delta F\Big(\widehat{\widetilde{\boldsymbol{\theta}}}_{(1)}\Big):= F\Big(\boldsymbol{\widehat{\widetilde{\theta}}}_{(1)}\Big)-F\Big(\widetilde{\boldsymbol{\theta}}_{(1)}^*\Big)$.
Also, let $\mathbf{v}:=\boldsymbol{\theta}^*_{(1)}-\boldsymbol{\widehat{\widetilde{\theta}}}_{(1)}$ and $\mathbf{\widetilde{\mathbf{v}}}:=\mathbf{v}+\mathbf{C}_{11}^{-1}\mathbf{C}_{12}{\boldsymbol{\theta}^*_{(2)}}$. Then, $\Delta F\left(\widehat{\widetilde{\boldsymbol{\theta}}}_{(1)}\right)$
can be simplified as: 
\begin{align*} \Delta F\Big(\widehat{\widetilde{\boldsymbol{\theta}}}_{(1)}\Big) &= F\Big(\boldsymbol{\widehat{\widetilde{\theta}}}_{(1)}\Big)-F\Big(\widetilde{\boldsymbol{\theta}}_{(1)}^*\Big)\\
 &=\frac{1}{n}\widetilde{\mathbf{v}}^{\top}\mathbf{X}_{(1)}^{\top}\mathbf{X}_{(1)}\widetilde{\mathbf{v}}+
 \frac{2}{n}\mathbf{X}_{(1)}^{\top}\left(\mathbf{\!x}-\!\mathbf{X}_{(1)}\widetilde{\boldsymbol{\theta}}_{(1)}^{*}\vphantom{-\mathbf{X}_{(2)}\boldsymbol{\theta}^{0}}\right)\\
 & \quad + \lambda_{n}\left(\Big\Vert \widehat{\widetilde{\boldsymbol{\theta}}}_{(1)}\Big\Vert _{1}-\Big\Vert \widetilde{\boldsymbol{\theta}}_{(1)}^{*}\Big\Vert_{1}\right).
\end{align*}
Using the fact that
\begin{align*}
\left\Vert \frac{2}{n}\mathbf{X}_{(1)}^{\top}\left(\mathbf{\!x}-\!\mathbf{X}_{(1)}\widetilde{\boldsymbol{\theta}}_{(1)}^{*}\vphantom{-\mathbf{X}_{(2)}\boldsymbol{\theta}^{0}}\right)\right\Vert _{\infty} {\leq}  \frac{\lambda_{n}}{2},
\end{align*}
we get:
\begin{align*}
\Delta F\Big(\widehat{\widetilde{\boldsymbol{\theta}}}_{(1)}\Big)\geq& \frac{1}{n}\widetilde{\mathbf{v}}^{\top}\mathbf{X}_{(1)}^{\top}\mathbf{X}_{(1)}\widetilde{\mathbf{v}}-\frac{\lambda_{n}}{2}(\Vert\widetilde{\mathbf{v}}_{J}\Vert_{1}+\Vert\widetilde{\mathbf{v}}_{J^{c}}\Vert_{1})\\
& +\lambda_{n}\left(\Vert\widetilde{\mathbf{v}}_{J^{c}}\Vert_{1}-\Vert\widetilde{\mathbf{v}}_{J}\Vert_{1}-2\Big\Vert\widetilde{\boldsymbol{\theta}}^*_{(1)J^{c}}\Big\Vert_{1}\right),
\end{align*}
where we split the error $\widetilde{\mathbf{v}}$ into components within $J$ and components within $J^c$.
Now, since $\widetilde{\mathbf{v}}^{\top}\mathbf{X}_{(1)}^{\top}\mathbf{X}_{(1)}\widetilde{\mathbf{v}}$ is non-negative, we arrive at: 
\begin{align}
\notag 0 & \geq-\frac{\lambda_{n}}{2}\left(\left\Vert \widetilde{\mathbf{v}}_{J}\right\Vert _{1}+\left\Vert \widetilde{\mathbf{v}}_{J^{c}}\right\Vert _{1}\right)\\
\notag \quad & \quad +\lambda_{n}\left(\left\Vert \widetilde{\mathbf{v}}_{J^{c}}\right\Vert _{1}-\left\Vert \widetilde{\mathbf{v}}_{J}\right\Vert _{1}-2\left\Vert \widetilde{\boldsymbol{\theta}}^*_{(1)J^{c}}\right\Vert _{1}\right) \\
 & =-\frac{\lambda_{n}}{2}\left(3\Vert\widetilde{\mathbf{v}}_{J}\Vert_{1}-\Vert\widetilde{\mathbf{v}}_{J^{c}}\Vert_{1}+4\Big\Vert\widetilde{\boldsymbol{\theta}}^*_{(1)J^{c}}\Big\Vert_{1}\right).
 \label{eq:compress}
\end{align}
The rest of the treatment follows the derivation of weakly sparse or compressible models (See, for example, the derivation of the main theorem in \cite{negahban2012}). Using the inequality \prettyref{eq:compress}, we can write: 
\begin{align}
\notag \Vert\widetilde{\mathbf{v}}\Vert_{1}&=\Vert\widetilde{\mathbf{v}}_{J}\Vert_{1}+\Vert\widetilde{\mathbf{v}}_{J^{c}}\Vert_{1} \leq 4\Vert\widetilde{\mathbf{v}}_{J}\Vert_{1}+4\Big\Vert\widetilde{\boldsymbol{\theta}}^*_{(1)J^{c}}\Big\Vert_{1}\\
& \leq 4\sqrt{k}\Vert\widetilde{\mathbf{v}}_{J}\Vert_2+4\Big\Vert\widetilde{\boldsymbol{\theta}}^*_{(1)J^{c}}\Big\Vert_{1},\label{eq:vase-condition}
\end{align}
where the last inequality follows from the fact $\left|J\right|\leq k$. This inequality together with the RE condition and $\tau$ satisfying the relation in \prettyref{prop:prediction error}, allows to write:
\begin{align}
\notag \frac{1}{n}\widetilde{\mathbf{v}}^{\top}\mathbf{X}_{(1)}^{\top}\mathbf{X}_{(1)}\widetilde{\mathbf{v}} 
&\geq\alpha\Vert\widetilde{\mathbf{v}}\Vert^{2}_2-\alpha\frac{m\!-\!1}{m}\Vert\widetilde{\mathbf{v}}\Vert^{2}_2-\frac{\alpha}{m}\frac{m\!-\!1}{k}\Big\Vert\widetilde{\boldsymbol{\theta}}^*_{(1)J^{c}}\Big\Vert_{1}^{2}\\
&\geq\frac{\alpha}{m}\Vert\widetilde{\mathbf{v}}\Vert^{2}_2-\frac{\alpha}{m}\frac{m-1}{k}\Big\Vert\widetilde{\boldsymbol{\theta}}^*_{(1)J^{c}}\Big\Vert_{1}^{2}, \label{eq:restricted-RE-condition}
\end{align}
using the inequality $\sqrt{2(a^{2}+b^{2})}\geq(a+b)$.

Combining \prettyref{eq:compress} and \prettyref{eq:restricted-RE-condition}, we then finally arrive at:
\begin{align}
\nonumber \Delta F\Big(\widehat{\widetilde{\boldsymbol{\theta}}}_{(1)}\Big)
&\geq\frac{\alpha}{m}\Vert\widetilde{\mathbf{v}}\Vert^{2}_2 
-\frac{\alpha}{m}\frac{m-1}{k}\Big\Vert\widetilde{\boldsymbol{\theta}}^*_{(1)J^{c}}\Big\Vert_{1}^{2}\\
\notag & \qquad -\frac{\lambda_{n}}{2}\left(3\Vert\widetilde{\mathbf{v}}_{J}\Vert_{1}-\Vert\widetilde{\mathbf{v}}_{J^{c}}\Vert_{1}+4\Big\Vert\widetilde{\boldsymbol{\theta}}^*_{(1)J^{c}}\Big\Vert_{1}\right)\\
& \geq\frac{\alpha}{m}\Vert\widetilde{\mathbf{v}}\Vert^{2}_2 -\frac{\alpha}{m}\frac{m-1}{k}\Big\Vert\widetilde{\boldsymbol{\theta}}^*_{(1)J^{c}}\Big\Vert_{1}^{2}\\
& \qquad -\frac{\lambda_{n}}{2}\left(3\sqrt{k}\Vert\widetilde{\mathbf{v}}\Vert_2+4\Big\Vert\widetilde{\boldsymbol{\theta}}^*_{(1)J^{c}}\Big\Vert_{1}\right),
 \label{eq:ineq3}
\end{align}
where the last step follows from the inequalities $\Vert\widetilde{\mathbf{v}}_{J}\Vert_{1}\leq\sqrt{k}\Vert\widetilde{\mathbf{v}}_{J}\Vert_2\leq\sqrt{k}\Vert\widetilde{\mathbf{v}}\Vert_2$ and $\Vert\widetilde{\mathbf{v}}_{J^{c}}\Vert_{1}\geq0$. 

An application of \prettyref{lem:useful-lemma} establishes that the right hand side of the inequality \prettyref{eq:ineq3} will be positive if: 
\begin{align}
\notag \Vert\widetilde{\mathbf{v}}\Vert^{2} &\geq\frac{9}{4}\frac{\lambda_{n}^{2}}{\alpha^{2}/m^{2}}k\\
& \quad +\frac{\lambda_{n}}{\alpha/m}\left(\frac{2}{\lambda_{n}}\frac{\alpha}{m}\frac{m-1}{k}\Big\Vert \widetilde{\boldsymbol{\theta}}^*_{(1)J^{c}}\Big\Vert _{1}^{2} +4\Big\Vert \widetilde{\boldsymbol{\theta}}^*_{(1)J^{c}}\Big\Vert _{1}\right).
\end{align}
From the latter inequality, the first claim of the proposition follows using the fact that $\left\Vert \mathbf{a}\right\Vert _{1}\geq\left\Vert \mathbf{a}\right\Vert _{2}$; the second claim follows from the first claim together with \prettyref{eq:vase-condition}; the last claim follows from the fact that:
\begin{equation}
\frac{1}{n}\widetilde{\mathbf{v}}^{\top}\mathbf{X}_{(1)}^{\top}\mathbf{X}_{(1)}\widetilde{\mathbf{v}}\leq\frac{\lambda_{n}}{2}\left(3\Vert\widetilde{\mathbf{v}}_{J}\Vert_{1}-\Vert\widetilde{\mathbf{v}}_{J^{c}}\Vert_{1}+4\Big\Vert\widetilde{\boldsymbol{\theta}}^*_{(1)J^{c}}\Big\Vert_{1}\right),
\end{equation}
which concludes the proof of the proposition.
\end{proof}

\section{Verification of Conditions \ref{cond:RE} and \ref{cond:DB}}\label{sec:verif}
Having established Propositions \ref{prop:full model deviation} and \ref{prop:reduced model deviation}, it only remains to show that \prettyref{cond:RE} and \prettyref{cond:DB} hold with high probability, under both the \emph{full} and \emph{reduced} models. 

The following proposition establishes that the RE condition (\prettyref{cond:RE}) holds with high probability:

\begin{prop}[Verifying RE for $\widehat{\boldsymbol{\Sigma}}=\mathbf{X}^{\top}\mathbf{X}/n)$]
Let \label{prop:RE condition}
\begin{align*}
&\zeta:=54\frac{\Lambda_{\max}(\boldsymbol{\Sigma}_{\epsilon})/\mu_{\min}(\breve{\mathbf{{A}}})}{\Lambda_{\min}(\boldsymbol{\Sigma}_{\epsilon})/\mu_{\max}(\mathbf{A})} ,\quad
\alpha:=\frac{\Lambda_{\min}(\boldsymbol{\Sigma}_{\epsilon})}{2\mu_{\max}(\mathbf{A})},\\
& \qquad \qquad  \tau:=\frac{4\alpha\max\{\zeta^{2},1\}}{c}  \frac{\log(2p)}{n}.
 \end{align*}
Then, for $n\ge C_0 \max\{\zeta^{2},1\}k\log(2p)$, there exist constants $c_{1},c_{2}$ such that
\begin{equation}
\mathbbm{P}\left[\widehat{\boldsymbol{\Sigma}}\sim\text{RE}\left(\alpha,\tau\right)\right]\geq1-c_{1}\exp(-c_{2}n\min\{\zeta^{-2},1\}).\label{eq:RE-cond}
\end{equation}
\end{prop}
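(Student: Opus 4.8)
The plan is to verify the restricted eigenvalue property by comparing the random Gram matrix $\widehat{\boldsymbol{\Sigma}}=\mathbf{X}^{\top}\mathbf{X}/n$ against its expectation $\boldsymbol{\Gamma}:=\mathbbm{E}[\widehat{\boldsymbol{\Sigma}}]$, the lag-covariance matrix of the stacked VAR$(1)$ process $\mathbf{X}_t=\breve{\mathbf{A}}_1\mathbf{X}_{t-1}+\breve{\boldsymbol{\epsilon}}$, and then controlling the random fluctuation $\boldsymbol{\phi}^{\top}(\widehat{\boldsymbol{\Sigma}}-\boldsymbol{\Gamma})\boldsymbol{\phi}$ uniformly. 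First I would handle the population level. Since $\boldsymbol{\Gamma}$ is a principal sub-block of the (block-)Toeplitz operator generated by the spectral density $\mathbf{F}$, its spectrum satisfies $\Lambda_{\min}(\boldsymbol{\Gamma})\geq 2\pi\mathscr{m}(\mathbf{F})$ and $\Lambda_{\max}(\boldsymbol{\Gamma})\leq 2\pi\mathcal{M}(\mathbf{F})$. Combined with the lower bound of \eqref{eq:M-bounds}, this yields $\boldsymbol{\phi}^{\top}\boldsymbol{\Gamma}\boldsymbol{\phi}\geq 2\alpha\Vert\boldsymbol{\phi}\Vert_2^{2}$ for all $\boldsymbol{\phi}$, with $\alpha=\Lambda_{\min}(\boldsymbol{\Sigma}_{\epsilon})/(2\mu_{\max}(\mathbf{A}))$ exactly as defined; the factor of two here is reserved to absorb the $\ell_2$ part of the fluctuation.

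Next I would establish a pointwise concentration bound. For a fixed unit vector $\boldsymbol{\phi}$, the quantity $\boldsymbol{\phi}^{\top}\widehat{\boldsymbol{\Sigma}}\boldsymbol{\phi}=\tfrac{1}{n}\sum_{t}(\boldsymbol{\phi}^{\top}\mathbf{X}_t)^{2}$ is a quadratic form in the jointly Gaussian, temporally dependent sequence $\{\mathbf{X}_t\}$. Invoking the spectral concentration inequality for such forms from \cite{basu2015regularized}, whose variance proxy is the spectral radius $\mathcal{M}(\mathbf{F})$ of the process rather than a scalar variance, gives $\mathbbm{P}[\,|\boldsymbol{\phi}^{\top}(\widehat{\boldsymbol{\Sigma}}-\boldsymbol{\Gamma})\boldsymbol{\phi}|>2\pi\mathcal{M}(\mathbf{F})\,\eta\,]\leq c_1\exp(-c_2 n\min\{\eta^{2},\eta\})$. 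The appearance of $\zeta$ in the statement is now transparent: the fluctuation is naturally scaled by the upper spectral bound $\mathcal{M}(\mathbf{F})$, whereas the target curvature $\alpha$ scales with the lower bound $\mathscr{m}(\mathbf{F})$, so requesting a fluctuation that is a fixed fraction of $\alpha$ forces $\eta\asymp\mathscr{m}(\mathbf{F})/\mathcal{M}(\mathbf{F})\asymp\zeta^{-1}$ (the absolute constant $54$ and the use of $\mu_{\min}(\breve{\mathbf{A}})$ arise from replacing $\mathcal{M},\mathscr{m}$ by the explicit bounds of \eqref{eq:M-bounds} for the stacked process).

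I would then lift the pointwise bound to a uniform one. Over the set of $s$-sparse unit vectors, a covering-net argument inflates the exponent by $\mathcal{O}(s\log(2p))$, which is dominated by $c_2 n\eta^{2}\asymp c_2 n\zeta^{-2}$ as long as $s\asymp n/(\zeta^{2}\log 2p)$; since the relevant LASSO cone is supported on $\mathcal{O}(k)$-sparse directions we also need $s$ of order $k$, and the two requirements are compatible precisely when $n\geq C_0\max\{\zeta^{2},1\}k\log 2p$. Finally, the standard $\ell_1/\ell_2$ interpolation lemma (Loh--Wainwright / Raskutti--Wainwright--Yu, as adapted in \cite{basu2015regularized}) converts the sparse deviation bound into the global bound $|\boldsymbol{\phi}^{\top}(\widehat{\boldsymbol{\Sigma}}-\boldsymbol{\Gamma})\boldsymbol{\phi}|\leq\alpha\Vert\boldsymbol{\phi}\Vert_2^{2}+\tau\Vert\boldsymbol{\phi}\Vert_1^{2}$ for all $\boldsymbol{\phi}$, with tolerance $\tau\asymp(2\pi\mathcal{M}(\mathbf{F})\eta)/s\asymp\alpha\zeta^{2}\log(2p)/n$, reproducing the stated $\tau$. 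Subtracting this from the population bound $\boldsymbol{\phi}^{\top}\boldsymbol{\Gamma}\boldsymbol{\phi}\geq 2\alpha\Vert\boldsymbol{\phi}\Vert_2^{2}$ gives $\boldsymbol{\phi}^{\top}\widehat{\boldsymbol{\Sigma}}\boldsymbol{\phi}\geq\alpha\Vert\boldsymbol{\phi}\Vert_2^{2}-\tau\Vert\boldsymbol{\phi}\Vert_1^{2}$, i.e., $\widehat{\boldsymbol{\Sigma}}\sim\text{RE}(\alpha,\tau)$, and the covering union bound delivers the failure probability $c_1\exp(-c_2 n\min\{\zeta^{-2},1\})$.

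The hard part will be the dependent-data concentration of the quadratic form together with its uniform control over sparse supports. For an i.i.d.\ Gaussian design this is a routine Hanson--Wright-plus-covering computation, but here the temporal correlation inflates the effective variance, and the essential content is that this inflation is captured cleanly by the single spectral quantity $\mathcal{M}(\mathbf{F})$. Carrying the stacked-process spectral bound (hence the role of $\mu_{\min}(\breve{\mathbf{A}})$) through the covering argument, and checking that the exponent $n\min\{\zeta^{-2},1\}$ survives the union over $s$-sparse supports, is where essentially all the technical care resides; the remaining steps---the Toeplitz spectral sandwich and the deterministic interpolation lemma---are standard and can be imported directly from \cite{basu2015regularized}.
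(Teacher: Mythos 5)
Your proposal is correct and follows essentially the same route as the paper, whose proof of this proposition is simply deferred to \cite[Proposition 4.2]{basu2015regularized}: the population spectral sandwich via $\mathscr{m}(\mathbf{F})$, the dependent-Gaussian quadratic-form concentration scaled by $\mathcal{M}(\mathbf{F})$, the union bound over sparse supports, and the Loh--Wainwright $\ell_1/\ell_2$ interpolation step are exactly the ingredients of that cited argument, and your accounting for the constant $54$ (i.e., $2\times 27$) and the origin of $\zeta$ matches the constants in the statement. In effect, you have reconstructed the proof the paper omits for brevity, so there is nothing methodologically different to compare.
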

\begin{proof}
The proof closely follows that of \cite[proof of Proposition 4.2]{basu2015regularized}, and is thus omitted for brevity.
\end{proof}

\begin{remark}[]
In order for $\tau$ to satisfy \({32k\tau}/{\alpha}={(m-1)}/{m}\) for some $m>1$, we precisely need \(n\geq (128/c)(m/(m-1))\max\{\zeta^{2},1\}k\log(2p)\).
\end{remark}

Finally, the following two propositions establish that the deviation conditions (\prettyref{cond:DB}) hold with high probability.

\begin{prop}[Deviation Condition for the \emph{full} Model]
For $n \ge D_0 \log(2p)$, there exist constants $d_{0},d_{1}$ and $d_2 > 0$
such that \label{prop:deviation condition}
\begin{equation}
\mathbbm P\!\left[\!\left\Vert \frac{1}{n}\mathbf{X}^{\top}\!({\mathbf{x\!}-\!\mathbf{X}\boldsymbol{\theta}^{*}})\right\Vert _{\infty}\!\!\!\!\geq \mathbbm Q(\boldsymbol{\theta}^{*},\Sigma_{\epsilon})\sqrt{\frac{\log(2p)}{n}}\right]\!\leq\!\frac{d_{1}}{(2p)^{d_2}},\label{eq:DB-1-proof}
\end{equation}
where 
\begin{align*}
\mathbbm{Q}(\boldsymbol{\theta}^{*},\boldsymbol{\Sigma}_{\epsilon}):=d_0 \Lambda_{\max}(\boldsymbol{\Sigma}_{\epsilon})\left(1+\frac{1+\mu_{\max}(\mathbf{A})}{\mu_{\min}(\mathbf{A})}\right).
\end{align*}
\end{prop}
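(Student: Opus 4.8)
The plan is to follow the concentration machinery for bilinear forms in stationary Gaussian processes developed in \cite{basu2015regularized}, adapted to the present setting. The starting observation is that, by the definition of the regression model \prettyref{eq:MVAR(p)}, we have $\mathbf{x} - \mathbf{X}\boldsymbol{\theta}^* = \boldsymbol{\epsilon}$, so the quantity in question is simply $\frac{1}{n}\mathbf{X}^\top\boldsymbol{\epsilon}$. Fixing a coordinate $j \in \{1,\dots,2p\}$, the entry $(\frac{1}{n}\mathbf{X}^\top\boldsymbol{\epsilon})_j = \frac{1}{n}\sum_{t=1}^n X_{t,j}\epsilon_t$ is a bilinear form in the jointly Gaussian collection $(\{X_{t,j}\},\{\epsilon_t\})$. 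Because each row of $\mathbf{X}$ consists of process values that strictly precede the corresponding innovation $\epsilon_t$, the martingale-difference structure gives $\mathbbm{E}[X_{t,j}\epsilon_t] = 0$, so the bilinear form is mean-zero; this is what allows us to center at $0$ rather than at a nonzero population quantity.

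First I would reduce each coordinate to quadratic forms via polarization, writing $X_{t,j}\epsilon_t = \frac{1}{4}[(X_{t,j}+\epsilon_t)^2 - (X_{t,j}-\epsilon_t)^2]$, so that $\frac{1}{n}\mathbf{X}_j^\top\boldsymbol{\epsilon}$ becomes a difference of empirical quadratic forms of the stationary Gaussian sequences $X_{t,j}\pm\epsilon_t$. Each such quadratic form concentrates around its mean at a rate governed by the essential supremum of the spectral density of the associated sequence, by the concentration bounds for quadratic forms of stationary Gaussian sequences in \cite{basu2015regularized}. This yields a per-coordinate tail of the form $2\exp(-c\,n\min\{\eta^2/\mathcal{K}^2,\,\eta/\mathcal{K}\})$, where $\mathcal{K}$ is a spectral constant controlling the whole family of sequences obtained as $j$ ranges over the coordinates. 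Since the regressors are components of the single stationary vector process $(x_t,y_t)$, one spectral bound covers all coordinates uniformly.

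The central task is then to bound $\mathcal{K}$ in the stated form. Here I would pass to the joint (regressor, innovation) process and bound the essential supremum of its spectral density. This decomposes into three contributions: the spectral density of the $(x_t,y_t)$ process, which by \prettyref{eq:M-bounds} is at most $\frac{1}{2\pi}\Lambda_{\max}(\boldsymbol{\Sigma}_\epsilon)/\mu_{\min}(\mathbf{A})$; the (flat) spectral density $\frac{1}{2\pi}\boldsymbol{\Sigma}_\epsilon$ of the white innovation, bounded by $\frac{1}{2\pi}\Lambda_{\max}(\boldsymbol{\Sigma}_\epsilon)$; and the cross-spectral density $\frac{1}{2\pi}\mathbf{A}^{-1}(e^{-i\omega})\boldsymbol{\Sigma}_\epsilon$ between the process and its innovation, whose norm is controlled using the representation of $\mathbf{F}$ and the characteristic-polynomial quantities, contributing the factor $\mu_{\max}(\mathbf{A})/\mu_{\min}(\mathbf{A})$. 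Collecting the three terms produces $\mathcal{K}\propto\Lambda_{\max}(\boldsymbol{\Sigma}_\epsilon)\big(1 + (1+\mu_{\max}(\mathbf{A}))/\mu_{\min}(\mathbf{A})\big)$, which is exactly $\mathbbm{Q}(\boldsymbol{\theta}^*,\boldsymbol{\Sigma}_\epsilon)$ up to the absolute constant $d_0$.

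Finally, I would set $\eta = \mathbbm{Q}\sqrt{\log(2p)/n}$. The hypothesis $n \ge D_0\log(2p)$ is exactly what guarantees $\eta \le \mathcal{K}$, so that the sub-Gaussian branch $n\eta^2/\mathcal{K}^2$ of the tail is the active one, giving a per-coordinate probability at most $2\exp(-c'\log(2p)) = 2(2p)^{-c'}$ for a constant $c'$ that grows with $d_0^2$. A union bound over the $2p$ coordinates then yields the claimed bound $d_1/(2p)^{d_2}$ with $d_2 = c'-1 > 0$, provided $d_0$ is chosen large enough. I expect the main obstacle to be the bookkeeping in the third paragraph: carefully bounding the joint spectral density, and in particular the cross-spectral term, so that the constant emerges in precisely the stated form, while keeping the absolute constants $c$, $d_0$, $d_1$, $d_2$, and $D_0$ mutually consistent.
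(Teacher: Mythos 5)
Your proposal is correct and follows essentially the same route as the paper: the paper's proof is simply a citation to \cite[Proposition 4.3]{basu2015regularized}, and your argument is a faithful unpacking of exactly that proof — identifying $\mathbf{x}-\mathbf{X}\boldsymbol{\theta}^*$ with $\boldsymbol{\epsilon}$, reducing the mean-zero bilinear forms to quadratic forms of jointly stationary Gaussian sequences, invoking the spectral-density-controlled concentration bounds (with the three contributions from $\mathcal{M}(\mathbf{F})$, the flat innovation spectrum, and the cross-spectrum yielding the stated form of $\mathbbm{Q}$), and finishing with a union bound over the $2p$ coordinates under $n \ge D_0\log(2p)$.
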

\begin{proof}
The proof follows that of \cite[Proposition 4.3]{basu2015regularized}, and is thus omitted for brevity.
\end{proof}

\begin{prop}[Deviation Condition for the \emph{Reduced} Model]
For $n \ge D_0' \log(2p)$, there exist constants $d'_{0},d'_{1}$, and $d'_2 > 0$
such that \label{prop:deviation condition-2}
\begin{alignat}{1}
\mathbbm{P}\!\left[\!\left\Vert \frac{1}{n}\mathbf{X}_{(1)}^{\top}\! \!\left(\mathbf{x}\!-\!\mathbf{X}_{(1)}\widetilde{\boldsymbol{\theta}}^*_{(1)}\!\right)\!\right\Vert _{\infty}\!\!\!\!\!\geq \! \mathbbm Q'(\boldsymbol{\theta}^{*},\boldsymbol{\Sigma}_{\epsilon})\!\sqrt{\frac{\log(2p)}{n}}\right] \!\!\leq \!\frac{d'_{1}}{{(2p)}^{d_2'}},\label{eq:DB-2-proof}
\end{alignat}
where \({\mathbbm Q'(\boldsymbol{\theta}^{*},\boldsymbol{\Sigma}_{\epsilon})}\) is defined as in (\ref{eq:Q-dash-def}).
\begin{floateq}[b]{83}{0}{1}
{
&\hspace*{7em}\mathbbm{Q}'(\boldsymbol{\theta}^{*},\Sigma_{\epsilon}):= d'_0\Lambda_{\max}(\boldsymbol{\Sigma}_{\epsilon})\left(\vphantom{\frac{3\left\Vert\left[-\mathbf{C}_{11}^{-1}\mathbf{C}_{12}{\boldsymbol{\theta}^*_{(2)}};{\boldsymbol{\theta}^*_{(2)}}\right]\right\Vert_2}{\mu_{\min}(\breve{\mathbf{A}})}} 1+\frac{1+\mu_{\max}(\mathbf{A})}{\mu_{\min}(\mathbf{A})}\right.\left. +\frac{3\left\Vert\left[-\mathbf{C}_{11}^{-1}\mathbf{C}_{12}{\boldsymbol{\theta}^*_{(2)}};{\boldsymbol{\theta}^*_{(2)}}\right]\right\Vert_2}{\mu_{\min}(\breve{\mathbf{A}})}\right). \label{eq:Q-dash-def} \\
&\mathbbm P\left[\left\vert \mathbf{e}_{i}^{\top}\frac{1}{n}\mathbf{X}^{\top}\mathbf{X}\left[-\mathbf{C}_{11}^{-1}\mathbf{C}_{12}{\boldsymbol{\theta}^*_{(2)}};{\boldsymbol{\theta}^*_{(2)}}\right]\right\vert \geq3\frac{\Lambda_{\max}(\boldsymbol{\Sigma}_{\epsilon})}{\mu_{\min}(\breve{\mathbf{A}})}\eta \left\Vert \left[-\mathbf{C}_{11}^{-1}\mathbf{C}_{12}{\boldsymbol{\theta}^*_{(2)}};{\boldsymbol{\theta}^*_{(2)}}\right]\right\Vert_2 
\vphantom{\frac{\Lambda_{\max}(\boldsymbol{\Sigma}_{\epsilon})}{\mu_{\min}(\breve{\mathbf{A}})}} \right] 
 \leq 6\exp[-cn\min\{\eta,\eta^{2}\}]. \label{eq:prop9-1}\\
 & \mathbbm{P}\left[ \left\vert \frac{1}{n}\mathbf{e}_{i}^{\top}\left(\mathbf{X}^{\top}\mathbf{X}\left[-\mathbf{C}_{11}^{-1}\mathbf{C}_{12}{\boldsymbol{\theta}^*_{(2)}};{\boldsymbol{\theta}^*_{(2)}}\right]+\mathbf{X}^{\top}\boldsymbol{\epsilon}\right)\right\vert \geq \Lambda_{\max}(\boldsymbol{\Sigma}_{\epsilon})
\left(1+\frac{1+\mu_{\max}(\mathbf{A})}{\mu_{\min}(\mathbf{A})} +\vphantom{+\frac{3\left\Vert\left[-\mathbf{C}_{11}^{-1}\mathbf{C}_{12}{\boldsymbol{\theta}^*_{(2)}};{\boldsymbol{\theta}^*_{(2)}}\right]\right\Vert_2}{\mu_{\min}(\breve{\mathbf{A}})}}\frac{3\left\Vert\left[-\mathbf{C}_{11}^{-1}\mathbf{C}_{12}{\boldsymbol{\theta}^*_{(2)}};{\boldsymbol{\theta}^*_{(2)}}\right]\right\Vert_2}{\mu_{\min}(\breve{\mathbf{A}})}\right)\eta
\right] \nonumber \\
&\hspace*{35em}\leq12\exp[-cn\min\{\eta,\eta^{2}\}]. \label{eq:prop9-2}
}
\end{floateq}
\end{prop}
\begin{proof}
In the \emph{reduced} model, the deviation can be expressed as:
\begin{align}
\hphantom{=}&\nonumber \frac{1}{n}\mathbf{X}_{(1)}^{{\top}}\left(\mathbf{x}-\mathbf{X}_{(1)}\widetilde{\boldsymbol{\theta}}^*_{(1)}-\mathbf{X}_{(2)}\boldsymbol{0}\right) \\
\nonumber =&\frac{1}{n}\mathbf{X}_{(1)}^{\top}\left(-\mathbf{X}_{(1)}\mathbf{C}_{11}^{-1}\mathbf{C}_{12}{\boldsymbol{\theta}^*_{(2)}}+\mathbf{X}_{(2)}{\boldsymbol{\theta}^*_{(2)}}+\boldsymbol{\epsilon}\right)\\
\nonumber =&-\frac{1}{n}\mathbf{X}_{(1)}^{\top}\mathbf{X}_{(1)}\mathbf{C}_{11}^{-1}\mathbf{C}_{12}{\boldsymbol{\theta}^*_{(2)}}+\frac{1}{n}\mathbf{X}_{(1)}^{\top}\mathbf{X}_{(2)}{\boldsymbol{\theta}^*_{(2)}}+\frac{1}{n}\mathbf{X}_{(1)}^{\top}\boldsymbol{\epsilon}.
\end{align}
The last term can be bounded in a similar fashion as done for the deviation in \prettyref{prop:deviation condition}. The $i^{\sf th}$ component of the first two terms can be expressed as $\mathbf{e}_{i}^{\top}\frac{1}{n}\mathbf{X}^{\top}\mathbf{X}\left[-\mathbf{C}_{11}^{-1}\mathbf{C}_{12}{\boldsymbol{\theta}^*_{(2)}};{\boldsymbol{\theta}^*_{(2)}}\right]$,
for $i=1,2,\cdots,p$ where $\mathbf{e}_{i}$'s are the standard unit bases in $\mathbb{R}^{2p}$. Invoking \cite[Proposition 2.4(a)]{basu2015regularized} and noting that $\mathcal{M}(\mathbf{F})\leq{\Lambda_{\max}(\boldsymbol{\Sigma}_{\epsilon})}\Big/{\mu_{\min}(\breve{\mathbf{A}})}$, we get (\ref{eq:prop9-1}). Using the union bound, we can then arrive at (\ref{eq:prop9-2}). Using the latter inequality, the statement of the proposition follows from the same arguments used in the proof of \cite[Proposition 4.3]{basu2015regularized}. 
\setcounter{equation}{\numexpr\value{equation}+3}
\end{proof}
\section{Concentration Inequalities and Technical Lemmas\label{sec:Concentration-Lemmas}}

\setcounter{thm}{0}
\begin{lem}\label{lem:normal_conc}
Given i.i.d. samples from a normal distribution, i.e., $w_{t}\sim\mathcal{N}(0,\sigma^{2})$,
the following holds:
\begin{equation}
\mathbbm P\left[\left\vert \frac{1}{n}\sum_{t=1}^{n}\frac{w_{t}^{2}}{\sigma^{2}}-1\right\vert \geq t\right]\leq2\exp\left(-\frac{nt^{2}}{8}\right), \, \forall t \in (0, 1).
\end{equation}
\end{lem}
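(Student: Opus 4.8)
The plan is to reduce the statement to a standard Chernoff-type concentration bound for a chi-square variate. First I would standardize the summands by setting $Z_t := w_t / \sigma$, so that $Z_t \sim \mathcal{N}(0,1)$ are i.i.d. and $S_n := \sum_{t=1}^n Z_t^2$ is $\chi^2_n$-distributed; the quantity of interest is then $\mathbbm{P}[\,|S_n/n - 1| \ge t\,]$. I would split this event into an upper tail $\{S_n \ge n(1+t)\}$ and a lower tail $\{S_n \le n(1-t)\}$, bound each separately, and recombine by a union bound, which accounts for the leading factor of $2$.

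For each tail I would apply the exponential Markov inequality using the exact moment generating function of a single squared standard normal, namely $\mathbbm{E}[\exp(\lambda Z_t^2)] = (1-2\lambda)^{-1/2}$ for $\lambda < 1/2$, so that by independence $\mathbbm{E}[\exp(\lambda S_n)] = (1-2\lambda)^{-n/2}$. Optimizing the free parameter $\lambda$ in each case produces the two Cram\'er rate functions: $\mathbbm{P}[S_n \ge n(1+t)] \le \exp(-\tfrac{n}{2}(t - \log(1+t)))$ for the upper tail, and $\mathbbm{P}[S_n \le n(1-t)] \le \exp(-\tfrac{n}{2}(-t - \log(1-t)))$ for the lower tail (the lower-tail event is empty when $t \ge 1$ since $S_n \ge 0$, so only $t < 1$ matters there).

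The only substantive step that remains is to lower-bound the two quantities $t - \log(1+t)$ and $-t - \log(1-t)$ by $t^2/4$, which converts the exponents into the claimed $n t^2/8$. For the lower tail this is immediate from the power series $-t - \log(1-t) = \sum_{j \ge 2} t^j/j \ge t^2/2 \ge t^2/4$, valid for all $t \in [0,1)$. The upper tail is the delicate case: here one needs $t - \log(1+t) \ge t^2/4$, which I would verify by showing that $g(t) := t - \log(1+t) - t^2/4$ satisfies $g(0) = 0$ and $g'(t) = t(1-t)/(2(1+t)) \ge 0$ on $[0,1]$, hence $g \ge 0$ throughout the range $t \in (0,1]$ that is operative in every invocation of this lemma elsewhere in the paper. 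I expect this upper-tail inequality to be \emph{the main obstacle}, since $g$ eventually turns negative for $t$ somewhat larger than $1$; to obtain a bound valid for all $t > 0$ one would instead carry the sub-exponential (Bernstein) form $2\exp(-\tfrac{n}{2}\min\{t^2/4, t/4\})$, which coincides with $2\exp(-n t^2/8)$ in the small-$t$ regime of interest. Combining the two tail estimates through the union bound then yields $\mathbbm{P}[\,|S_n/n - 1| \ge t\,] \le 2\exp(-n t^2/8)$, as claimed.
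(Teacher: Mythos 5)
Your proof is correct and, at its core, rests on the same probabilistic fact as the paper's: after standardizing to $z_t = w_t/\sigma$, the sum $\sum_{t=1}^n z_t^2$ is a $\chi^2_n$ variate whose tails are controlled by a Chernoff-type bound. The difference is in execution. The paper's proof is by citation: it observes that $z_t^2$ is sub-exponential with parameters $(2,4)$, hence the sum is sub-exponential with parameters $(2\sqrt{n},4)$, and invokes the standard sub-exponential tail bound, which yields $2\exp\bigl(-\tfrac{n}{8}\min\{t^2,t\}\bigr)$ and hence the stated bound for $t\le 1$. You instead unpack that black box: you work with the exact moment generating function $(1-2\lambda)^{-1/2}$, derive the Cram\'er rates $\tfrac12\left(t-\log(1+t)\right)$ and $\tfrac12\left(-t-\log(1-t)\right)$, and lower-bound both by $t^2/4$ on $[0,1]$; your computations (the power-series bound for the lower tail and $g'(t)=t(1-t)/\bigl(2(1+t)\bigr)$ for the upper tail) are right. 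What your route buys is an explicit accounting of the range of validity: you correctly flag that $t-\log(1+t)\ge t^2/4$ fails for $t$ moderately above $1$, and indeed the lemma as literally stated (for all $t>0$) is false, since the true upper tail of $\chi^2_n/n$ decays only at the rate $\exp\bigl(-\tfrac{n}{2}\left(t-\log(1+t)\right)\bigr)$, which for, say, $t=2$ and large $n$ exceeds $2\exp(-nt^2/8)$. The paper's proof carries the same implicit restriction (its sub-exponential bound degrades to $2\exp(-nt/8)$ for $t>1$) but never mentions it. Since every invocation of the lemma in the paper has $t\le 1$ (e.g., $\Delta_N=(\boldsymbol{\Sigma}_\epsilon)_{1,1}/4$ in the main theorem, and $t=1/s<1$ in the corollary), the gap is harmless in context, but your version is the one that records it properly — either restrict the statement to $t\le 1$ or state the bound in the Bernstein form $2\exp\bigl(-\tfrac{n}{8}\min\{t^2,t\}\bigr)$.
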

\begin{proof}
Define $z_{t}={w_{t}}/{\sigma}\sim\mathcal{N}(0,1)$, then $\sum_{t=1}^{n}z_{t}^{2}\sim\chi^{2}(n)$.
Clearly, $z_{t}^{2}$ is sub-exponential with parameters $(2,4)$, so is the
sum $\sum_{t=1}^{n}z_{t}^{2}$ with parameters $(2\sqrt{n},4)$. The claim of the 
lemma then follows from standard sub-exponential tail bounds.
\end{proof}
\begin{lem}[Concentration of the \emph{Full} Model Deviation]\label{lem:fullmodel_error}
Under the \emph{full} model, we have{, for $\Delta_{N} < \left(\boldsymbol{\Sigma}_\epsilon\right)_{1,1}$}: 
\begin{align*}
\mathbbm{P}\left[\Big|\ell\Big(\boldsymbol{\theta}^*_{(1)}, \boldsymbol{\theta}^*_{(2)}\Big) - (\boldsymbol{\Sigma}_\epsilon)_{1,1}\Big|\geq \Delta_N\right]\leq2\exp\left(-\frac{n\Delta_{N}^{2}}{8(\boldsymbol{\Sigma}_\epsilon)_{1,1}}\right).
\end{align*}
\end{lem}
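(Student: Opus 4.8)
The plan is to recognize that when the loss $\ell$ is evaluated at the true coefficient vector $\boldsymbol{\theta}^{*}=[\boldsymbol{\theta}^{*}_{(1)};\boldsymbol{\theta}^{*}_{(2)}]$, the regression term cancels exactly, leaving only the driving process noise of the $x_t$ equation. Concretely, from the model in \prettyref{eq:MVAR(p)} we have $\mathbf{x}-\mathbf{X}\boldsymbol{\theta}^{*}=\boldsymbol{\epsilon}$, so that
\[
\ell\Big(\boldsymbol{\theta}^{*}_{(1)},\boldsymbol{\theta}^{*}_{(2)}\Big)=\frac{1}{n}\left\Vert\mathbf{x}-\mathbf{X}\boldsymbol{\theta}^{*}\right\Vert^{2}=\frac{1}{n}\sum_{t=1}^{n}\epsilon_{t}^{2}.
\]
Under \prettyref{assum:key} the noise pairs $[\epsilon_t,\epsilon'_t]^{\top}$ are i.i.d.\ $\mathcal{N}(\mathbf{0},\boldsymbol{\Sigma}_\epsilon)$; hence the first components $\epsilon_t$ are i.i.d.\ $\mathcal{N}(0,(\boldsymbol{\Sigma}_\epsilon)_{1,1})$, and $\ell(\boldsymbol{\theta}^{*}_{(1)},\boldsymbol{\theta}^{*}_{(2)})$ is precisely the empirical second moment of $n$ i.i.d.\ zero-mean Gaussians with variance $(\boldsymbol{\Sigma}_\epsilon)_{1,1}$.

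With this identification, the claim reduces to a direct application of \prettyref{lem:normal_conc} with $\sigma^{2}=(\boldsymbol{\Sigma}_\epsilon)_{1,1}$ and $w_t=\epsilon_t$. That lemma controls the deviation of $\tfrac{1}{n}\sum_t \epsilon_t^2/(\boldsymbol{\Sigma}_\epsilon)_{1,1}$ from $1$ at level $t$; I would simply rescale the deviation level by choosing $t=\Delta_N/(\boldsymbol{\Sigma}_\epsilon)_{1,1}$, so that the event $\big\{\,|\tfrac{1}{n}\sum_t \epsilon_t^2/(\boldsymbol{\Sigma}_\epsilon)_{1,1}-1|\ge t\,\big\}$ becomes $\big\{\,|\ell(\boldsymbol{\theta}^{*}_{(1)},\boldsymbol{\theta}^{*}_{(2)})-(\boldsymbol{\Sigma}_\epsilon)_{1,1}|\ge \Delta_N\,\big\}$. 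Substituting $t^{2}=\Delta_N^{2}/(\boldsymbol{\Sigma}_\epsilon)_{1,1}^{2}$ into the bound $2\exp(-nt^{2}/8)$ then yields exactly the stated tail probability $2\exp\big(-n\Delta_N^{2}/(8(\boldsymbol{\Sigma}_\epsilon)^{2}_{1,1})\big)$.

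There is essentially no serious obstacle here: the only substantive step is the cancellation that exhibits the residual at the true parameters as the raw i.i.d.\ Gaussian noise, after which the result is immediate from the sub-exponential ($\chi^{2}$) concentration already packaged in \prettyref{lem:normal_conc}. The one modeling point worth verifying is that \prettyref{assum:key} guarantees the $\epsilon_t$ to be i.i.d.\ (not merely stationary) with marginal variance $(\boldsymbol{\Sigma}_\epsilon)_{1,1}$, which follows directly from the i.i.d.\ Gaussian process-noise assumption; the stationarity of the initial condition plays no role in this particular lemma, since $\ell$ evaluated at the true parameters depends only on $\{\epsilon_t\}_{t=1}^{n}$ and is entirely independent of the regressor matrix $\mathbf{X}$.
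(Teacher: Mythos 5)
Your proposal is correct and follows exactly the paper's own argument: identify $\ell\big(\boldsymbol{\theta}^{*}_{(1)},\boldsymbol{\theta}^{*}_{(2)}\big)=\frac{1}{n}\sum_{t=1}^{n}\epsilon_{t}^{2}$ with $\epsilon_t$ i.i.d.\ $\mathcal{N}\big(0,(\boldsymbol{\Sigma}_\epsilon)_{1,1}\big)$, then apply \prettyref{lem:normal_conc} with the rescaling $t=\Delta_N/(\boldsymbol{\Sigma}_\epsilon)_{1,1}$. No gaps; your closing remark that the lemma only needs the i.i.d.\ noise assumption (and not the regressors) is a fair observation, though not needed for the argument.
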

\begin{proof}
Note that $\ell\Big(\boldsymbol{\theta}^*_{(1)}, \boldsymbol{\theta}^*_{(2)}\Big)=\sum_{t=1}^{n}\epsilon_{t}^{2}/n$.
Since $\epsilon_{t}\sim\mathcal{N}(0,(\boldsymbol{\Sigma}_\epsilon)_{1,1})$, using \prettyref{lem:normal_conc} we get:
\begin{align}
\mathbbm{P}\left[\left\vert \frac{1}{n}\sum_{t=1}^{n}\frac{\epsilon_{t}^{2}}{(\boldsymbol{\Sigma}_\epsilon)_{1,1}}-1\right\vert \geq t\right]\leq2\exp\left(-\frac{nt^{2}}{8}\right), \, \forall t \in (0, 1).
\end{align}
By letting $\Delta_N := t (\boldsymbol{\Sigma}_\epsilon)_{1,1}$, the claim of the lemma follows.
\end{proof}
\begin{lem}[Concentration of the \emph{Reduced} Model Deviation]
Suppose that the deviation conditions (\prettyref{cond:DB}) hold. Then, there exist constants $c_{3}$ and $c_{4}> 0$ such that 
\begin{align}
\left\vert\ell\Big(\widetilde{\boldsymbol{\theta}}^*_{(1)},\boldsymbol{0}\Big)-\ell\Big(\boldsymbol{\theta}^{*}_{(1)}, \boldsymbol{\theta}^{*}_{(2)}\Big)-D\right\vert \leq \Delta_{D},
\end{align}
with probability at least \begin{align*}1-c_{3}\exp\left(-c_{4}n\min\left\{\zeta^{-2}\frac{\log(2p)}{n},1\right\}\right),\end{align*} 
\begin{align*}
\notag \Delta_D :=& \, \, \mathbbm Q(\boldsymbol{\theta}^{*},\boldsymbol{\Sigma}_\epsilon)\sqrt{\frac{\log(2p)}{n}} \left\Vert\left[-\mathbf{C}_{11}^{-1}\mathbf{C}_{12}{\boldsymbol{\theta}^*_{(2)}};{\boldsymbol{\theta}^*_{(2)}}\right]\right\Vert_1\\
& \, + \frac{\alpha}{27}\left\Vert\left[-\mathbf{C}_{11}^{-1}\mathbf{C}_{12}{\boldsymbol{\theta}^*_{(2)}};{\boldsymbol{\theta}^*_{(2)}}\right]\right\Vert_2^{2},
\end{align*}
and
\begin{align*}
D:={\boldsymbol{\theta}^*_{(2)}}^{\top}(\mathbf{C}_{22}-\mathbf{C}_{21}\mathbf{C}_{11}^{-1}\mathbf{C}_{12}){\boldsymbol{\theta}^*_{(2)}}.
\end{align*}
with \({\alpha}\) and \(\mathbbm{Q}(\boldsymbol{\theta}^{*},\boldsymbol{\Sigma}_\epsilon)\) defined in \prettyref{prop:RE condition} and \prettyref{prop:deviation condition},  respectively.
\label{lem:reducedmodel-fullmodel_error}
\end{lem}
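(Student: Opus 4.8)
The plan is to reduce the claim to an exact algebraic identity plus two tail bounds. First I would substitute $\mathbf{x}=\mathbf{X}\boldsymbol{\theta}^*+\boldsymbol{\epsilon}$ and the definition $\widetilde{\boldsymbol{\theta}}^*_{(1)}=\boldsymbol{\theta}^*_{(1)}+\mathbf{C}_{11}^{-1}\mathbf{C}_{12}\boldsymbol{\theta}^*_{(2)}$ into $\ell(\widetilde{\boldsymbol{\theta}}^*_{(1)},\boldsymbol{0})$. Writing $\mathbf{u}:=[-\mathbf{C}_{11}^{-1}\mathbf{C}_{12}\boldsymbol{\theta}^*_{(2)};\boldsymbol{\theta}^*_{(2)}]\in\mathbb{R}^{2p}$, the reduced-model residual collapses to $\mathbf{x}-\mathbf{X}_{(1)}\widetilde{\boldsymbol{\theta}}^*_{(1)}=\mathbf{X}\mathbf{u}+\boldsymbol{\epsilon}$, so expanding the squared norm yields the identity $\ell(\widetilde{\boldsymbol{\theta}}^*_{(1)},\boldsymbol{0})-\ell(\boldsymbol{\theta}^*_{(1)},\boldsymbol{\theta}^*_{(2)})=\mathbf{u}^\top\widehat{\boldsymbol{\Sigma}}\mathbf{u}+\tfrac{2}{n}\mathbf{u}^\top\mathbf{X}^\top\boldsymbol{\epsilon}$, where $\widehat{\boldsymbol{\Sigma}}=\mathbf{X}^\top\mathbf{X}/n$. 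A short block computation, using $\mathbf{C}_{21}=\mathbf{C}_{12}^\top$, then verifies that $\mathbf{u}^\top\mathbf{C}\mathbf{u}=\boldsymbol{\theta}^{*\top}_{(2)}(\mathbf{C}_{22}-\mathbf{C}_{21}\mathbf{C}_{11}^{-1}\mathbf{C}_{12})\boldsymbol{\theta}^*_{(2)}=D$. Subtracting $D$ leaves the decomposition into a centered quadratic form $\mathbf{u}^\top(\widehat{\boldsymbol{\Sigma}}-\mathbf{C})\mathbf{u}$ and a cross term $\tfrac{2}{n}\mathbf{u}^\top\mathbf{X}^\top\boldsymbol{\epsilon}$; these two pieces will be matched to the two summands of $\Delta_D$.

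For the cross term, I would note that $\tfrac{1}{n}\mathbf{X}^\top\boldsymbol{\epsilon}=\tfrac{1}{n}\mathbf{X}^\top(\mathbf{x}-\mathbf{X}\boldsymbol{\theta}^*)$ is exactly the object controlled by the full-model deviation condition. Hölder's inequality then gives $|\tfrac{2}{n}\mathbf{u}^\top\mathbf{X}^\top\boldsymbol{\epsilon}|\le 2\|\tfrac{1}{n}\mathbf{X}^\top\boldsymbol{\epsilon}\|_\infty\|\mathbf{u}\|_1\le 2\,\mathbbm Q(\boldsymbol{\theta}^*,\boldsymbol{\Sigma}_\epsilon)\sqrt{\log(2p)/n}\,\|\mathbf{u}\|_1$, which is the first summand of $\Delta_D$ up to the leading constant. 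This step is deterministic once \prettyref{cond:DB} is assumed, as in the hypothesis of the lemma, so it contributes no randomness to the probability bound.

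For the quadratic form the regressors are temporally dependent, so I would invoke the Gaussian quadratic-form concentration for stable processes from \cite[Prop.~2.4]{basu2015regularized} applied to the fixed vector $\mathbf{u}$: for any $\eta>0$, $|\mathbf{u}^\top(\widehat{\boldsymbol{\Sigma}}-\mathbf{C})\mathbf{u}|\le 2\pi\mathcal{M}(\mathbf{F})\|\mathbf{u}\|_2^2\,\eta$ with probability at least $1-c\exp(-c'n\min\{\eta,\eta^2\})$. Choosing $\eta$ so that $2\pi\mathcal{M}(\mathbf{F})\eta=\alpha/27$ produces precisely the second summand $\tfrac{\alpha}{27}\|\mathbf{u}\|_2^2$ of $\Delta_D$; using the envelope $\mathcal{M}(\mathbf{F})\le\tfrac{1}{2\pi}\Lambda_{\max}(\boldsymbol{\Sigma}_\epsilon)/\mu_{\min}(\breve{\mathbf{A}})$ together with $\alpha=\Lambda_{\min}(\boldsymbol{\Sigma}_\epsilon)/2\mu_{\max}(\mathbf{A})$ shows this $\eta$ is bounded below by $\zeta^{-1}$, so $\min\{\eta,\eta^2\}\ge\zeta^{-2}$ and the failure probability is at most $c\exp(-c'n\zeta^{-2})$. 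Since $n\min\{\zeta^{-2}\log(2p)/n,1\}\le\zeta^{-2}\log(2p)\le n\zeta^{-2}$ whenever $n\ge\log(2p)$, this tighter bound is dominated by the stated $c_3\exp(-c_4 n\min\{\zeta^{-2}\log(2p)/n,1\})$, so writing the probability in the looser form makes it combine cleanly with the other events in the proof of \prettyref{thm:main-theorem}. A final triangle inequality adds the two bounds to give $\Delta_D$.

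The only nontrivial probabilistic input, and hence the main obstacle, is the concentration of $\mathbf{u}^\top\widehat{\boldsymbol{\Sigma}}\mathbf{u}$ about its mean for temporally dependent Gaussian regressors: this is where the spectral-density machinery of \cite{basu2015regularized} is needed and where the constants $\mathcal{M}(\mathbf{F})$, $\alpha$, and $\zeta$ must be tracked so that the concentration radius lands exactly on $\tfrac{\alpha}{27}\|\mathbf{u}\|_2^2$. By contrast, the algebraic reduction, the identification $\mathbf{u}^\top\mathbf{C}\mathbf{u}=D$, and the Hölder bound on the cross term are routine.
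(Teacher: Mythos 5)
Your proposal is correct and follows essentially the same route as the paper's proof: the identical expansion $\ell\big(\widetilde{\boldsymbol{\theta}}^*_{(1)},\boldsymbol{0}\big)-\ell\big(\boldsymbol{\theta}^*_{(1)},\boldsymbol{\theta}^*_{(2)}\big)=\boldsymbol{\vartheta}^\top\widehat{\boldsymbol{\Sigma}}\boldsymbol{\vartheta}+\tfrac{2}{n}\boldsymbol{\epsilon}^\top\mathbf{X}\boldsymbol{\vartheta}$ with $\boldsymbol{\vartheta}^\top\mathbf{C}\boldsymbol{\vartheta}=D$, the same H\"older bound on the cross term via \prettyref{cond:DB}, and the same appeal to \cite[Proposition 2.4]{basu2015regularized} for the quadratic form. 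The only cosmetic differences are your choice of a constant $\eta$ where the paper takes $\eta=\zeta^{-1}\sqrt{\log(2p)/n}$ (which yields the slightly stronger bound $\tfrac{\alpha}{27}\sqrt{\log(2p)/n}\,\Vert\boldsymbol{\vartheta}\Vert_2^2$ that the main theorem actually uses), and the factor of $2$ on the cross term that you flag as absorbed into the constant --- an elision the paper's own proof also makes silently.
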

\begin{proof}
Since $\widetilde{\boldsymbol{\theta}}^*_{(1)}=\boldsymbol{\theta}^*_{(1)}+\mathbf{C}_{11}^{-1}\mathbf{C}_{12}\boldsymbol{\theta}^*_{(2)}$, we have:
\begin{align*}
\ell\Big(\widetilde{\boldsymbol{\theta}}^*_{(1)},\boldsymbol{0}\Big)&-\ell\Big(\boldsymbol{\theta}^{*}_{(1)}, \boldsymbol{\theta}^{*}_{(2)}\Big)\\
&=\frac{1}{n}\left\Vert \mathbf{x}-\mathbf{X}_{(1)}\widetilde{\boldsymbol{\theta}}^*_{(1)}-\mathbf{X}_{(2)}\boldsymbol{0}\right\Vert_2^{2}\\
& \quad -\frac{1}{n}\left\Vert \mathbf{x}-\mathbf{X}\boldsymbol{\theta}^{*}\right\Vert_2^{2}\\
 & =\frac{1}{n}\left\Vert-\mathbf{X}_{(1)}\mathbf{C}_{11}^{-1}\mathbf{C}_{12}{\boldsymbol{\theta}^*_{(2)}}+\mathbf{X}_{(2)}{\boldsymbol{\theta}^*_{(2)}}+\boldsymbol{\epsilon}\right\Vert_2^{2}\\
 & \quad -\frac{1}{n}\Vert\boldsymbol{\epsilon}\Vert_2^{2}\\
 & =\frac{2}{n}\boldsymbol{\epsilon}^{\top}\mathbf{X}\boldsymbol{\vartheta}+\boldsymbol{\vartheta}^{\top}\frac{1}{n}\mathbf{X}^{\top}\mathbf{X}\boldsymbol{\vartheta},
\end{align*}
where \( \boldsymbol{\vartheta}:=\begin{bmatrix}(-\mathbf{C}_{11}^{-1}\mathbf{C}_{12}{\boldsymbol{\theta}^*_{(2)}})^{\top}, {\boldsymbol{\theta}^{*\top}_{(2)}}\end{bmatrix}^{\top}\).
Using the \prettyref{cond:DB}, we get: 
\begin{align*}
\begin{alignedat}{1}\left|\frac{1}{n}\boldsymbol{\epsilon}^{\top}\mathbf{X}\boldsymbol{\vartheta}\right|\! & \leq\left\Vert \frac{1}{n}\mathbf{X}^{\top}\epsilon\right\Vert _{\infty}\Vert\boldsymbol{\vartheta}\Vert_{1}\\
 & \leq \mathbbm Q(\boldsymbol{\theta}^{*},\boldsymbol{\Sigma}_\epsilon)\sqrt{\frac{\log(2p)}{n}}\Vert\boldsymbol{\vartheta}\Vert_{1}\\
 & \leq \mathbbm Q(\boldsymbol{\theta}^{*},\boldsymbol{\Sigma}_\epsilon)\sqrt{\frac{\log(2p)}{n}}\left(\left\Vert \mathbf{C}_{11}^{-1}\mathbf{C}_{12}\right\Vert_{1}\!+\!1\right)\Big\Vert{\boldsymbol{\theta}^*_{(2)}}\Big\Vert_{1}.
\end{alignedat}
\end{align*}
Furthermore, from \cite[Proposition 2.4]{basu2015regularized}, for any $\boldsymbol{\vartheta}\in\mathbbm{R}^{2p}$ and $\eta\ge0$, there exists a constant $c>0$ such that:
\begin{alignat*}{1}
&\mathbbm{P}\left[\left\vert \boldsymbol{\vartheta}^{\top}\!\left(\!\frac{1}{n}\mathbf{X}^{\top}\!\mathbf{X}\!-\!\mathbf{C}\right)\!\boldsymbol{\vartheta}\right\vert \geq\eta\Vert\boldsymbol{\vartheta}\Vert^{2}\frac{\Lambda_{\max}(\boldsymbol{\Sigma}_{\epsilon})}{\mu_{\min}(\breve{\mathbf{A}})}\!\right]\\
& \leq 2\exp[-cn\min\{\eta,\eta^{2}\}].
\end{alignat*}
Next, with the choice of 
\(\eta=\zeta^{-1}\sqrt{\log{2p} /n},\)
the latter concentration inequality establishes that:
\begin{align*}
\left\vert \boldsymbol{\vartheta}^{\top}\!\left(\frac{1}{n}\mathbf{X}^{\top}\mathbf{X}\right)\!\boldsymbol{\vartheta}-\boldsymbol{\vartheta}^{\top}\mathbf{C}\boldsymbol{\vartheta}\right\vert \leq
\frac{\alpha}{27}\sqrt{\frac{\log(2p)}{n}}\left\Vert \boldsymbol{\vartheta}\right\Vert_2^2,
\end{align*}
with probability at least $1-2\exp(-cn\min\{\zeta^{-2}\log(2p)/n,1\})$.
This along with the following observation concludes the proof of the lemma:
\(\boldsymbol{\vartheta}^{\top}\mathbf{C}\boldsymbol{\vartheta}={\boldsymbol{\theta}^{*\top}_{(2)}}(\mathbf{C}_{22}-\mathbf{C}_{21}\mathbf{C}_{11}^{-1}\mathbf{C}_{12}){\boldsymbol{\theta}^*_{(2)}}.\)
\end{proof}

\begin{lem}[Fano's Inequality]
\label{lem:ar_fano_ineq}
Let $\mathcal{Z}$ be a class of densities with a subclass $\mathcal{Z}^\star$ of densities $f_{\boldsymbol{\theta}_{(2)}^i}$, parameterized by $\boldsymbol{\theta}_{(2)}^{i}$, for $i \in \{0,\cdots,2^M \}$. Suppose that for any two distinct $\boldsymbol{\theta}_{(2)}^i, \boldsymbol{\theta}_{(2)}^j \in \mathcal{Z}^\star$, 
$\mathcal{D}_{\sf KL} \left(f_{\boldsymbol{\theta}_{(2)}^i}\Big\| f_{\boldsymbol{\theta}_{(2)}^j}\right) \leq \xi$ for some constant $\xi$. Let $\widehat{\boldsymbol{\theta}}$ be an estimate of the parameters. Then
\vspace{-.2cm}
\begin{equation}
\label{eq:fano}
\sup_j \mathbb{P}\left[\widehat{\boldsymbol{\theta}}_{(2)}\neq {\boldsymbol{\theta}_{(2)}^j}|H_j\right] \geq 1 - \frac{\xi+\log2}{M},
\end{equation}
where $H_j$ denotes the hypothesis that $\boldsymbol{\theta}_j$ is the true parameter, and induces the probability measure $\mathbb{P}[\cdot|H_j]$.
\end{lem}
\begin{proof}
See \cite[Page 323]{ibragimov2013statistical}. 
\end{proof}

Finally, the following elementary lemma provides a useful technical tool for simplifying some of the algebraic inequalities:

\begin{lem}
\label{lem:useful-lemma} The quadratic function $f(x)=ax^{2}-bx-c$ with $a,b,c>0$ is positive for all real $x$ satisfying 
\begin{align*}
x^{2}\geq\left(\frac{b}{a}\right)^{2}+2 \frac{c}{a}.
\end{align*}
\end{lem}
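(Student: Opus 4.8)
The plan is to establish the claim by completing the square, after splitting the leading term evenly. First I would verify the algebraic identity
\[
f(x) = \frac{a}{2}\left(x - \frac{b}{a}\right)^2 + \frac{a}{2}\left(x^2 - \frac{b^2}{a^2} - \frac{2c}{a}\right),
\]
which follows by expanding $\frac{a}{2}(x - b/a)^2 = \frac{a}{2}x^2 - bx + \frac{b^2}{2a}$ and collecting the remaining terms of $ax^2 - bx - c$. This rewriting isolates a perfect square, which is always nonnegative, and a second term whose sign is governed \emph{precisely} by the hypothesis $x^2 \geq (b/a)^2 + 2c/a$.

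Next, I would observe that under the hypothesis the parenthesized factor in the second summand is nonnegative, and since $a>0$ that summand is nonnegative; the first summand is a nonnegative multiple of a square, hence also nonnegative. This already gives $f(x) \geq 0$. To upgrade to the strict positivity asserted in the statement, I would note that the perfect-square term vanishes only at $x = b/a$, where $x^2 = b^2/a^2 < b^2/a^2 + 2c/a$ because $a,c>0$; thus the hypothesis is incompatible with $x=b/a$, so the square is strictly positive whenever the hypothesis holds, and therefore $f(x) > 0$.

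The only point requiring care—rather than a genuine obstacle—is the choice of how to apportion the coefficient $a$ between the completed square and the residual term. Completing the square with coefficient $\rho a$ and leaving $(1-\rho)a$ for the remainder yields a whole family of sufficient thresholds of the form $x^2 \geq \frac{b^2}{4\rho(1-\rho)a^2} + \frac{c}{(1-\rho)a}$, and the symmetric choice $\rho = \tfrac{1}{2}$ reproduces exactly the stated bound $(b/a)^2 + 2c/a$. This threshold is not the tightest achievable—the actual roots of $f$ lie strictly closer to the origin—but it is the clean closed form that is convenient for the algebraic simplifications invoked in the proof of \prettyref{prop:prediction error-reducedmodel} and \prettyref{thm:main-theorem}, where the lemma is applied to conclude positivity of a lower bound after a crude but convenient estimate.
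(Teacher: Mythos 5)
Your proof is correct, but it takes a genuinely different route from the paper's. The paper argues via the roots of $f$: it notes that $f$ has a unique positive root $x_{+}$, writes $x_{+}$ out with the quadratic formula, and then upper-bounds $x_{+}^{2}$ by $\left(b/a\right)^{2}+2c/a$ using the bound $\sqrt{1+z}\leq 1+z/2$ (an instance of Jensen's inequality applied to the surd), so the hypothesis forces $x$ to lie beyond the positive root. Your argument avoids the quadratic formula entirely: the identity $f(x)=\frac{a}{2}\left(x-b/a\right)^{2}+\frac{a}{2}\bigl(x^{2}-(b/a)^{2}-2c/a\bigr)$ exhibits $f$ as a perfect square plus a term whose nonnegativity is exactly the hypothesis, and strictness follows since the hypothesis implies $|x|>b/a$ (as $c>0$), so the square term cannot vanish. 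Your route buys three small advantages: it is more elementary (no surd manipulation), it treats negative $x$ on the same footing as positive $x$ (the paper's statement that $f(x)>0$ for all $x>x_{+}$ literally covers only the positive branch, leaving the negative branch implicit), and it makes the strict inequality explicit rather than inheriting it from the strictness of Jensen's bound. What the paper's route buys is an interpretation of the threshold as an explicit upper bound on $x_{+}^{2}$, i.e., a quantitative statement of how conservative the stated bound is relative to the true root; your $\rho$-family remark recovers comparable flexibility from the other direction. Either argument serves the downstream uses in \prettyref{prop:prediction error-reducedmodel} and \prettyref{thm:main-theorem} equally well.
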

\begin{proof}
Clearly, $f(x)$ has only one positive root, denoted here by $x_{+}$, which satisfies ${f(x) > 0}, \text{ }\forall\text{ }x>x_{+}$. Using the following instance of Jensen's inequality, 
\begin{align*}
\sqrt{1+z}\leq\frac{1}{2}z+1 \hspace*{1em}\forall\hspace*{1em} z>0,
\end{align*} 
the positive root $x_+$ can be upper bounded as:
\begin{align*}
x_{+}^{2} & =\left(\frac{b}{2a}+\sqrt{\left(\frac{b}{2a}\right)^{2}+\frac{c}{a}}\right)^{2}\\
 & =2\left(\frac{b}{2a}\right)^{2}+\frac{c}{a}+2\frac{b}{2a}\sqrt{\left(\frac{b}{2a}\right)^{2}+\frac{c}{a}}\\
 & \leq2\left(\frac{b}{2a}\right)^{2}+\frac{c}{a}+2\left(\frac{b}{2a}\right)^{2}\left(\frac{1}{2}\left(\frac{2a}{b}\right)^{2}\frac{c}{a}+1\right)\\
 & =\left(\frac{b}{a}\right)^{2}+2\frac{c}{a} = x^2_0
\end{align*}
Then, $x^{2}\geq x_{0}^{2}$ implies $ax^{2}-bx-c>0$.
\end{proof}

\section*{Acknowledgment}
The authors would like to thank Patrick L. Purdon and Emery N. Brown for providing the data from \cite{lewis2012rapid}.


\end{document}